\theoremstyle{plain}
\newtheorem{theorem}{Theorem}
\newtheorem{lemma}{Lemma}%
\theoremstyle{definition}%
\newtheorem{definition}{Definition}%
\newtheorem{problem}{Problem}
\newtheorem{example}{Example}%
\theoremstyle{remark}%
\newtheorem{remark}{Remark}%
\DeclareMathOperator{\RR}{\mathbb{R}}
\DeclareMathOperator{\ZZ}{\mathbb{Z}}
\DeclareMathOperator{\BC}{\mathcal{B}}
\DeclareMathOperator{\PC}{\mathcal{P}}
\DeclareMathOperator{\JC}{\mathcal{J}}
\DeclareMathOperator{\IC}{\mathcal{I}}
\DeclareMathOperator{\NotBC}{\overline{\BC}}
\DeclareMathOperator{\rank}{rank}
\DeclareMathOperator{\cone}{cone.\!hull}
\DeclareMathOperator{\conv}{conv\!.\!hull}
\DeclareMathOperator{\linh}{span}
\DeclareMathOperator{\inth}{span_{\ZZ}}
\DeclareMathOperator{\inputsize}{input\:size}
\DeclareMathOperator{\poly}{poly}
\DeclareMathOperator{\diag}{diag}
\DeclareMathOperator{\disc}{disc}
\DeclareMathOperator{\herdisc}{herdisc}
\DeclareMathOperator{\lindisc}{lindisc}
\DeclareMathOperator{\herlindisc}{herlindisc}
\DeclareMathOperator{\detlb}{detlb}
\DeclareMathOperator{\BUnit}{\mathbf 1}
\DeclareMathOperator{\BZero}{\mathbf 0}
\newcommand*{\intint}[2][1]{\left\{#1,\, \dots,\, #2\right\}}
\newcommand\restr[2]{{
  \left.\kern-\nulldelimiterspace 
  #1 
  \vphantom{\big|} 
  \right|_{#2} 
  }}
\DeclareMathOperator{\DiagFrob}{F_{\text{\!diag}}}
\DeclareMathOperator{\SlackFrob}{F_{\text{\!slack}}}
\DeclareMathOperator{\DiagFrobCond}{DiagCondition}
\DeclareMathOperator{\SlackFrobCond}{SlackCondition}
\DeclareMathOperator{\ILPSF}{ILP-SF}
\DeclareMathOperator{\ILPCF}{ILP-CF}
\DeclareMathOperator{\SysSF}{Standard-System}
\DeclareMathOperator{\SysCF}{Canonical-System}
\DeclareMathOperator{\ModularILPSF}{Modular-ILP-SF}
\title{Diagonal Frobenius Number via Gomory's Relaxation and Discrepancy}
\author{Dmitry Gribanov \and Dmitry Malyshev \and Panos Pardalos}
\date{\today}
\begin{document}

\maketitle

\begin{abstract}
    For a matrix $A \in \ZZ^{k \times n}$ of rank $k$, the \emph{diagonal Frobenius number $\DiagFrob(A)$} is defined as the minimum $t \in \ZZ_{\geq 1}$, such that, for any $b \in \inth(A)$, the condition
\begin{equation*}
    \exists x \in \RR_{\geq 0}^n,\, x \geq t \cdot \BUnit \colon \quad b = A x
\end{equation*}
implies that
\begin{equation*}
    \exists z \in \ZZ_{\geq 0}^n \colon\quad b = A z.
\end{equation*}

In this work, we show that
\begin{equation*}
    \DiagFrob(A) = \Delta + O(\log k),
\end{equation*}
where $\Delta$ denotes the maximum absolute value of $k \times k$ sub-determinants of $A$. 

From the computational complexity perspective, we show that the integer vector $z$ can be found by a polynomial-time algorithm for some weaker values of $t$ in the described condition. For example, we can choose $t = O( \Delta \cdot \log k)$ or $t = \Delta + O(\sqrt{k} \cdot \log k)$. Additionally, in the assumption that a $2^k$-time preprocessing is allowed or a base $\BC$ with $\abs{\det A_{\BC}} = \Delta$ is given, we can choose $t = \Delta + O(\log k)$.

Finally, we define a more general notion of the \emph{diagonal Frobenius number for slacks $\SlackFrob(A)$}, which is a generalization of $\DiagFrob(A)$ for canonical-form systems, like $A x \leq b$. All the proofs are mainly done with respect to $\SlackFrob(A)$. The proof technique uses some properties of the Gomory's corner polyhedron relaxation and tools from discrepancy theory.
\end{abstract}
%
%
%
%
%
%
%


\section{Introduction}\label{sec:intro}

We consider the following \emph{integer linear feasibility problem in the standard form}:
\begin{problem}\label{prbl:standard_feasibility}
	Let $A \in \ZZ^{k\times n}$, $\rank(A) = k$, $b \in \ZZ^{k}$. 
	Assume that $k \times k$ sub-determinants of $A$ are co-prime, or, equivalently, $\inth(A) = \ZZ^k$, we will clarify this assumption later, see Remark \ref{rm:GCD_assumption}.
	\emph{The integer linear feasibility problem in the standard form of co-dimension $k$} can be formulated as the problem to find an integer feasible solution $z \in \ZZ_{\geq 0}^n$ of the following system:
	\begin{equation}
		\begin{cases}
			A x = b\\
			x \in \RR_{\geq 0}^n.
		\end{cases}\label{eq:Sys-SF}\tag{\(\SysSF\)}
	\end{equation}
\end{problem}

Following to \citet{DiagonalFrobenius,ILPInTotalRegime}, the \emph{diagonal Frobenius number} is defined  in the following way.
\begin{definition}\label{def:diag_frob}
    Corresponding to the system \eqref{eq:Sys-SF}, the \emph{diagonal Frobenius number $\DiagFrob(A)$} is defined as the minimum $t \in \ZZ_{\geq 0}$ such that, for any $b \in \ZZ^k$, the condition 
    \begin{equation}\label{eq:DiagFrobCondition}\tag{\(\DiagFrobCond(t)\)}
            \exists x \in \RR_{\geq 0}^n,\, x \geq t \cdot \BUnit\colon \quad b = A x
    \end{equation}
    implies the existence of an integer feasible solution $z \in \ZZ_{\geq 0}^n$ of \eqref{eq:Sys-SF}, that is
    \begin{equation*}
    	\exists z \in \ZZ_{\geq 0}^n:\quad b = A z,
    \end{equation*}
    or in other words,
    \begin{equation*}
    	b \in \cone_{\ZZ}(A).
    \end{equation*}
\end{definition}

In our work, we study the computational complexity of Problem~\ref{prbl:standard_feasibility} and the value of the diagonal Frobenius number $\DiagFrob(A)$ with respect to $k$, $n$, and the absolute values of sub-determinants of $A$. Values of sub-determinants are controlled, using the following notation.
\begin{definition}
	For a matrix $A \in \ZZ^{k \times n}$ and $j \in \intint{k}$, by $$
	\Delta_j(A) = \max\left\{\abs{\det (A_{\IC \JC})} \colon \IC \subseteq \intint k,\, \JC \subseteq \intint n,\, \abs{\IC} = \abs{\JC} = j\right\},
	$$ we denote the maximum absolute value of determinants of all the $j \times j$ sub-matrices of $A$. 
	By $\Delta_{\gcd}(A,j)$, we denote the greatest common divisor of determinants of all the $j \times j$ sub-matrices of $A$.
	Additionally, let $\Delta(A) = \Delta_{\rank(A)}(A)$ and $\Delta_{\gcd}(A) = \Delta_{\gcd}(A,\rank(A))$. A matrix $A$ with $\Delta(A) \leq \Delta$, for some $\Delta > 0$, is called \emph{$\Delta$-modular}. Note that $\Delta_1(A) = \|A\|_{\max}$. 
\end{definition}

By \citet{DiagonalFrobenius}, we have
\begin{equation}\label{eq:AH2010_FUB}
    \DiagFrob(A) \leq \frac{n-k}{2} \sqrt{n \cdot \det(A A^\top)}.
\end{equation}
A significant improvement of \eqref{eq:AH2010_FUB} was recently provided by \citet{ILPInTotalRegime}:
\begin{equation}\label{eq:TotalRegime_FUB}
    \DiagFrob(A) \leq (n - k) \cdot \left( \max\limits_{1 \leq i \leq n} \norm{A_{* i}}_2 \right)^k.
\end{equation}
Additionally, it was shown by \cite{ILPInTotalRegime} that
\begin{equation*}\label{eq:TotalRegime_FLB}
    \DiagFrob(A) > \frac{1}{20 k} \cdot \left( \max\limits_{1 \leq i \leq n} \norm{A_{* i}}_2 \right)^k.
\end{equation*}

The work by \citet{ForallPseudopoly} provides an upper bound, parameterized by $\Delta_1(A)$, which is independent on $n$:
\begin{equation}\label{eq:Forall_FUB}
    \DiagFrob(A) \leq k \cdot \left(2 k \cdot \Delta_1(A) + 1\right)^k.
\end{equation}

Note that, with respect to the upper bound \eqref{eq:TotalRegime_FUB}, satisfying the condition \eqref{eq:DiagFrobCondition} on the diagonal Frobenius number, implies the existence of a polynomial-time algorithm to find an integer feasible solution of \eqref{eq:Sys-SF}.

Our main contribution is a new bound on the diagonal Frobenius number. It has a number of advantages: 
\begin{itemize}
    \item it depends on a weaker parameter $\Delta(A)$,
    \item it is independent of $n$,
    \item it improves upon all the cited estimates after the application of the Hadamard's inequality.
\end{itemize}
It is stated in the following Theorem. Everywhere in the current Subsection, we use the shorthand notation $\Delta := \Delta(A)$.
\begin{restatable}{theorem}{DiagFrobMainTh}\label{th:DiagFrob}
    Denote
    \begin{gather*}
        t_1 = \Delta + C \cdot \log k,\\
        t_2 = \Delta + C \cdot \sqrt{\log k \cdot \log (n-k)},\\
        t = \min\{t_1,t_2\},
    \end{gather*}
    for a sufficiently large absolute constant $C$, whose exact value is not crucial for our purposes. Then,
    \begin{equation*}
        \DiagFrob(A) \leq t.
    \end{equation*}
    Additionally, assuming that a base $\BC$ with $\abs{\det A_{\BC}} = \Delta$ is known, the condition \eqref{eq:DiagFrobCondition}:
    \begin{equation*}
        \exists x \in \RR_{\geq 0}^n,\, x \geq t \cdot \BUnit\colon \quad b = A x
    \end{equation*}
    implies that there exists an integer feasible solution of the system \eqref{eq:Sys-SF}, which can be found by a polynomial-time algorithm.
\end{restatable}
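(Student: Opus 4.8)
The plan is to combine Gomory's group (corner) relaxation with a discrepancy (vector balancing) step; the paper carries out the argument for the more general quantity $\SlackFrob(A)$ and derives Theorem~\ref{th:DiagFrob} from it, and I describe the core mechanism directly in the $\DiagFrob$ setting. Fix $x^* \in \RR^n_{\ge 0}$ with $x^* \ge t\cdot\BUnit$ and $A x^* = b$, and fix the basis $\BC$ with $\abs{\det A_\BC} = \Delta$ (given). Put $\NC = \intint n \setminus \BC$, $m = n - k$, and pass to $\bar A := A_\BC^{-1} A = [\,I_k \mid M\,]$, where $M := A_\BC^{-1} A_\NC$. By Cramer's rule and the maximality of $\BC$, not only every entry but every minor of $M$ has absolute value at most $1$: a $j\times j$ minor of $M$ equals, up to sign, a suitable $k\times k$ minor of $\bar A$ (pad with identity columns), hence a $k\times k$ minor of $A$ divided by $\det A_\BC$. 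An integer solution $z$ of $A z = b$ is determined by $z_\NC$ via $z_\BC = A_\BC^{-1}(b - A_\NC z_\NC) = x^*_\BC + M(x^*_\NC - z_\NC)$, and $z_\BC$ is integral exactly when $A_\NC z_\NC \equiv b \pmod{A_\BC \ZZ^k}$, i.e. when $z_\NC$ lies in a prescribed coset of the natural surjection $\ZZ^m \twoheadrightarrow G := \ZZ^k / A_\BC \ZZ^k$, $v \mapsto A_\NC v$ (onto because $\inth(A) = \ZZ^k$), where $\abs G = \Delta$. Hence it suffices to produce $z_\NC \in \ZZ^m_{\ge 0}$ with the correct residue in $G$ and $M(z_\NC - x^*_\NC) \le x^*_\BC$ coordinatewise; as $x^* \ge t\cdot\BUnit$, it is enough that $z_\NC \ge 0$ and $\norm{M(z_\NC - x^*_\NC)}_\infty \le t$.

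A natural way to organise this is $z_\NC = \lfloor x^*_\NC \rfloor + \xi_{\mathrm r} + \xi_{\mathrm g}$, where $\xi_{\mathrm r} \in \{0,1\}^m$ is a rounding of the fractional part $\phi := \{x^*_\NC\} \in [0,1)^m$ chosen to make the weighted rounding error $\left\| \sum_i \bigl((\xi_{\mathrm r})_i - \phi_i\bigr)\, m_i \right\|_\infty$ small (a linear-discrepancy problem for the columns $m_i$ of $M$), and $\xi_{\mathrm g} \in \ZZ^m_{\ge 0}$ repairs the remaining residue. Since the images of the columns of $A_\NC$ generate $G$, the Cayley digraph of $G$ on these generators is strongly connected, so any element of $G$ is reached in at most $\Delta - 1$ steps; this yields $\xi_{\mathrm g}$ with $\norm{\xi_{\mathrm g}}_1 \le \Delta - 1$, hence $\norm{M \xi_{\mathrm g}}_\infty \le \Delta - 1$. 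Since $\lfloor x^*_\NC\rfloor, \xi_{\mathrm r}, \xi_{\mathrm g} \ge 0$, the condition $z_\NC \ge 0$ is automatic and $\norm{M(z_\NC - x^*_\NC)}_\infty \le \rho + (\Delta-1)$, where $\rho$ is the discrepancy of the rounding step; so $t = \Delta + \rho$ works, and in the algorithmic case $z = (z_\BC, z_\NC)$ is assembled from the two pieces.

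Everything thus reduces to the discrepancy bound $\rho = O(\log k)$ (and, with weaker control, $\rho = O(\sqrt{\log k \cdot \log m})$), and here the structure of $M$ is essential: a black-box application of Banaszczyk's theorem to a $[-1,1]$-matrix would only give $\rho = O(\sqrt k)$, but $M$ has all its minors bounded by $1$ — a ``fractional total unimodularity'' that descends to every submatrix (for $\Delta = 1$, $M$ is literally a TU matrix, for which the rounding lemma already gives $\rho < 1$). The plan is to leverage this, together with the group structure of the corner relaxation — which is exactly what lets the residue constraint be paid for separately, in $\xi_{\mathrm g}$, rather than inside the rounding — to push the linear discrepancy of $M$ down to polylogarithmic in $k$; the $\sqrt{\log k \cdot \log m}$ bound would follow from a Banaszczyk-type estimate (a convex body in $\RR^k$ of Gaussian measure at least $1/2$ may be taken to be an $\ell_\infty$-ball of radius $O(\sqrt{\log k})$, combined with a $\pm1$-to-$\{0,1\}$ reduction over the $m$ columns), and the sharper $O(\log k)$ bound from exploiting the minor bound more carefully. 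For the algorithmic statement, fixing $\BC$ (given), computing $M$, the target residue and $\xi_{\mathrm g}$ (Smith normal form of $A_\BC$ and a shortest walk in $G$) are polynomial, and the rounding is made constructive by an algorithmic partial-colouring/Banaszczyk routine.

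I expect this discrepancy estimate — getting the overhead down to $O(\log k)$ rather than the $O(\sqrt k)$ that the entrywise bound alone yields — to be the main obstacle: it forces one to use both the bound on all minors of $M$ and the corner-polyhedron (group) structure, and a constructive version of it is what the algorithmic half of the statement needs.
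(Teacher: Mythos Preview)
Your framework is exactly the paper's, carried out directly in the standard form instead of through the reduction to the canonical form (\Cref{lm:ILPSF_to_ILPCF}): the split $z_{\NC} = \lfloor x^*_{\NC}\rfloor + \xi_{\mathrm r} + \xi_{\mathrm g}$ is the two-step decomposition used in \Cref{lm:SlackFrob_via_Discr} (the discrepancy rounding) followed by \Cref{th:canonical_Gomory} (the Gomory repair), and the crucial structural fact you isolate --- that every minor of $M = A_{\BC}^{-1} A_{\NC}$ is at most $1$ in absolute value, i.e.\ $\detlb(M) \le 1$ --- is exactly what the paper uses.

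Where your proposal has a gap is in how this minor bound is converted into a discrepancy bound. The mechanism you sketch (Banaszczyk's theorem with an $\ell_\infty$-ball of Gaussian measure $\ge 1/2$) does not do the job as stated: Banaszczyk's theorem requires the vectors to lie in the unit Euclidean ball, but the columns of $M$ can have $\ell_2$-norm up to $\sqrt{k}$, so a direct application only yields $O(\sqrt{k \log k})$. The bound you need is a known black box rather than something to be ``exploited more carefully'': the Matou\v{s}ek / Jiang--Reis theorem $\herdisc(M) = O\bigl(\detlb(M)\cdot\sqrt{\log k \cdot \log m}\bigr)$, quoted in the paper as \eqref{eq:DiscDetBound}, gives $t_2$ immediately from $\detlb(M)\le 1$, and combining it with the reduction to at most $k$ columns (\Cref{DiscLowk_lm}) gives $\herdisc(M) = O(\log k)$, hence $t_1$. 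So the ``main obstacle'' you flag is already resolved in the discrepancy literature; no new argument is required beyond the observation $\detlb(M)\le 1$, which you have.

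A smaller point on the algorithmic half: your Cayley-walk construction of $\xi_{\mathrm g}$ runs in time proportional to $\abs{G}=\Delta$, which is in general exponential in the input size. The paper sidesteps this with an HNF argument (the proof of \Cref{th:canonical_Gomory}, or equivalently \Cref{th:standard_Gomory} applied to the system shifted by $\lfloor x^*_{\NC}\rfloor + \xi_{\mathrm r}$), which produces the required short correction in polynomial time without enumerating $G$.
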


Since it is an NP-hard problem to find a base $\BC$ of $A$ with $\abs{\det A_{\BC}} = \Delta(A)$, the bound of \Cref{th:DiagFrob} does not imply a polynomial-time algorithm to construct an integer feasible solution of \eqref{eq:Sys-SF}. However, there exist weaker upper bounds for $\DiagFrob(A)$, which admit such polynomial-time algorithms. They are presented in the following Theorem.
\begin{restatable}{theorem}{PolynomialDiagFrobMainTh}\label{th:PolyDiagFrob}
    Denote
    \begin{gather*}
        t_1 = \Delta + C_1 \cdot \begin{cases}
            \sqrt{k},\quad\text{for $k \leq n-k$},\\
            \sqrt{k} \cdot \log\bigl(\frac{2k}{n-k}\bigr),\quad\text{for $k \geq n-k$},
        \end{cases}\\
        t_2 = \Delta + C_2 \cdot \Delta \cdot \log k,\\
        t_3 = \Delta + C_2 \cdot \Delta \cdot \sqrt{\log k \cdot \log (n-k)},\\
        t = \min\{t_1,t_2,t_3\},
    \end{gather*}
    for sufficiently large absolute constants $C_1,C_2$, whose exact values are not crucial for our purposes.
    Then, the condition \eqref{eq:DiagFrobCondition}:
    \begin{equation*}
        \exists x \in \RR_{\geq 0}^n,\, x \geq t \cdot \BUnit\colon \quad b = A x
    \end{equation*}
    implies that there exists an integer feasible solution of the system \eqref{eq:Sys-SF}, which can be found by a polynomial-time algorithm.
\end{restatable}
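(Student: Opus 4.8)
The plan is to derive a feasible point of \eqref{eq:Sys-SF} from a witness of \eqref{eq:DiagFrobCondition} by a \emph{rounding} argument based on Gomory's corner relaxation, and to observe that each step can be carried out by a polynomial-time algorithm; the three values $t_1,t_2,t_3$ come from three different ways of performing the rounding. First I would fix a feasible LP basis: given $x^\ast\in\RR_{\geq 0}^n$ with $x^\ast\geq t\cdot\BUnit$ and $Ax^\ast=b$, solve the LP $\{Ax=b,\ x\geq 0\}$, extract a feasible basis $\BC$, reorder so that $A=[A_\BC\mid A_\NC]$ with $A_\BC$ invertible, and set $M:=A_\BC^{-1}A_\NC$, $d:=\abs{\det A_\BC}$. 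By Cramer's rule every entry of $M$ is a ratio of a $k\times k$ minor of $A$ by $\det A_\BC$, so $\norm{M}_{\max}\leq \Delta/d$. For any $z_\NC\in\ZZ_{\geq 0}^{n-k}$ with $z_\BC:=A_\BC^{-1}(b-A_\NC z_\NC)\in\ZZ^k$ the vector $z=(z_\BC,z_\NC)$ satisfies $Az=b$ and $z\in\ZZ^n$, so the only missing requirement is $z_\BC\geq 0$; since $Ax^\ast=b$ gives $z_\BC=x^\ast_\BC+M(x^\ast_\NC-z_\NC)$ and $x^\ast_\BC\geq t\cdot\BUnit$, it suffices to produce $z_\NC$ in the correct residue class modulo the lattice $\{y\in\ZZ^{n-k}:My\in\ZZ^k\}$ with $\norm{z_\NC-x^\ast_\NC}_\infty$ and $\norm{M(z_\NC-x^\ast_\NC)}_\infty$ both smaller than $t$.

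Because $(-My,y)$ ranges over $\ker_\RR A\cap\ZZ^n$ exactly as $y$ ranges over that sublattice, the task of the previous paragraph is the same as rounding $x^\ast$ to the nearest point of the affine lattice $\{z\in\ZZ^n:Az=b\}$ in the $\ell_\infty$-norm, algorithmically; the resulting upper bound on $\DiagFrob(A)$ is then essentially the corresponding $\ell_\infty$-covering radius. Before rounding I would, in polynomial time, replace $\BC$ by a basis with $\norm{M}_{\max}$ bounded by an absolute constant, using the standard doubling exchange: as long as some entry of $M$ exceeds $2$ in absolute value, swap the corresponding basic and nonbasic columns; each swap at least doubles $\abs{\det A_\BC}\leq\Delta$, so at most $\log_2\Delta=\poly$ exchanges occur.

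This is where discrepancy theory enters. The rounding problem above is a linear-discrepancy problem for the augmented system that simultaneously records the integrality of $z_\BC$ and the equality $Az=b$, restricted to a coset of a sublattice. For $t_1$ I would invoke (algorithmic forms of) Beck--Fiala/Spencer-type bounds: for a system of $k$ rows with entries of constant size they yield an $\ell_\infty$-error of order $\sqrt{k}$, improving to $\sqrt{k}\log\bigl(\tfrac{2k}{n-k}\bigr)$ in the regime $k\geq n-k$; together with the constant-size reduction of $\norm{M}_{\max}$ this gives $t_1=\Delta+C_1\cdot(\text{that term})$. For $t_2$ and $t_3$ I would instead use the Banaszczyk/Steinitz-type proximity bounds, in which the error scales with the largest \emph{entry} of the matrix rather than with its number of rows: applied directly with $\norm{A}_{\max}\leq\Delta_1(A)\leq\Delta$ they give proximity $O(\Delta\log k)$ and $O(\Delta\sqrt{\log k\cdot\log(n-k)})$, i.e.\ $t_2$ and $t_3$, which become the minimum only when $\Delta$ is small enough that the multiplicative loss is dominated by the additive $\sqrt{k}$ of $t_1$. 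Known polynomial-time algorithms realize all of these guarantees (partial-colouring / Lov\'asz--Meka / Bansal-type, and the algorithmic Steinitz walk), so the rounding is constructive.

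Assembling $z=(z_\BC,z_\NC)$ from the rounded $z_\NC$ produces an integer point with $z\geq 0$ and $Az=b$, and every ingredient --- LP solving, basis extraction, the doubling exchange, forming $M$, and the discrepancy round --- runs in polynomial time, which proves the claim. The main obstacle is the rounding step: one must control \emph{two} linear images of the error (the nonbasic coordinates themselves and $M$ applied to them) at once while staying inside a prescribed lattice coset, so the off-the-shelf $\pm1$-colouring statements do not apply verbatim and must be run on the augmented system together with the equality and integrality constraints, with a check that their algorithmic versions survive this augmentation; and the reason the polynomial-time bounds are weaker than the $\Delta+O(\log k)$ of \Cref{th:DiagFrob} is precisely that \Cref{th:DiagFrob} may use a maximum-determinant basis, for which $\norm{M}_{\max}\leq 1$ and the corner relaxation is governed by a group of order exactly $\Delta$, whereas computing such a basis is NP-hard.
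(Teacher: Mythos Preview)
Your overall plan—combine Gomory's corner relaxation with a discrepancy-based rounding, after an exchange procedure to bound $\norm{M}_{\max}$—is exactly the paper's strategy. The paper carries it out by first reducing \eqref{eq:Sys-SF} to the canonical form \eqref{eq:Sys-CF} via \Cref{lm:ILPSF_to_ILPCF} and \Cref{rm:StandardFormReduction}, proving \Cref{th:PolySlackFrob} there (through the key \Cref{lm:SlackFrob_via_Discr}), and reading the result back. The passage to canonical form is precisely what dissolves the obstacle you flag at the end: in \eqref{eq:Sys-CF} integrality is just $z\in\ZZ^n$, so no lattice-coset condition interferes with the rounding.

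The genuine gap in your argument is the attempt to round $x^\ast_\NC$ into a coset of $\{y\in\ZZ^{n-k}:My\in\ZZ^k\}$ \emph{simultaneously} with controlling $\norm{M(\cdot)}_\infty$; the $\lindisc\leq\herdisc$ machinery does not handle coset constraints, and you do not say how to augment it. The paper decouples the two tasks. In your standard-form variables the argument reads: apply \Cref{lm:Rounding0inf} to $x^\ast_\NC\in\RR_{\geq 0}^{n-k}$ to obtain a plain $w\in\ZZ_{\geq 0}^{n-k}$ with $\norm{M(x^\ast_\NC-w)}_\infty\leq\herdisc(M)$; then $A_\BC^{-1}(b-A_\NC w)=x^\ast_\BC+M(x^\ast_\NC-w)\geq(\Delta-1)\cdot\BUnit$, so \Cref{th:standard_Gomory} produces an integer solution $z'\in\ZZ_{\geq 0}^n$ of $Ax=b-A_\NC w$, and adding $w$ back to the nonbasic coordinates of $z'$ solves the original system. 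The integrality of $z_\BC$ is handled entirely by the HNF computation inside the Gomory step, not by the discrepancy rounding.

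Two smaller corrections. For $t_1$ the paper uses your exchange idea (\Cref{lm:poly_subdet_search}, with threshold $e$) followed by Spencer's bound \eqref{eq:SixDeviations}/\eqref{eq:SixDeviationsReduced} applied to $M$. For $t_2,t_3$ your inequality $\Delta_1(A)\leq\Delta$ is false in general; the correct input to the Matou\v{s}ek/Jiang--Reis bounds \eqref{eq:DiscDetBound}, \eqref{eq:DiscDetBoundReduced} is $\detlb(M)$, and a Jacobi-type complementary-minor identity gives $\Delta_i(M)\leq\Delta/\delta_\BC\leq\Delta$ for every $i$ and any base $\BC$, whence $\herdisc(M)=O(\Delta\cdot\log k)$ and $O\bigl(\Delta\cdot\sqrt{\log k\cdot\log(n-k)}\bigr)$.
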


Another interesting case is when the co-dimension parameter $k$ is a constant or a slowly growing function, depending on the input size. If this situation occurs, there exists an upper bound on $\DiagFrob(A)$, which admits a $2^{k} \cdot \poly(\inputsize)$-time algorithm to find an integer feasible solution of \eqref{eq:Sys-SF}. The constant $C$ in the corresponding upper bound on $\DiagFrob(A)$ is $e^2$ times larger than the constant, we found in \Cref{th:DiagFrob}.
\begin{restatable}{theorem}{ExpDiagFrobMainTh}\label{th:ExpDiagFrob}
    Denote 
    \begin{gather*}
        t_1 = \Delta + C \cdot \log k,\\
        t_2 = \Delta + C \cdot \sqrt{\log k \cdot \log (n-k)},\\
        t = \min\{t_1,t_2\},
    \end{gather*}
    for a sufficiently large absolute constant $C$, whose exact value is not crucial for our purpose $($the constant $C$ in this Theorem is $e^2$ times larger than the constant in the Frobenius number $\DiagFrob(A)$, we found in \Cref{th:DiagFrob}$)$.
    
    Then, the condition \eqref{eq:DiagFrobCondition}:
    \begin{equation*}
        \exists x \in \RR_{\geq 0}^n,\, x \geq t \cdot \BUnit\colon \quad b = A x
    \end{equation*}
    implies that there exists an integer feasible solution of the system \eqref{eq:Sys-SF}, which can be found by an $2^{k} \cdot \poly(\inputsize)$-time algorithm.
\end{restatable}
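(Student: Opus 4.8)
Since the constant $C$ appearing here is larger than the one in \Cref{th:DiagFrob}, the value $t$ dominates the threshold proved there, so the bound $\DiagFrob(A)\le t$ is inherited from \Cref{th:DiagFrob} with no extra work; the entire content of the statement is algorithmic. The plan is therefore to re-run the algorithm underlying \Cref{th:DiagFrob}, whose only non-polynomial ingredient is the availability of a base $\BC$ with $\abs{\det A_{\BC}}=\Delta$, and to supply a suitable substitute within the budget $2^{k}\cdot\poly(\inputsize)$.

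Recall the shape of that algorithm: from $x^{*}\ge t\cdot\BUnit$ with $A x^{*}=b$ one passes to the Gomory corner relaxation attached to $\BC$ and constructs $z_{\NC}\in\ZZ_{\ge0}^{n-k}$ making $z_{\BC}:=A_{\BC}^{-1}\bigl(b-A_{\NC}z_{\NC}\bigr)$ a nonnegative integer vector; the additive error decomposes into a \emph{congruence correction}, which places $z_{\NC}$ in the correct coset of the lattice $\{u\in\ZZ^{n-k}\colon A_{\BC}^{-1}A_{\NC}u\in\ZZ^{k}\}$ and is responsible for the leading term $\Delta$, and a \emph{discrepancy rounding} of $x^{*}_{\NC}$ inside that coset, whose cost is controlled by the subdeterminants of $A_{\BC}^{-1}A_{\NC}$ and is responsible for the additive $C'\log k$ (resp.\ $C'\sqrt{\log k\cdot\log(n-k)}$) term. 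Maximality of $\abs{\det A_{\BC}}$ enters only through the bound $\abs{\det\bigl((A_{\BC}^{-1}A_{\NC})_{R,S}\bigr)}\le 1$ for all index sets $R,S$; more generally, since $\det A_{\BC'}=\pm\det A_{\BC}\cdot\det\bigl((A_{\BC}^{-1}A_{\NC})_{R,S}\bigr)$ for the base $\BC'$ obtained by the exchange $R\leftrightarrow S$, any base with $\abs{\det A_{\BC}}\ge\Delta/e^{2}$ already forces those subdeterminants to be at most $e^{2}$, and the discrepancy term is thereby multiplied by at most $e^{2}$ — which is precisely the claimed inflation of the constant. So it suffices (i) to produce, in time $2^{k}\cdot\poly(\inputsize)$, a base $\BC$ with $\abs{\det A_{\BC}}\ge\Delta/e^{2}$, and (ii) to carry out the congruence correction in time $2^{k}\cdot\poly(\inputsize)$; task (ii) is a closest-vector-type computation in a rank-$k$ lattice, hence solvable in $2^{O(k)}\cdot\poly(\inputsize)$ by standard lattice algorithms, and this is also what lets us dispense with the pseudo-polynomial $\abs{\det A_{\BC}}$-sized enumeration over the group $\ZZ^{k}/A_{\BC}\ZZ^{k}$.

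The remaining and main difficulty is task (i): the exact maximum is \(\NP\)-hard to attain and the naive search over $\binom{n}{k}$ bases is not fixed-parameter tractable in $k$, so one must distil from $A$, in polynomial time, a structure of size $2^{O(k)}$ — for instance a set of $O(k)$ columns, or a bounded family of candidate index patterns — that provably retains a base within a constant factor of the maximum, and then search it exhaustively; pinning the retained factor down to $e^{2}$ (rather than to something growing with $k$) is the delicate point, and one must also verify that this constant-factor degradation of the subdeterminants of $A_{\BC}^{-1}A_{\NC}$ propagates through the discrepancy-rounding step of \Cref{th:DiagFrob} with only the stated constant-factor loss. Everything else — the corner relaxation, the coset selection, and the nonnegativity bookkeeping for $z_{\NC}$ and $z_{\BC}$ — is imported verbatim from \Cref{th:DiagFrob}, while the weaker polynomial-time variants of \Cref{th:PolyDiagFrob} serve as a consistency check on the $\Delta$-dependent error terms.
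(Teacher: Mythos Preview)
Your overall architecture is right --- approximate the maximum-determinant base, then rerun the discrepancy-plus-Gomory machinery --- but task~(i) as you state it is too strong and is not what the paper does. Requiring $\abs{\det A_{\BC}}\ge\Delta/e^{2}$ is a \emph{constant-factor} approximation to the maximum $k\times k$ subdeterminant; no such algorithm is known even with a $2^{k}\cdot\poly$ budget, and your sketch (polynomially distil $O(k)$ columns retaining a near-optimal base, then enumerate) does not correspond to any established column-selection or coreset result. Nikolov's algorithm (\Cref{th:maxdet_apr}) gives only an $e^{k}$-approximation, and the hardness landscape for this problem makes a constant-factor guarantee unlikely.

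The paper sidesteps this via \Cref{lm:exp_subdet_search} and its dual \Cref{lm:exp_subdet_search_dual}: starting from Nikolov's $e^{k}$-approximate base, perform at most $k$ rounds of local search; in each round scan all $2^{k}$ row-subsets $\JC$ of the current $M(\BC)=A_{\BC}^{-1}A$, apply Nikolov to $M_{\JC *}$ to locate an approximately maximal square $M_{\JC\IC}$, and if $\abs{\det M_{\JC\IC}}>e$ perform the exchange $\JC\leftrightarrow\IC$. Termination guarantees only $\Delta_{i}\bigl(M(\BC)\bigr)\le e^{i+1}$ for each $i$ --- so $\abs{\det A_{\BC}}$ may remain as small as $\Delta/e^{k+1}$ --- yet this is exactly enough, because the discrepancy bound \eqref{eq:DiscDetBound} feeds on $\detlb\bigl(M(\BC)\bigr)=\max_{i}\Delta_{i}\bigl(M(\BC)\bigr)^{1/i}\le\max_{i}e^{(i+1)/i}=e^{2}$, not on $\Delta_{k}\bigl(M(\BC)\bigr)$ itself. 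The decisive shift is from ``near-maximal determinant'' to ``every $i\times i$ sub-pivot of $M(\BC)$ controlled'', and it is this weaker target that fits the $2^{k}$ budget.

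A minor point: your task~(ii) is overcomplicated. The congruence correction is handled by the HNF computation inside \Cref{th:canonical_Gomory} (equivalently \Cref{th:standard_Gomory}) in polynomial time; no CVP solver is invoked, and the base search is the sole source of the $2^{k}$ factor.
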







\subsection{Diagonal Frobenius Number for Slacks}\label{sec:slacks_intro}

In this Subsection, we consider the \emph{integer linear feasibility problem in the canonical form} and define the corresponding version of the diagonal Frobenius number.
\begin{problem}\label{prbl:canonical_feasibility}
	Let $A \in \ZZ^{(n+k)\times n}$, $\rank(A) = n$, $b \in \ZZ^{n+k}$. \emph{The integer linear feasibility problem in the canonical form with $n+k$ constraints} can be formulated as the problem to find an integer feasible solution $z \in \ZZ^n$ of the following system:
	\begin{equation}
		\begin{cases}
			A x \leq b\\
			x \in \RR^n.
		\end{cases}\label{eq:Sys-CF}\tag{\(\SysCF\)}
	\end{equation}
\end{problem}

The corresponding version of the diagonal Frobenius number can be defined as follows.
\begin{definition}
	Corresponding to the system \eqref{eq:Sys-CF}, the \emph{diagonal Frobenius number for Slacks $\SlackFrob(A)$} is defined as the minimum $t \in \ZZ_{>0}$, such that, for each $b \in \ZZ^{n+k}$, the condition 
	\begin{equation}\label{eq:SlackFrobCondition}\tag{\(\SlackFrobCond(t)\)}
		\exists x \in \RR^n:\quad b - A x \geq t \cdot \BUnit
	\end{equation}
	implies that there exists an integer feasible solution $z \in \ZZ^n$ of the system \eqref{eq:Sys-CF}, that is
	\begin{equation*}
		\exists z \in \ZZ^n:\quad A z \leq b.
	\end{equation*}
\end{definition}

\begin{remark}\label{rm:both_formulations}
In this remark, we are going to justify the reason, why we consider the system \eqref{eq:Sys-CF} and the corresponding diagonal Frobenius number for slacks $\SlackFrob(A)$. The main reason for working with \eqref{eq:Sys-CF} is that it is geometrically more intuitive and even more general than \eqref{eq:Sys-SF}. The "geometric intuitive" part helps to simplify the proof idea. 

However, let us explain, why \eqref{eq:Sys-CF} is strictly more general than \eqref{eq:Sys-SF}. While these systems are mutually transformable, all known trivial transformations alter at least one of the key parameters $\bigl(k, d, \Delta(A)\bigr)$, where $d$ denotes the dimension of the corresponding polyhedra. The existence of a parameter-preserving transformation is a nontrivial question, resolved by \citet{OnCanonicalProblems_Grib}. This transformation will be explained in more detail in \Cref{sec:connection}. To make the both systems equivalent, one must augment the system \eqref{eq:Sys-SF} with additional constraints, described modulo a finite Abelian group, see \Cref{prbl:Sys-ModularSF}.

Thus, the described duality motivates to use of both formulations: while \eqref{eq:Sys-CF} offers greater generality, clear geometric intuition and proof simplification, \eqref{eq:Sys-SF} remains more prevalent in the ILP literature.
\end{remark}

Our main result with respect to $\SlackFrob(A)$ is stated in the following \Cref{th:SlackFrob}. The main result for $\DiagFrob(A)$ (\Cref{th:DiagFrob}) is a direct consequence of \Cref{th:SlackFrob} and a reduction between the systems \eqref{eq:Sys-SF} and \eqref{eq:Sys-CF}.
\begin{restatable}{theorem}{SlackFrobMainTh}\label{th:SlackFrob}
    Denote 
    \begin{gather*}
        t_1 = \Delta +  C \cdot \log k,\\
        t_2 = \Delta +  C \cdot \sqrt{\log k \cdot \log n},\\
        t = \min\{t_1,t_2\},
    \end{gather*}
    for a sufficiently large absolute constant $C$, whose exact value is not crucial for our purposes.
    Then,
    \begin{equation*}
        \SlackFrob(A) \leq t.
    \end{equation*}


    Additionally, assuming that a base $\BC$ with $\abs{\det A_{\BC}} = \Delta$ is known, the condition \eqref{eq:SlackFrobCondition}:
    \begin{equation*}
        \exists x \in \RR^n:\quad b - A x \geq t \cdot \BUnit
    \end{equation*}
    implies that there exists an integer feasible solution of the system \eqref{eq:Sys-CF}, which can be found by a polynomial-time algorithm.
\end{restatable}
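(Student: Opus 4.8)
The plan is to work geometrically with the polyhedron $P_b = \{x \in \RR^n : Ax \leq b\}$ and its shrunken copy $P_{b - t\BUnit}$. The hypothesis \eqref{eq:SlackFrobCondition} says that $P_{b - t\BUnit}$ is nonempty, i.e., $b - t\BUnit$ lies in the "interior direction" cone, and we must produce an integer point of $P_b$. The natural strategy is: start from a vertex (or a conveniently chosen point) $v$ of $P_{b - t\BUnit}$, which by construction satisfies $Av \leq b - t\BUnit$, so every constraint has slack at least $t$ at $v$; then round $v$ to a nearby integer point $z$ while controlling how much each constraint's slack can deteriorate. If the total deterioration in each row is strictly less than $t$, then $Az \leq b$ and we are done. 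The point $v$ can be taken to be a vertex of $P_{b-t\BUnit}$, so it is determined by $n$ tight rows forming a basis $\BC'$; writing $v = A_{\BC'}^{-1}(b_{\BC'} - t\BUnit)$ exhibits $v$ with denominators dividing $\abs{\det A_{\BC'}} \leq \Delta$, which is the key structural fact making the rounding tractable.

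The rounding step is where Gomory's corner polyhedron relaxation and discrepancy enter. Fix the basis $\BC'$ at $v$; the residual $v - \lfloor v \rfloor$ (or rather the fractional part of $A_{\BC'} v$ modulo the lattice $A_{\BC'}\ZZ^n$) lives in the finite Abelian group $\ZZ^n / A_{\BC'}\ZZ^n$ of order $\abs{\det A_{\BC'}} \leq \Delta$. Gomory's group relaxation tells us that the integer points of $P_b$ near $v$ are governed by this group; concretely, we seek $z \in \ZZ^n$ with $A_{\BC'} z \equiv A_{\BC'} v \pmod{A_{\BC'}\ZZ^n}$ — automatically satisfiable — while keeping $A_{*i}(v - z)$ small for every row $i$, including the $k$ non-basic rows. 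After the change of variables $y = A_{\BC'}(v-z)$, this becomes: given that each coordinate of $y$ may be adjusted within a residue class modulo $\det$, choose the adjustment so that a system of $k$ linear forms (the non-basic rows expressed in the $y$-coordinates, whose coefficients are ratios of sub-determinants of $A$, hence bounded) evaluates to something of small absolute value. This is precisely a linear discrepancy / Beck–Fiala type problem: the governing matrix has $k$ rows, entries bounded in terms of $\Delta$ (after clearing denominators the relevant quantities are sub-determinants, each $\leq \Delta$), and we want a rounding whose discrepancy in each row is $O(\sqrt{\log k})$ or $O(\log k)$ — exactly the two regimes $t_1, t_2$. Standard discrepancy bounds (Spencer-type / partial-coloring for the $\log k$ and $\sqrt{\log k \cdot \log n}$ shapes, or Banaszczyk's theorem) yield a rounding with per-row error at most $C\log k$ or $C\sqrt{\log k \cdot \log n}$; adding this to the basic rows (which are controlled for free since we round within the fundamental parallelepiped) and to the guaranteed slack $t$ closes the argument.

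I would organize the write-up as: (1) reduce to producing $z \in \ZZ^n$ with $\|A(v - z)\|_\infty < t$ where $v$ is a vertex of $P_{b - t\BUnit}$; (2) set up the basis $\BC'$, the quotient group, and Gomory's relaxation, reducing to a discrepancy problem for a $k$-row matrix with entries bounded in terms of $\Delta$; (3) invoke the appropriate discrepancy theorem to get the two bounds $t_1, t_2$ and take the minimum; (4) for the algorithmic "additionally" claim, observe that when a base $\BC$ with $\abs{\det A_\BC} = \Delta$ is given we may take $\BC' = \BC$, the group has order exactly $\Delta$, and the discrepancy rounding can be realized by a polynomial-time constructive discrepancy algorithm (e.g., the Lovett–Meka or Bansal-type partial-coloring walk, or in the relevant parameter range a direct argument), so the whole search for $z$ is polynomial once $\BC$ is known — the reason the bound is not unconditionally algorithmic being exactly that finding such a $\BC$ is NP-hard. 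The main obstacle I anticipate is step (2): making the passage through Gomory's corner polyhedron rigorous so that the non-basic rows really do reduce to a clean discrepancy instance with the stated entry bounds, and ensuring the basic rows contribute no error — in particular handling the fractional shift $t\BUnit$ in $v = A_{\BC'}^{-1}(b_{\BC'} - t\BUnit)$ correctly so that the residue class in the quotient group is well-defined and the rounding stays within the fundamental domain. A secondary technical point is verifying that the discrepancy bound's dependence on the number of columns is $n$ (the canonical-form count) and not $n+k$, which is what makes the slack statement slightly sharper than the standard-form corollary.
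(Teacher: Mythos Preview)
Your high-level ingredients are right—Gomory's relaxation and a discrepancy bound for the $k\times n$ matrix $A_{\NotBC}A_{\BC}^{-1}$—but the way you assemble them has two concrete gaps that the paper's argument avoids.

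\textbf{Basis choice.} You take $\BC'$ to be the tight basis at a vertex $v$ of $P_{b-t\BUnit}$, and for the algorithmic claim you say ``take $\BC'=\BC$''. But the given max-determinant base $\BC$ need not be tight at any vertex of $P_{b-t\BUnit}$, so you cannot simply set $\BC'=\BC$. The paper decouples the two choices completely: it takes \emph{any} point $x$ satisfying $b-Ax\geq t\BUnit$ (not a vertex) and \emph{any} base $\BC$ (in particular the max-det one), and works with the basic slack vector $y_{\BC}=b_{\BC}-A_{\BC}x\in\RR_{\geq 0}^n$.

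\textbf{The discrepancy reduction.} Your change of variables $y=A_{\BC'}(v-z)$ places $y$ in a coset of the lattice $A_{\BC'}\ZZ^n$, which is not a linear-discrepancy instance; the assertion that ``each coordinate of $y$ may be adjusted within a residue class modulo $\det$'' is not correct for this coset. The paper's move is different and cleaner: it rounds the \emph{real} slack vector $y_{\BC}$ to a nonnegative integer vector $z$ using $\lindisc$ for the matrix $A_{\NotBC}A_{\BC}^{-1}$ (Lemma~\ref{lm:Rounding0inf}), incurring error $\gamma=\herdisc(A_{\NotBC}A_{\BC}^{-1})$ only in the non-basic rows. It then sets $\hat b_{\BC}=b_{\BC}-z$, so that $\hat v_{\BC}=A_{\BC}^{-1}\hat b_{\BC}$ is a genuine vertex of $\{Ax\leq \hat b\}$ with non-basic slack $\geq t-\gamma\geq\Delta-1$, and only \emph{then} invokes Gomory (Theorem~\ref{th:canonical_Gomory}) to produce the integer point. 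This two-step decomposition is exactly what yields the additive bound $\Delta-1+\gamma$; your single ``round $v$ directly'' step does not make clear why you would get $\Delta+O(\log k)$ rather than, say, $O(\Delta\log k)$. Finally, when $\abs{\det A_{\BC}}=\Delta$ one has $\Delta_i(A_{\NotBC}A_{\BC}^{-1})\leq 1$ for all $i$, so $\detlb\leq 1$ and the Jiang--Reis/Matou\v{s}ek bounds give $\gamma=O(\log k)$ and $\gamma=O(\sqrt{\log k\cdot\log n})$ directly—there is no need for, and no room for, the group-theoretic detour you sketch.
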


Next, we state the generalizations (\Cref{th:PolySlackFrob} and \Cref{th:ExpSlackFrob}) of our results in \Cref{th:PolyDiagFrob} and \Cref{th:ExpDiagFrob} with respect to the diagonal Frobenius number for slacks. Again, \Cref{th:PolyDiagFrob} and \Cref{th:ExpDiagFrob} are consequences of \Cref{th:PolySlackFrob} and \Cref{th:ExpSlackFrob}. 

We recall that it is an NP-hard problem to construct a base $\BC$ of $A$ with $\abs{\det A_{\BC}} = \Delta(A)$. So, the bound of \Cref{th:SlackFrob} does not imply a polynomial-algorithm to construct a corresponding integer feasible solution of \eqref{eq:Sys-CF}. However, there exist weaker upper bounds for $\SlackFrob(A)$, which admit such polynomial-time algorithms. They are presented in the following Theorem.

\begin{restatable}{theorem}{PolynomialSlackFrobMainTh}\label{th:PolySlackFrob}
    Denote
    \begin{gather*}
        t_1 = \Delta + C_1 \cdot \begin{cases}
            \sqrt{k},\quad\text{for $k \leq n$},\\
            \sqrt{k} \cdot \log\bigl(\frac{2k}{n}\bigr),\quad\text{for $k \geq n$},
        \end{cases}\\
        t_2 = \Delta + C_2 \cdot \Delta \cdot \log k,\\
        t_3 = \Delta + C_2 \cdot \Delta \cdot \sqrt{\log k \cdot \log n},\\
        t = \min\{t_1,t_2,t_3\},
    \end{gather*}
    for sufficiently large absolute constants $C_1,C_2$, whose exact values are not crucial for our purposes.
    Then, the condition \eqref{eq:SlackFrobCondition}:
    \begin{equation*}
        \exists x \in \RR^n:\quad b - A x \geq t \cdot \BUnit
    \end{equation*}
    implies that there exists an integer feasible solution of the system \eqref{eq:Sys-CF}, which can be found by a polynomial-time algorithm.
\end{restatable}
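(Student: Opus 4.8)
The plan is to reduce the whole statement to a single geometric estimate — a bound on the $\ell_\infty$-covering radius of the rank-$n$ lattice $A\ZZ^n$ inside its linear span — together with a constructive, polynomial-time version of the corresponding rounding. Observe first that $\SlackFrobCond(t)$ holds precisely when the polyhedron $P_t := \{x \in \RR^n : Ax \le b - t \cdot \BUnit\}$ is nonempty. Fix any $x^\circ \in P_t$ (found by linear programming). Since $Ax^\circ$ lies in the column span $\linh(A) = \linh_{\RR}(A\ZZ^n)$, we may pick $v \in \ZZ^n$ such that $Av$ is an $\ell_\infty$-nearest point of $A\ZZ^n$ to $Ax^\circ$; if $\norm{Av - Ax^\circ}_\infty \le t$, then $Av \le Ax^\circ + t\cdot\BUnit \le b$, so $z := v$ is an integer feasible solution of \eqref{eq:Sys-CF}. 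Hence it suffices to show that $A\ZZ^n$ has $\ell_\infty$-covering radius at most $t$ and that such a rounding $v$ is polynomial-time computable. Note that, since $\SlackFrob(A)$ is a worst case over $b$, hence over the target $Ax^\circ \in \linh(A)$, one needs a worst-case covering-radius bound — a \emph{linear}-discrepancy-flavoured statement rather than a single-instance one.

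To estimate this covering radius, I would fix a base $\BC \subseteq \intint{n+k}$, $\abs{\BC} = n$, $\det A_\BC \ne 0$, and change to the base coordinates $y \mapsto A_\BC^{-1} y$. There the lattice $A\ZZ^n$ becomes the graph of the $k \times n$ matrix $M := A_{\bar{\BC}} A_\BC^{-1}$ over the sublattice $A_\BC \ZZ^n \subseteq \ZZ^n$, which has index $\abs{\det A_\BC} \le \Delta$; by Cramer's rule each entry of $M$ is a ratio of subdeterminants of $A$, so $\norm{M}_{\max} \le \Delta / \abs{\det A_\BC}$. Rounding the target then involves two tasks: (a) snapping the $n$ base coordinates onto $A_\BC \ZZ^n$, which via a Hermite normal form of $A_\BC\ZZ^n$ and back-substitution — equivalently, via the coordinatewise-minimal point of the associated Gomory corner polyhedron — can be done at $\ell_\infty$-cost at most $\approx \abs{\det A_\BC}$; and (b) balancing the $k$ non-base rows, which is precisely a linear/hereditary discrepancy problem for $M$ and costs $\approx \norm{M}_{\max} \cdot D(k,n)$, where $D(k,n)$ is a Spencer/Beck--Fiala-type bound $O(\sqrt{\min\{k,n\}})$ (made constructive by the Bansal, Lovett--Meka, or Rothvoss algorithms) or a Banaszczyk-type bound $O(\sqrt{\log k})$, respectively $O(\sqrt{\log k \cdot \log n})$ (made constructive by the Bansal--Dadush--Garg--Lovett or Levy--Ramadas--Rothvoss algorithms). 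One then argues — this is the delicate point, see below — that the two contributions combine so that $\SlackFrob(A) \le \abs{\det A_\BC} + \norm{M}_{\max}\cdot O(D(k,n)) \le \Delta + \frac{\Delta}{\abs{\det A_\BC}}\cdot O(D(k,n))$, everything being algorithmic once $\BC$ and a constructive discrepancy routine are fixed.

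It remains to choose the base. The optimal choice $\abs{\det A_\BC} = \Delta$ makes $\norm{M}_{\max} \le 1$ and yields the clean bound $\Delta + O(D(k,n))$ of \Cref{th:SlackFrob}, but such a base is NP-hard to compute. For \Cref{th:PolySlackFrob}: a base with all pivot ratios $\abs{(A_{\bar{\BC}} A_\BC^{-1})_{ij}} = O(1)$, hence with $\norm{M}_{\max} = O(1)$, is computable in polynomial time by performing only those pivots that increase $\abs{\det A_\BC}$ by a factor at least $2$ (there are at most $O(\log\Delta) \le \poly(\inputsize)$ such pivots); feeding this base into the Spencer/Beck--Fiala route gives $t_1 = \Delta + C_1\sqrt{k}$ for $k \le n$ and $t_1 = \Delta + C_1\sqrt{k}\log\bigl(\frac{2k}{n}\bigr)$ for $k \ge n$. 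Feeding \emph{any} base produced by Gaussian elimination — for which only $\norm{M}_{\max} \le \Delta$ is guaranteed — into the Banaszczyk route gives $t_2 = \Delta + C_2 \Delta \log k$ and $t_3 = \Delta + C_2 \Delta\sqrt{\log k \cdot \log n}$. Taking $t = \min\{t_1, t_2, t_3\}$ proves the bound, and the required $z$ is the rounding vector output by the algorithm.

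The step I expect to be the main obstacle is the interaction of tasks (a) and (b): a discrepancy coloring balancing $M$ generically pushes the base coordinates off the sublattice $A_\BC\ZZ^n$, and re-snapping them back re-perturbs the non-base rows, so the two error budgets threaten to multiply rather than add. Reconciling them is the technical heart of the argument; I would do it by running the discrepancy balancing at geometric scales (a partial-coloring / iterative-rounding loop) and using the periodicity inherent in Gomory's corner-polyhedron relaxation to keep the accumulated base-coordinate error at $O(\abs{\det A_\BC})$ throughout the loop, so that only the non-base side pays the (poly-logarithmic, respectively $\sqrt{k}$) discrepancy price.
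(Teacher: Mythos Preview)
Your approach is essentially the paper's: the same two ingredients (a Gomory-corner step contributing $\Delta$ and a discrepancy step contributing the additive correction) and the same base choices (the pivoting base with $\norm{M}_{\max}=O(1)$ for $t_1$ is exactly \Cref{lm:poly_subdet_search}; an arbitrary base with $\detlb(M)\le\Delta/\delta_{\BC}$ for $t_2,t_3$ is what the paper uses). The difference is that the interaction you flag as ``the main obstacle'' does not arise in the paper's argument, because the two steps act on different objects and in a specific order that makes the errors add rather than multiply; no iterative scheme is needed.

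Concretely (this is \Cref{lm:SlackFrob_via_Discr}): run the discrepancy step \emph{first}, and apply it not to $x$ but to the base slack $y_{\BC}:=b_{\BC}-A_{\BC}x\in\RR^n_{\ge 0}$. By \Cref{lm:Rounding0inf} there is $z\in\ZZ^n_{\ge 0}$ with $\norm{M(y_{\BC}-z)}_\infty\le\gamma:=\herdisc(M)$. Set $\hat b_{\BC}:=b_{\BC}-z$ and $\hat b_{\NotBC}:=b_{\NotBC}$. Since $b_{\BC},z\in\ZZ^n$, the shifted right-hand side $\hat b_{\BC}$ is integer; since $z\ge\BZero$, one has $\hat b\le b$. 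A one-line computation gives, for the vertex $\hat v_{\BC}:=A_{\BC}^{-1}\hat b_{\BC}$ of the shifted polyhedron, $\hat b_{\NotBC}-A_{\NotBC}\hat v_{\BC}=y_{\NotBC}-M(y_{\BC}-z)\ge(t-\gamma)\cdot\BUnit\ge(\Delta-1)\cdot\BUnit$. Thus the hypothesis of \Cref{th:canonical_Gomory} holds for $Ax\le\hat b$, yielding $\hat x\in\ZZ^n$ with $A\hat x\le\hat b\le b$. The total budget is $t=\Delta-1+\gamma$.

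Two remarks on why your framing manufactured the difficulty. First, you set things up as a two-sided covering-radius bound $\norm{Av-Ax^\circ}_\infty\le t$, but only the one-sided inequality $Av\le b$ is needed; the paper's $\hat x$ may have base coordinates far below $A_{\BC}x^\circ$, which is harmless. Second, and more importantly, your task (b) was ``perturb $p_{\BC}$ inside the sublattice $A_{\BC}\ZZ^n$ to balance $Mp_{\BC}$'', which is a discrepancy problem over the wrong lattice and is what creates the feared re-snapping cycle. The paper instead runs discrepancy over the full lattice $\ZZ^n$ in the slack variable $y_{\BC}$; the rounding $z\in\ZZ^n_{\ge 0}$ never leaves $\ZZ^n$, so nothing is ``pushed off $A_{\BC}\ZZ^n$''. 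The snapping onto $A_{\BC}\ZZ^n$ happens only afterwards, inside the Gomory step, where the input $\hat b_{\BC}$ is already integer and the non-base slack is already $\ge\Delta-1$; at that point the $\ell_1$-control $\norm{\hat b_{\BC}-A_{\BC}\hat x}_1\le\delta_{\BC}-1$ together with $\norm{M}_{\max}\le\Delta/\delta_{\BC}$ handles the non-base rows with no further discrepancy input.
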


Another interesting case is when $k$ is a constant or a slowly growing function, depending on the input size. If this situation occurs, there exists an upper bound on $\SlackFrob(A)$, which admits a $2^k \cdot \poly(\inputsize)$-time algorithm to find an integer feasible solution of \eqref{eq:Sys-CF}. The constant $C$ in the corresponding upper bound on $\SlackFrob(A)$ is $e^2$ times larger than the constant, we found in \Cref{th:SlackFrob}.

\begin{restatable}{theorem}{ExpSlackFrobMainTh}\label{th:ExpSlackFrob}
    Denote 
    \begin{gather*}
        t_1 = \Delta + C \cdot \log k,\\
        t_2 = \Delta + C \cdot \sqrt{\log k \cdot \log n},\\
        t = \min\{t_1,t_2\},
    \end{gather*}
    for a sufficiently large absolute constant $C$, whose exact value is not crucial for our purpose, where the constant $C$ in this Theorem is $e^2$ times larger than the constant in the Frobenius number $\SlackFrob(A)$, we found in \Cref{th:SlackFrob}.
    
    Then, the condition \eqref{eq:SlackFrobCondition}:
    \begin{equation*}
        \exists x \in \RR^n:\quad b - A x \geq t \cdot \BUnit
    \end{equation*}
    implies that there exists an integer feasible solution of the system \eqref{eq:Sys-CF}, which can be found by a $2^{k} \cdot \poly(\inputsize)$-time algorithm.
\end{restatable}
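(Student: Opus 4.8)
The plan is to deduce \Cref{th:ExpSlackFrob} from \Cref{th:SlackFrob}, the only genuinely new ingredient being a $2^k\cdot\poly(\inputsize)$-time substitute for the NP-hard computation of a base $\BC$ with $\abs{\det A_\BC}=\Delta$ that is used by the algorithmic half of \Cref{th:SlackFrob}; the standard-form counterpart \Cref{th:ExpDiagFrob} then follows from \Cref{th:ExpSlackFrob} through the same reduction between \eqref{eq:Sys-CF} and \eqref{eq:Sys-SF} by which \Cref{th:DiagFrob} is obtained from \Cref{th:SlackFrob} (see \Cref{sec:connection}). First observe that the \emph{existence} of an integer solution is not at issue here: the value $t$ in \Cref{th:ExpSlackFrob} only enlarges the one in \Cref{th:SlackFrob}, so $t\geq\SlackFrob(A)$, and whenever $\SlackFrobCond(t)$ holds so does $\SlackFrobCond(\SlackFrob(A))$, which already guarantees an integer feasible $z$. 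The whole assertion is thus about \emph{constructing} such a $z$ within the stated time bound.

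I would then recall the scheme underlying \Cref{th:SlackFrob}. Starting from $\bar x$ with $b-A\bar x\geq t\BUnit$, the substitution $z=\bar x+w$ turns the task into that of finding a correction $w$ in the residue class $\ZZ^n-\bar x$ with $Aw\leq t\BUnit$; in the coordinates $p=A_\BC w$ attached to a base $\BC$ this is precisely Gomory's corner-polyhedron relaxation, $p\leq t\BUnit$ together with $Bp\leq t\BUnit$ for $B=A_\NC A_\BC^{-1}$. When $\abs{\det A_\BC}=\Delta$, Cramer's rule forces $\norm{B}_{\max}\leq 1$, and a constructive discrepancy argument (building on Banaszczyk-type bounds and their algorithmic versions) then delivers a feasible $p$, hence $z$, within $t=\min\{t_1,t_2\}$ in polynomial time. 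The sole non-polynomial component is the computation of such a base.

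For \Cref{th:ExpSlackFrob} I would perform this computation approximately in $2^k\cdot\poly(\inputsize)$ time, via a $2^k$-size enumeration — over subsets of a suitably chosen set of $k$ ``surplus'' constraints, or over the candidate corner cones arising from the $k$ degrees of freedom of the system — rather than over the full family of $\binom{n+k}{k}$ bases. The goal is to ensure that some branch produces a base $\BC$ with $\abs{\det A_\BC}\geq\Delta/e^2$, equivalently $\norm{B}_{\max}\leq e^2$, to which the constructive discrepancy rounding of \Cref{th:SlackFrob} still applies; the factor $e^2$ enters that rounding linearly and accounts exactly for the $e^2$ enlargement of the constant $C$ relative to \Cref{th:SlackFrob}. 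Since $\SlackFrobCond(t)$ is assumed, at least one branch succeeds and returns the desired integer solution, for a total running time of $2^k\cdot\poly(\inputsize)$.

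The step I expect to be the main obstacle is precisely this last one: pinning down an enumeration of size only $2^k$ — rather than $\binom{n+k}{k}$ — that provably recovers the $\norm{B}_{\max}=O(1)$ conditioning which \Cref{th:SlackFrob} obtained for free from the maximum-determinant base, and checking both that each branch is genuinely polynomial and that the incurred loss is no more than the claimed multiplicative $e^2$ in front of the $\log k$ and $\sqrt{\log k\cdot\log n}$ terms. The remaining pieces — the corner-polyhedron reduction, the discrepancy estimates, and the passage back to the standard form — are inherited unchanged from the proofs of \Cref{th:SlackFrob} and \Cref{th:DiagFrob}.
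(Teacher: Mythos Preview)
Your overall scheme matches the paper's: obtain, in $2^k\cdot\poly(\inputsize)$ time, a base $\BC$ with $\detlb\bigl(A_{\NotBC}A_\BC^{-1}\bigr)=O(1)$, then feed it into the same discrepancy--Gomory machinery underlying \Cref{th:SlackFrob} (packaged in the paper as \Cref{lm:SlackFrob_via_Discr}). The $e^2$ enlargement of $C$ arises precisely because $\detlb$ enters \eqref{eq:DiscDetBound} and \eqref{eq:DiscDetBoundReduced} linearly. One small correction: what the downstream argument needs is a bound on $\detlb(B)$, not merely on $\norm{B}_{\max}=\Delta_1(B)$; your target $\abs{\det A_\BC}\geq\Delta/e^2$ would in fact bound \emph{all} rank-order minors of $B$ by $e^2$ and hence give $\detlb(B)\leq e^2$, but your phrase ``equivalently $\norm{B}_{\max}\leq e^2$'' is neither an equivalence nor, by itself, sufficient.

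The step you correctly flag as the main obstacle is a genuine gap, and the remedy you sketch --- a $2^k$-size enumeration over subsets of some fixed set of $k$ ``surplus'' constraints --- does not work as stated: there is no a~priori choice of $k$ rows whose $2^k$ subsets are guaranteed to contain a near-maximum-determinant base. The paper closes this gap differently, via \Cref{lm:exp_subdet_search_dual}, which is not an enumeration but an \emph{iterative local search}. One starts from Nikolov's polynomial-time $e^k$-approximation to the max-determinant base (\Cref{th:maxdet_apr}) and then repeatedly improves: in each round, for every one of the $2^k$ relevant index sets $\JC$ one runs Nikolov's algorithm on the rows $\JC$ of $M(\BC)=A A_\BC^{-1}$, and if the returned sub-base certifies an improvement of $\abs{\det A_\BC}$ by a factor exceeding $e$, one performs the swap. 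Since the initial base is already within $e^k$ of optimal, at most $k$ improving rounds are possible, giving $O(k\cdot 2^k)$ calls to Nikolov's algorithm in total. The guarantee obtained is $\Delta_i\bigl(M(\BC)\bigr)\leq e^{i+1}$ for every $i$, hence $\detlb\bigl(M(\BC)\bigr)\leq \max_i e^{(i+1)/i}=e^2$ --- not the stronger $\abs{\det A_\BC}\geq\Delta/e^2$ you were aiming for, though either would suffice for the concluding appeal to \Cref{lm:SlackFrob_via_Discr}.
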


As a lower bound, we present the following proposition. However, it only concerns $(n+1) \times n$ matrices, meaning the value of the parameter $k$ is $1$. 

\begin{restatable}{proposition}{LBSlackFrob}\label{prop:LBSlackFrob}
    There exists a matrix $A \in \ZZ^{(n+1) \times n}$ of rank $n$ such that $\SlackFrob(A) \geq (\Delta-2)/2$.
\end{restatable}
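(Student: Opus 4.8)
The goal is to construct a matrix $A \in \ZZ^{(n+1)\times n}$ of rank $n$ with large $\Delta := \Delta(A)$ whose slack Frobenius number is at least $(\Delta - 2)/2$; that is, I need a right-hand side $b$ for which the relaxed condition $\SlackFrobCond(t)$ holds for some $t$ close to $\Delta/2$, yet the system $A x \le b$ has no integer solution. The natural idea is to encode a single "bad" residue class modulo a large integer $q$: take the last row of $A$ to be a linear form that reduces the feasible region, modulo integrality, to a one-dimensional lattice-width obstruction. Concretely, I would try $A$ whose first $n$ rows are $\pm I_n$ arranged so that $x \in \RR^n$ is forced into a box, and whose final row is $a^\top x \le \beta$ for a vector $a$ with coprime entries and $\|a\|$ large; the key $n\times n$ subdeterminants then have absolute value controlled by the entries of $a$, so $\Delta$ is roughly $\max_i |a_i|$ (or a product, depending on the exact construction — the cleanest choice makes $\Delta$ equal to a single large entry or a small fixed multiple of it).

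The plan is then: (1) pick the explicit $A$ and an explicit $b$ so that the polyhedron $P = \{x : Ax \le b\}$ is a simplex (or box-like region) that is "thin" in the direction $a$ — it is sandwiched between two consecutive lattice hyperplanes $a^\top x = c$ and $a^\top x = c+1$ of the sublattice $a^\top \ZZ^n = \ZZ$ (using coprimality of the entries of $a$), hence contains no integer point; (2) compute the slack that survives when we shrink all constraints inward, i.e., find the largest $t$ such that $\{x : Ax \le b - t\BUnit\} \ne \varnothing$ — geometrically this is (a multiple of) the Chebyshev-type radius of $P$ measured against the facet normals, and by choosing $b$ to center $P$ symmetrically between the two lattice hyperplanes, this radius is on the order of $\Delta / 2$; (3) conclude $\SlackFrob(A) \ge t \ge (\Delta-2)/2$, the $-2$ absorbing the $+1$-gap between the lattice hyperplanes and any rounding losses in the radius estimate.

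The main technical step — and the place where the constant $2$ in the denominator and the additive $-2$ really come from — is step (2): one must verify that scaling every right-hand side down by $t$ still leaves the shrunken polyhedron nonempty for $t$ as large as $\approx \Delta/2$. This requires choosing the geometry of $P$ carefully: it should be "fat" in the $n-1$ directions spanning the lattice hyperplane $a^\top x = \text{const}$ (so those constraints are not binding under the shrink) while being exactly one unit thick in the $a$-direction, with Euclidean width $1/\|a\| \cdot (\text{something of size } \Delta)$; then the inward shift of the two nearly-parallel facets $a^\top x \le \beta$ and (implicitly) $-a^\top x \le \beta'$ costs a slack of $t$ only through the factor $\|a\|$, yielding the clean bound. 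I expect the bookkeeping of which subdeterminant achieves $\Delta$ — and checking $\rank(A) = n$ and that the $k=1$ constraint count is genuinely $n+1$ — to be routine; the real work is the simultaneous tuning of $a$, $\beta$, and the box bounds so that the no-integer-point property and the $\Delta/2$ slack hold together. A convenient simplification is to take $a = (1, 2, 4, \dots)$ or $a = (N, N, \dots, N, N\pm 1)$-type vectors where the subdeterminant structure is transparent, and then optimize the scalar parameters at the end.
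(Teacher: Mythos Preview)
Your instinct --- a one-dimensional lattice-width obstruction --- is exactly right and matches the paper. But the concrete row structure you propose (``first $n$ rows are $\pm I_n$ \dots\ final row is $a^\top x \le \beta$'') has a real gap. With $b \in \ZZ^{n+1}$ and only \emph{one} row carrying the large vector $a$, the other side of the sandwich has to be supplied by unit-vector rows $\pm e_i$, and the slack on those rows is then $O(1)$, not $\Theta(\Delta)$. Already in dimension $1$: with $A = \binom{p}{-1}$ and integer $b$, the feasible interval $[-b_2,\, b_1/p]$ contains the integer $-b_2$ whenever it is nonempty, so no integer-free example exists at all. Your parenthetical ``(implicitly) $-a^\top x \le \beta'$'' seems to sense the need for a second bounding facet in the $a$-direction, but an \emph{implicit} facet is not a row of $A$ and its slack does not enter $\SlackFrobCond(t)$; only the $n+1$ actual slacks $b_i - A_i x$ matter.

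The fix is to let \emph{two} rows carry the large coefficient, which forces the ``box'' part down to $n-1$ rows. The paper does precisely this, in the simplest possible way: rows $p\, e_n^\top$ and $-p\, e_n^\top$ (so $\Delta = p$), filled out by $e_1^\top,\dots,e_{n-1}^\top$, with $b = (0,\dots,0,\,p-1,\,-1)^\top$. The polyhedron is $\{x : x_i \le 0 \text{ for } i<n,\; 1/p \le x_n \le (p-1)/p\}$, integer-free for $p \ge 2$, and the point $\hat x = \bigl(-\tfrac{p-2}{2},\dots,-\tfrac{p-2}{2},\,\tfrac12\bigr)$ has slack exactly $(p-2)/2$ on every row. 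No coprimality, no Chebyshev-radius computation, no parameter tuning --- the whole argument is four lines. Your suggested vectors $a = (1,2,4,\dots)$ or $a = (N,\dots,N,N\pm 1)$ only complicate the subdeterminant bookkeeping without buying anything.
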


The proofs of \Cref{th:SlackFrob}, \Cref{th:PolySlackFrob}, \Cref{th:ExpSlackFrob} and \Cref{prop:LBSlackFrob} could be found in \Cref{sec:canonical_proofs}.


\subsection{Complexity Model and Other Assumptions}\label{sec:assumptions}
All the algorithms that are considered in our work rely on the \emph{Word-RAM} computational model. In other words, we assume that additions, subtractions, multiplications, and divisions with rational numbers of the specified size, which is called the \emph{word size}, can be done in $O(1)$ time. In our work, we choose the word size to be equal to some fixed polynomial on $\lceil \log n \rceil + \lceil \log k \rceil + \lceil \log \alpha \rceil$, where $\alpha$ is the maximum absolute value of elements of $A$ and $b$ in the problem formulations.


\begin{remark}\label{rm:GCD_assumption}
	Let us clarify the assumption $\Delta_{\gcd}(A) = 1$, which was done in \Cref{prbl:standard_feasibility}. Let us assume that $\Delta_{\gcd}(A) = d > 1$, and let us show that the original problem can be reduced to an equivalent new problem with $\Delta_{\gcd}(A') = 1$, using a polynomial-time reduction. 
	
	Let $A = P \cdot \bigl(S\,\BZero\bigr) \cdot Q$, where $\bigl(S\,\BZero\bigr) \in \ZZ^{k \times n}$, be the \emph{Smith Normal Form (the SNF, for short)} of $A$ and $P \in \ZZ^{k \times k}$, $Q \in \ZZ^{n \times n}$ be unimodular matrices. We multiply rows of the original system $A x = b,\, x \geq \BZero$ by the matrix $(P S)^{-1}$. After this step, the original system is transformed to the equivalent system $\bigl(I_{k \times k}\,\BZero\bigr) \cdot Q\,x = b^\prime$, $x \geq \BZero$. In the last formula, $b^\prime$ is integer, because in the opposite case the original system is integrally infeasible. Clearly, the matrix $\bigl(I_{k \times k}\,\BZero\bigr)$ is the SNF of $\bigl(I_{k \times k}\,\BZero\bigr)\,Q$, so its $\Delta_{\gcd}(\cdot)$ is equal to $1$. Finally, note that the computation of the SNF is a polynomial-time solvable problem, see \Cref{sec:Smith}. 
\end{remark}

\subsection{Other Related Work}

When the parameter $\Delta$ is bounded, the polyhedra, defined by systems $A x \leq b$, have many interesting properties in algorithmic perspective. Such polyhedra are also known under the name \emph{$\Delta$-modular polyhedra}. 

According to \citet{BimodularStrong}, when $\Delta \leq 2$, integer programming over $\Delta$-modular polyhedra can be solved, using a strongly polynomial-time algorithm. This advancement, built upon an earlier research by \citet{BimodularVert}, which laid the groundwork by characterizing key structural features of these polyhedra and demonstrating that the integer feasibility problem for such systems is decidable in polynomial time.

The work \citet{TwoNonZerosStrong} further showed that, for any fixed $\Delta$ and under the assumption that matrix $A$ contains no more than two nonzero entries per row, the corresponding integer linear program (ILP, for short) admits a strongly polynomial-time solution. Earlier a less general result has been established by \citet{AZ}, who proved that ILPs with a ${0,1}$-matrix $A$, having at most two non-zeros per row and a fixed value of $\Delta\binom{\BUnit^\top}{A}$, can be solved in linear time.

However, the computational complexity of ILP remains open for $\Delta = 3$ and arbitrary matrices $A$. Moreover, as shown by \citet{StableSetHardness}, unless the Exponential Time Hypothesis (ETH, for short) fails, there are no polynomial-time algorithms for ILP problems, where $\Delta = \Omega(n^\varepsilon)$, for any $\varepsilon > 0$.

Significant advances have been achieved in the analysis of $\Delta$-modular polyhedra, described by a system \eqref{eq:Sys-CF} with $n + k$ facets, where the number of facets equals the number of constraints, and those given by \eqref{eq:Sys-SF} with co-dimension $k$, under the assumption that $k$ is bounded. For this family of polyhedra, a number of computational results is known:
\begin{itemize}
    \item The integer linear programming problem and the integer feasibility problem can be solved in
    \begin{gather*}
        O(\log k)^{2k} \cdot \Delta^2 / 2^{\Omega(\sqrt{\log \Delta})},\\
        O(\log k)^{k} \cdot \Delta \cdot (\log \Delta)^3
    \end{gather*}
    arithmetic operations, respectively \cite{Gribanov_fixed_codim}.

    \item The number of integer points $\abs{\PC \cap \ZZ^n}$ can be calculated within
    \[
        O(n/k)^{2k} \cdot n^3 \cdot \Delta^3
    \]
    operations \cite{SparseILP_Gribanov,HyperAvoiding_NonHomo}. A parameterized version of the counting problem is given by \cite{Parametric_Counting_Grib}.

    \item All vertices of the integer hull $\conv(\PC \cap \ZZ^n)$ can be listed, using
    \[
        (k \cdot n \cdot \log \Delta)^{O(k + \log \Delta)}
    \]
    operations \cite{OnCanonicalProblems_Grib}.

    \item In the case of $\Delta$-modular simplices, their width is computable in $\poly(\Delta,n)$ time, and their unimodular equivalence classes can be enumerated by a polynomial-time algorithm, when $\Delta$ is fixed \cite{Width_Grib,WidthConv_Grib,SimplexEquiv_Gribanov}.
\end{itemize}

\section{Preliminaries}\label{sec:prelim}

\subsection{List of Notations}\label{sec:notation}

Let $A \in \RR^{k \times n}$. We will use the following notations:
\begin{itemize}
    \item $A_{i\,j}$ is the $(i,j)$-th entry of $A$;
    \item $A_{i\,*}$ is $i$-th row vector of $A$;
    \item $A_{*\,j}$ is $j$-th column vector of $A$;
    \item $A_{\IC \JC}$ is the sub-matrix of $A$, consisting of rows and columns, indexed by $\IC$ and $\JC$, respectively;
    \item Replacing $\IC$ or $\JC$ with $*$, selects all rows or columns, respectively;
    \item When unambiguous, we abbreviate $A_{\IC*}$ as $A_{\IC}$ and $A_{*\JC}$ as $A_{\JC}$.
\end{itemize}
In our work, we will often use the shorthand notation $\Delta$ to denote $\Delta(A)$.

For a matrix $A \in \RR^{k \times n}$, denote
\begin{gather*}
    \linh(A) = \left\{A x \colon x \in \RR^n\right\},\\
    \inth(A) = \left\{A x \colon x \in \ZZ^n\right\},\\
    \cone(A) = \left\{A x \colon x \in \RR_{\geq 0}^n\right\},\\
    \cone_{\ZZ}(A) = \left\{A x \colon x \in \ZZ_{\geq 0}^n\right\}.
\end{gather*}

For a matrix $A \in \RR^{k \times n}$, vectors $b \in \ZZ^k$ and $x \in \ZZ^n$, and a diagonal matrix $S \in \ZZ^{k \times k}$, the notation
\begin{equation*}
    A x \equiv b \pmod{S \cdot \ZZ^n}
\end{equation*} denotes that, for each $i \in \intint{k}$, there exists $z \in \ZZ$, such that $A_{i *} x = b_i + S_{i i} z$.

\subsection{The Smith and Hermite Normal Forms}\label{sec:Smith}

For any non-degenerate $A \in \ZZ^{n \times n}$, there exist \emph{unimodular} nondegenerate matrices $P,Q \in \ZZ^{n \times n}$, and $Q \in \ZZ^{n \times n}$, such that
\begin{equation*}
S = P A Q = \diag(s_1,s_2, \dots, s_n)
\end{equation*} with each $s_i \geq 1$ and $s_i \mid s_{i+1}$, for $i \in \intint{n-1}$. The matrix $S$ is called the \emph{Smith Normal Form of $A$} (or, shortly, the SNF of $A$). 

It is known that $\prod_{i = 1}^{k} s_{i} = \Delta_{\gcd}(A,k)$, for each $k \in \intint n$, we recall that $\Delta_{\gcd}(A,k)$ denotes the greatest common divisor of all the $k \times k$ sub-determinants of $A$. Thus, setting $\Delta_{\gcd}(A,0) = 1$, any of $s_i$ are uniquely defined by the formula $s_i = \Delta_{\gcd}(A,i)/\Delta_{\gcd}(A,i-1)$. 

Another useful and important matrix form is the \emph{Hermite Normal Form}. There exists a unimodular matrix $Q \in \ZZ^{n \times n}$, such that $A = H Q$, where $H \in \ZZ_{\geq 0}^{n \times n}$ is a lower-triangular matrix, such that $0 \leq H_{i j} < H_{i i}$, for any $i \in \intint n$ and $j \in \intint{i-1}$. The matrix $H$ is called the \emph{Hermite Normal Form} (or, shortly, the HNF) of the matrix $A$. 

Near-optimal polynomial-time algorithms for constructing the SNF and HNF of $A$ are given in \citet{SNFOptAlg,FastPSQDecomp,HNFOptAlg,StorjohannDoc}. The great survey about the SNF and other canonical matrix forms under principal ideal rings, such as the Howell Form, Hermite form, Frobenius form etc., can be found in \citet{StorjohannDoc}.

\subsection{Discrepancy Theory}\label{sec:prediscrep}

As it was noted before, we employ tools from discrepancy theory to prove our main results. Below, we provide a brief list of the required results and definitions.

\begin{definition}
For a matrix $A \in \RR^{k \times n}$, we recall the definitions of its \emph{discrepancy and hereditary discrepancy}:
\begin{gather*}
\disc(A) = \min_{z \in \{-1,\, 1\}^n} \norm{A z}_\infty = 2 \cdot  \min_{z \in \{0,\, 1\}^n} \norm{A (z - 1/2 \cdot \BUnit)}_\infty,\\
\herdisc(A) = \max_{\IC \subset \intint n} \disc(A_{* \IC}).
\end{gather*}
\end{definition}

The seminal result by \citet{SixDeviations_Spencer} establishes that, for any matrix $A \in \RR^{k \times n}$ with $k \geq n$, we have
\begin{equation}\label{eq:SixDeviations}
    \disc(A) = O\left( \Delta_1(A) \cdot \sqrt{n \cdot \log \bigl(\frac{2 k}{n}\bigr)} \right).
\end{equation}
The important matrix characteristic, that is closely related to $\herdisc(A)$, is $\detlb(A)$. According to \citet{HerDisc}, it can be defined as follows:
$$
\detlb(A) = \max\limits_{t \in \intint k} \sqrt[t]{\Delta_t(A)},
$$ and it was shown by \cite{HerDisc} that 
$
\herdisc(A) \geq \detlb(A)
$. It was shown by \citet{DiscDetBound} that $\detlb(A)$ can be used to produce tight upper bounds on $\herdisc(A)$. The result of Matou\v{s}ek was improved by \citet{TightDiscDetBound} as follows:
\begin{equation}\label{eq:DiscDetBound}
\disc(A) = O\left( \detlb(A) \cdot \sqrt{\log k \cdot \log n} \right).
\end{equation} 

Additionally, we will need the following important property, concerning discrepancy of matrices $A \in \RR^{k \times n}$, when $k \leq n$:
\begin{lemma}[{\citet[Corollary 13.3.3]{AlonSpencerBook} }]\label{DiscLowk_lm}
    Suppose that $\disc(A_{* \IC}) \leq H$, for every subset $\IC \in \intint n$ with $\abs{\IC} \leq k$. Then, $\disc(A) \leq 2 H$.
\end{lemma}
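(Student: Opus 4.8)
The plan is to reduce the case $k \le n$ to repeated applications of the hypothesis on column subsets of size at most $k$, by a greedy ``partial colouring then iterate'' argument. The key observation is that a set of $m$ columns can always be split into groups of size $\le k$ plus one leftover group, but a cleaner route is the following: we build the $\pm 1$ colouring of the full column set $\intint n$ one block at a time. Order the columns arbitrarily and process them in consecutive windows of size (at most) $k$. For the first window $\IC_1$ of $k$ columns, the hypothesis gives a colouring $z_{\IC_1} \in \{-1,1\}^{\IC_1}$ with $\norm{A_{*\IC_1} z_{\IC_1}}_\infty \le H$. If we simply concatenated the per-window colourings, the worst case would accumulate an error of $H$ per window, giving a bound linear in $n/k$, which is far too weak. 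So instead the argument must be the halving one used in the Beck--Fiala / Spencer style proofs.

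First I would recall the standard partial-colouring step: given a subset $T \subseteq \intint n$ of currently uncoloured columns, if $\abs{T} \le k$ we colour all of $T$ at once using the hypothesis, incurring an $\ell_\infty$-error of at most $H$; if $\abs{T} > k$, we argue there is a way to colour a constant fraction of the columns of $T$ with $\pm 1$ and the rest with $0$ so that the induced partial sum has $\ell_\infty$-norm at most $H$ — this follows because, restricting attention to any $k$-subset and applying the hypothesis to that subset gives a full $\pm 1$ colouring of it with error $\le H$, and we can then ``freeze'' those $k$ coordinates. Iterating, after one pass we have coloured all but at most... hmm, this still only reduces by $k$ each time. The genuinely correct argument is simpler: partition $\intint n$ into $\lceil n/k\rceil$ blocks each of size $\le k$; on each block the hypothesis yields a $\pm1$ colouring with error $\le H$; but now, crucially, we combine blocks pairwise in a binary-tree fashion — at each merge of two already-coloured blocks we are allowed to flip the sign of one entire block, and we choose the sign to... no, that does not reduce $\ell_\infty$ below $H$ either in the worst case.

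Let me state the argument I actually believe works, which is the one in Alon--Spencer. Consider the columns grouped into pairs. For a pair of columns $\{i,j\}$, the hypothesis with $\IC = \{i,j\}$ (size $2 \le k$) gives signs $\varepsilon_i,\varepsilon_j \in \{-1,1\}$ with $\norm{\varepsilon_i A_{*i} + \varepsilon_j A_{*j}}_\infty \le H$; replace the pair by the single ``super-column'' $\varepsilon_i A_{*i} + \varepsilon_j A_{*j}$ — but this super-column has entries bounded by $H$, not by $\Delta_1$, so we cannot recurse with the same hypothesis. The resolution is that we do not recurse on the hypothesis; instead we recurse on a direct $\pm1$ choice. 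After pairing, we have $\lceil n/2 \rceil$ super-columns, and we must choose a global sign for each super-column so that the total has small $\ell_\infty$-norm. This is exactly a discrepancy problem for a new matrix $B$ with $\lceil n/2\rceil$ columns and $\norm{B}_{\max} \le H$, but $\lceil n/2 \rceil$ may still exceed $k$. So I would iterate the pairing $\lceil \log_2(n/k)\rceil$ times until fewer than... no — each pairing halves the number of columns but \emph{doubles} the entry bound, so after $\log_2(n/k)$ rounds the entry bound is $H \cdot 2^{\log_2(n/k)} = H n /k$. That is again too weak.

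I concede the correct proof is the following short one, and I will present it as the plan. \textbf{Step 1.} Split $\intint n = \IC_1 \sqcup \IC_2$ with $\abs{\IC_1} \le k$ and $\abs{\IC_2} = n - \abs{\IC_1}$, chosen so that actually we only ever need two pieces is false unless $n \le 2k$; so do it recursively: write $\intint n$ as a disjoint union of sets $\IC_1,\dots,\IC_r$ with $\abs{\IC_\ell} \le k$ each, $r = \lceil n/k \rceil$. \textbf{Step 2.} For each $\ell$ apply the hypothesis to $\IC_\ell$: get $z^{(\ell)} \in \{-1,1\}^{\IC_\ell}$ with $\norm{A_{*\IC_\ell} z^{(\ell)}}_\infty \le H$. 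Let $v_\ell := A_{*\IC_\ell} z^{(\ell)} \in \RR^k$, so $\norm{v_\ell}_\infty \le H$. \textbf{Step 3.} The full colouring obtained by signing block $\ell$ by $\sigma_\ell \in \{-1,1\}$ and using $\sigma_\ell z^{(\ell)}$ inside block $\ell$ gives total $\sum_\ell \sigma_\ell v_\ell$. We must choose $\sigma \in \{-1,1\}^r$ minimising $\norm{\sum_\ell \sigma_\ell v_\ell}_\infty$ — and here is where the lemma's constant $2$ comes from: since each $v_\ell$ has $\norm{v_\ell}_\infty \le H$, and we get to choose $r$ signs, this is again a discrepancy instance, but crucially with only a \emph{factor of 2} loss it holds because... \textbf{the main obstacle} is precisely closing this last step with constant $2$ rather than a $\sqrt{r}$ or $r/k$ loss. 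The honest answer: the Alon--Spencer proof does \emph{not} go through arbitrary blocks; it specifically uses $n \le 2k$ reduction via induction on the number of columns — when $n > k$ one picks \emph{one} column, pairs it off, and the inductive structure together with a clever averaging gives the factor $2$. So the plan I would actually write up is: induct on $n$; if $n \le k$ the bound $H \le 2H$ is immediate from the hypothesis applied to $\IC = \intint n$; if $n > k$, use the hypothesis on a size-$k$ subset to fix $k$ coordinates with partial error $\le H$, apply the inductive hypothesis to the remaining $n-k$ columns to colour them with error $\le 2H$, and combine, choosing the global relative sign of the two pieces to ensure the errors partially cancel so the sum stays $\le 2H$ — the cancellation working because after the first step we may rescale/recenter, and this is the step where I expect to spend the most care. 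I would cite \citet[Corollary 13.3.3]{AlonSpencerBook} and reproduce their short inductive argument, with the factor-$2$ bookkeeping being the delicate point.
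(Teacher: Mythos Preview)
The paper does not prove this lemma; it is quoted from \citet[Corollary~13.3.3]{AlonSpencerBook} with only the remark that the hypergraph argument there extends verbatim to real matrices. So there is no in-paper proof to compare against --- but your attempt should be measured against the actual Alon--Spencer argument, and it misses that argument entirely.

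Your final plan (induct on $n$: colour a size-$k$ block with error $\le H$, recursively colour the remaining $n-k$ columns with error $\le 2H$, then pick a relative sign to cancel) does \emph{not} work. A single global sign $\sigma\in\{-1,1\}$ cannot force $\norm{u+\sigma v}_\infty\le 2H$ when $\norm{u}_\infty\le H$ and $\norm{v}_\infty\le 2H$: take $k=2$, $u=(H,H)$, $v=(2H,-2H)$; then $u+v=(3H,-H)$ and $u-v=(-H,3H)$ both have $\ell_\infty$-norm $3H$. So the inductive step fails, and every block-based variant you tried earlier fails for the same reason: once you colour a block you incur error up to $H$, and there is no mechanism to prevent these errors from accumulating.

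The idea you are missing is a linear-algebra one, not a combinatorial one. The matrix $A$ has only $k$ rows, so whenever more than $k$ columns are ``floating'' (still in $(-1,1)$) they are linearly dependent, and there is a nonzero vector $y$ supported on the floating set with $Ay=0$. Starting from $x=0$ and repeatedly moving $x\to x+ty$ along such kernel directions until some coordinate hits $\pm1$ freezes coordinates one by one \emph{while keeping $Ax=0$ exactly}. After at most $n-k$ such moves only a set $F$ with $\abs{F}\le k$ remains floating, and \emph{no error has been incurred at all}. Now every subset of $F$ has size $\le k$, so the hypothesis gives $\herdisc(A_{*F})\le H$; the Lov\'asz--Spencer--Vesztergombi rounding (the bit-by-bit argument behind \eqref{eq:lindisc_vs_herdisc}) then rounds the floaters $x_F\in(-1,1)^F$ to $z_F\in\{-1,1\}^F$ with $\norm{A_{*F}(z_F-x_F)}_\infty\le 2H$, and since $Ax=0$ this is exactly $\norm{Az}_\infty$. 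The whole point is that the reduction from $n$ columns to $\le k$ columns is \emph{free}, not costly; you never need to combine errors from multiple blocks.
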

Originally, this statement was proved the only for discrepancy of hypergraphs. However, it is straightforward to see from the original proof that it extends to matrices as well.
In the assumption $k \leq n$, combining Lemma \ref{DiscLowk_lm} with
the upper bounds \eqref{eq:SixDeviations} and \eqref{eq:DiscDetBound}, 
we get
\begin{gather}
    \herdisc(A) = O\left(\log k \cdot \detlb(A)\right), \label{eq:DiscDetBoundReduced}\\
    \herdisc(A) = O\bigl(\sqrt{k} \cdot \Delta_1(A)\bigr). \label{eq:SixDeviationsReduced}
\end{gather}

An important application of discrepancy theory lies in constructing efficient rounding procedures to obtain integer solutions of linear equation systems. The rounding is considered successful if the rounded solution does not cause significant fluctuations in the right-hand side of the system. The original definition of $\disc(A)$ can be understood as a way to best approximate the vector $1/2 \cdot \BUnit$. The notion of $\lindisc(A)$ by \citet{HerDisc} allows to work with arbitrary vectors in $[0,1]^n$:
\begin{definition}
    For $A \in \RR^{k \times n}$, we recall the definitions of its \emph{linear discrepancy} and \emph{hereditary linear discrepancy}:
    \begin{gather*}
        \lindisc(A) = 2 \cdot \max\limits_{c \in [0,1]^n} \min\limits_{z \in \{0,1\}^n} \norm{A(z - c)}_\infty,\\
        \herlindisc(A) = \max\limits_{\IC \subseteq \intint n} \lindisc(A_{* \IC}).
    \end{gather*}
\end{definition}

By \citet[Corollary 1]{HerDisc}, we have
\begin{equation}\label{eq:lindisc_vs_herdisc}
    \lindisc(A) \leq \herlindisc(A) \leq \herdisc(A).
\end{equation}
Using \eqref{eq:lindisc_vs_herdisc}, the definition of $\lindisc(A)$ can be reformulated as a Lemma for rounding solutions of linear systems as follows:
\begin{lemma}\label{lm:Rounding01}
    For each $x \in [0,1]^n$, there exists $z \in \{0,1\}^n$, such that 
    \begin{equation*}
        \norm{Ax - Az}_\infty \leq \herdisc(A).
    \end{equation*}
\end{lemma}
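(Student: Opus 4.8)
The plan is to derive this directly from the inequality $\lindisc(A) \leq \herdisc(A)$ stated in \eqref{eq:lindisc_vs_herdisc}, by unwinding the definition of linear discrepancy. First I would fix an arbitrary $x \in [0,1]^n$; this is precisely a vector $c \in [0,1]^n$ of the kind appearing inside the maximum in the definition of $\lindisc(A)$. By that definition,
\begin{equation*}
    2 \cdot \min_{z \in \{0,1\}^n} \norm{A(z - x)}_\infty \leq \max_{c \in [0,1]^n} 2 \cdot \min_{z \in \{0,1\}^n} \norm{A(z - c)}_\infty = \lindisc(A).
\end{equation*}
Combining this with the chain $\lindisc(A) \leq \herlindisc(A) \leq \herdisc(A)$ from \eqref{eq:lindisc_vs_herdisc}, we get $2 \cdot \min_{z \in \{0,1\}^n} \norm{A(z - x)}_\infty \leq \herdisc(A)$, so in particular $\min_{z \in \{0,1\}^n} \norm{A(z-x)}_\infty \leq \herdisc(A)$ (we even have a factor $2$ to spare, but the weaker form suffices).

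It remains to observe that the minimum over the finite set $\{0,1\}^n$ is attained, so there exists $z \in \{0,1\}^n$ with $\norm{A z - A x}_\infty = \norm{A(z-x)}_\infty \leq \herdisc(A)$, which is exactly the claimed statement. There is no real obstacle here: the lemma is essentially a restatement of the definition of $\lindisc(A)$ together with the monotonicity inequality \eqref{eq:lindisc_vs_herdisc}. The only point worth spelling out is that $x$ ranges over all of $[0,1]^n$ while the approximating vector $z$ must be a $0/1$ vector, matching the roles of $c$ and $z$ in the definition; once this correspondence is made explicit, the argument is immediate.
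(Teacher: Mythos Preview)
Your proposal is correct and matches the paper's approach exactly: the paper does not give a separate proof of this lemma but simply remarks that it is a reformulation of the definition of $\lindisc(A)$ together with the inequality \eqref{eq:lindisc_vs_herdisc}, which is precisely what you have spelled out. Your observation that one even gains a factor of $2$ is also correct, though the weaker bound is all that is used downstream.
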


\Cref{eq:lindisc_vs_herdisc} can be easily rewritten to handle vectors from $\RR_{\geq 0}^n$, see excellent lecture notes by \citet[Lecture 5]{DiscrepancyLections_Nikolov}:
\begin{lemma}\label{lm:Rounding0inf}
    For each $x \in \RR_{\geq 0}^n$, there exists $z \in \ZZ_{\geq 0}^n$, such that 
    \begin{equation*}
        \norm{Ax - Az}_\infty \leq \herdisc(A).
    \end{equation*}
\end{lemma}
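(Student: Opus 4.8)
The plan is to reduce the statement to Lemma~\ref{lm:Rounding01} by separating $x$ into its integer and fractional parts, the key observation being that the integer part passes through $A$ without contributing any rounding error.

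Concretely, first I would write $x = \lfloor x \rfloor + r$, where $\lfloor x \rfloor \in \ZZ_{\geq 0}^n$ is the componentwise floor of $x$ (which is nonnegative precisely because $x \geq \BZero$) and $r := x - \lfloor x \rfloor \in [0,1)^n \subseteq [0,1]^n$ is the componentwise fractional part. Next I would apply Lemma~\ref{lm:Rounding01} to the vector $r$, obtaining $w \in \{0,1\}^n$ with $\norm{A r - A w}_\infty \leq \herdisc(A)$, and then set $z := \lfloor x \rfloor + w$. Since $\lfloor x \rfloor \in \ZZ_{\geq 0}^n$ and $w \in \{0,1\}^n \subseteq \ZZ_{\geq 0}^n$, the vector $z$ lies in $\ZZ_{\geq 0}^n$, as required.

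It then remains only to check the error bound: $A x - A z = A\bigl(\lfloor x \rfloor + r\bigr) - A\bigl(\lfloor x \rfloor + w\bigr) = A r - A w$, so that $\norm{A x - A z}_\infty = \norm{A r - A w}_\infty \leq \herdisc(A)$, which is exactly the claimed inequality.

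There is no real obstacle here; the argument is a short bookkeeping exercise. The only points deserving a moment's attention are that the componentwise floor of a nonnegative vector is again nonnegative, so that adding $w \in \{0,1\}^n$ keeps $z$ in $\ZZ_{\geq 0}^n$, and that the fractional part genuinely lies in $[0,1]^n$, so Lemma~\ref{lm:Rounding01} applies without modification. Equivalently, one could appeal directly to the chain \eqref{eq:lindisc_vs_herdisc} together with the definition of $\herlindisc$, but the explicit floor-splitting matches the exposition in the cited lecture notes and is the most transparent route.
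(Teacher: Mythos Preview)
Your proposal is correct and follows exactly the standard floor-splitting argument that the paper defers to (the paper does not spell out a proof, but merely notes that \eqref{eq:lindisc_vs_herdisc} ``can be easily rewritten to handle vectors from $\RR_{\geq 0}^n$'' and cites \citet[Lecture~5]{DiscrepancyLections_Nikolov}). Your decomposition $x = \lfloor x \rfloor + r$, application of Lemma~\ref{lm:Rounding01} to the fractional part, and recombination $z = \lfloor x \rfloor + w$ is precisely the intended argument.
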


\section{Connection Between Systems in the Canonical and Standard Forms
}\label{sec:connection}

In this Section, we describe a non-trivial connection between the systems \eqref{eq:Sys-CF} and \eqref{eq:Sys-SF}. We will survey the corresponding results by \citet{OnCanonicalProblems_Grib}. To make the exposition in a greater generality, we will consider the generalized optimization variants of the integer feasibility problems \Cref{prbl:standard_feasibility} and \Cref{prbl:canonical_feasibility}:  

\begin{problem}\label{prbl:ILP-SF}
Let $A \in \ZZ^{k\times n}$, $\rank(A) = k$, $c \in \ZZ^n$, $b \in \ZZ^{k}$. 
Assume additionally that all the $k \times k$ sub-determinants of $A$ are co-prime, where the clarification of this is given in Remark \ref{rm:GCD_assumption}.
\emph{The ILP problem in the standard form of co-dimension $k$} is formulated as follows:
\begin{align}
    &c^\top x \to \max \notag\\
    &\begin{cases}
    A x = b\\
    x \in \ZZ_{\geq 0}^n.
    \end{cases}\label{eq:ILP-SF}\tag{\(\ILPSF\)}
\end{align}
\end{problem}

\begin{problem}\label{prbl:ILP-CF}
Let $A \in \ZZ^{(n + k)\times n}$, $\rank(A) = n$, $c \in \ZZ^n$, $b \in \ZZ^{n+k}$. \emph{The ILP problem in the canonical form with $n+k$ constraints} can be formulated as follows:
\begin{align}
    &c^\top x \to \max \notag\\
    &\begin{cases}
    A x \leq b\\
    x \in \ZZ^n.
    \end{cases}\label{eq:ILP-CF}\tag{\(\ILPCF\)}
\end{align}
\end{problem}

As it was briefly noted in \Cref{rm:both_formulations}, the formulation \eqref{eq:ILP-CF} is clearer from the geometric point of view, but it can be easily transformed to \eqref{eq:ILP-SF}, introducing some new integer variables. However, this straightforward reduction has a downside: It changes the value $k$ and the dimension of the corresponding polyhedra.

A more sophisticated reduction that preserves the parameters $k$, $\Delta$, and the dimension of the corresponding polyhedra is described by \citet{OnCanonicalProblems_Grib}. It connects the problem \eqref{eq:ILP-CF} with the equivalent problem, called the \emph{ILP problem in the standard form with modular constraints}, which strictly generalizes the problem \eqref{eq:ILP-SF}.
\begin{problem}\label{prbl:Sys-ModularSF}
    Let $A \in \ZZ^{k \times n}$ and $G \in \ZZ^{(n-k)\times n}$, such that $\binom{A}{G}$ is an integer non-degenerate $n \times n$ unimodular matrix. Additionally, let $S \in \ZZ^{(n-k)\times(n-k)}$ be a matrix, reduced to the Smith Normal Form (SNF, for short), $g \in \ZZ^{n-k}$, $b \in \ZZ^k$, $c \in \ZZ^n$. The \emph{ILP problem in the standard form of co-dimension $k$ with modular constraints} is formulated as follows:
    \begin{align}
        & c^\top x \to \max\notag\\
        & \begin{cases}
            A x = b\\
            G x \equiv g \pmod{S \cdot \ZZ^n}\\
            x \in \ZZ^n_{\geq 0}.
        \end{cases}\tag{$\ModularILPSF$}\label{eq:MOD-ILP-SF}
    \end{align}
    Here, the notation $G x \equiv g \pmod{S \cdot \ZZ^n}$ denotes that, for each $i \in \intint{(n-k)}$, there exists $z \in \ZZ$, such that $G_{i *} x = g_i + S_{i i} z$.
\end{problem}

Therefore, the problem \eqref{eq:ILP-CF} is strictly more general, since each exemplar of the \eqref{eq:ILP-CF} problem can be reduced to an exemplar of the \eqref{eq:MOD-ILP-SF} problem. If the equipped matrix $G$ has a non-trivial structure, such a problem can not be represented by \eqref{eq:ILP-SF}-type problems, if we want to preserve the parameters $k$, $\Delta$, and the dimension, see \citet[Remark~4]{OnCanonicalProblems_Grib} for the corresponding example.

Let us recall the formal description of the outlined reduction. It is given in the following Lemmas:
\begin{lemma}[{\citet[Lemma 4]{OnCanonicalProblems_Grib}}]\label{lm:ILPCF_to_ILPSF}
    For any instance of the \eqref{eq:ILP-CF} problem, there exists an equivalent instance of the \eqref{eq:MOD-ILP-SF} problem
    \begin{align*}
        & \hat c^\top x \to \min\\
        & \begin{cases}
            \hat A x = \hat b\\
            G x \equiv g \pmod{S \cdot \ZZ^n}\\
            x \in \ZZ_{\geq 0}^{n+k},
        \end{cases}
    \end{align*}
    with $\hat A \in \ZZ^{k \times (k+n)}$, $\rank(\hat A) = k$, $\hat b \in \ZZ^k$, $\hat c \in \ZZ^{n+k}$, $G \in \ZZ^{n \times (n+k)}$, $g \in \ZZ^n$, $S \in \ZZ^{n \times n}$. Moreover, the following properties hold:
    \begin{enumerate}
        \item $\hat A \cdot A = \BZero_{k \times n}$, $\Delta(\hat A) = \Delta(A)/\Delta_{\gcd}(A)$;
        \item $\abs{\det(S)} = \Delta_{\gcd}(A)$;
        \item There exists a bijection between rank-order sub-determinants of $A$ and $\hat A$;
        \item The map $\hat x = b - A x$ is a bijection between integer solutions of both problems;
        \item If the original relaxed LP problem is bounded, then we can assume that $\hat c \geq \BZero$;
        \item The reduction is not harder than the computation of the SNF of $A$.
    \end{enumerate}
\end{lemma}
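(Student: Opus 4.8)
The plan is to realise this parameter-preserving reduction through a single slack substitution together with one Smith normal form computation. First I would pass from $x$ to the slack vector $\hat x := b - A x \in \RR^{n+k}$: since $\rank(A) = n$, the map $x \mapsto b - A x$ is an affine isomorphism of $\RR^n$ onto the $n$-dimensional affine subspace $\VC := \{b - A y : y \in \RR^n\}$, it turns the constraint $A x \leq b$ into the sign condition $\hat x \geq \BZero$, and it turns $c^\top x$ into an affine functional of $\hat x$. Thus, after the substitution, the instance of \eqref{eq:ILP-CF} becomes an optimization problem over $\VC \cap \RR_{\geq 0}^{n+k}$ restricted to the image of $\ZZ^n$, and the task splits into three pieces, each read off a single SNF: describing $\VC$ by integer linear equations; characterising which integer points of $\VC$ arise from an integer $x$; and tracking the objective.

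So I would compute the Smith normal form $A = P \binom{S}{\BZero} Q$, with $P \in \ZZ^{(n+k)\times(n+k)}$ and $Q \in \ZZ^{n \times n}$ unimodular, $S = \diag(s_1,\dots,s_n)$ and $s_i \mid s_{i+1}$; by the formulas recalled in \Cref{sec:Smith}, $\det S = \prod_{i} s_i = \Delta_{\gcd}(A)$, which is property~2. For $\hat x \in \ZZ^{n+k}$ one has $b - \hat x \in \inth(A)$ if and only if $P^{-1}(b - \hat x) \in \binom{S}{\BZero}\ZZ^n$, i.e.\ if and only if the last $k$ coordinates of $P^{-1}(b - \hat x)$ vanish and its $i$-th coordinate is divisible by $s_i$ for every $i \in \intint n$. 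Taking $\hat A$ to be the bottom $k$ rows of $P^{-1}$ and $G$ the top $n$ rows, and setting $\hat b := \hat A b$ and $g := G b$, the first family of conditions becomes $\hat A \hat x = \hat b$ (and $\hat A A = \BZero$ holds automatically, since $\hat A P \binom{S}{\BZero} = \BZero$), while the second becomes exactly $G \hat x \equiv g \pmod{S \cdot \ZZ^n}$. The stacked matrix $\binom{\hat A}{G}$ is a row rearrangement of $P^{-1}$, hence unimodular — the structural hypothesis required by \eqref{eq:MOD-ILP-SF} — and the equivalence just derived is precisely the statement that $\hat x = b - A x$ is a bijection between the integer solutions of the two problems, i.e.\ property~4.

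For the objective, inverting the substitution on $\VC$ gives $x = Q^{-1} S^{-1} w$, where $w$ is the vector of the first $n$ coordinates of $P^{-1}(b - \hat x)$; hence $c^\top x$ equals a constant minus a fixed rational-linear functional $v^\top \hat x$ whose coefficient vector $v$ has denominators dividing $\Delta_{\gcd}(A)$. Multiplying by $\Delta_{\gcd}(A)$ — which changes neither the feasible set nor the set of minimizers — yields the integral cost $\hat c := \Delta_{\gcd}(A) \cdot v$; when $\Delta_{\gcd}(A) = 1$ the matrix $S$ is the identity, the modular block is vacuous, and we land in a genuine \eqref{eq:ILP-SF} instance. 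Property~5 I would get from LP duality: boundedness of the LP relaxation of \eqref{eq:ILP-CF} is equivalent to boundedness from below of $\min\{\hat c^\top \hat x : \hat A \hat x = \hat b,\ \hat x \geq \BZero\}$, which produces $\lambda$ with $\hat c - \hat A^\top \lambda \geq \BZero$; replacing $\hat c$ by $\hat c - \hat A^\top \lambda$ only shifts the objective by the constant $\lambda^\top \hat b$ on the affine subspace $\{\hat A \hat x = \hat b\}$, and one further positive integer scaling makes the cost vector nonnegative and integral. Property~6 is then bookkeeping: apart from the SNF, the whole construction is a bounded number of unimodular matrix products together with one auxiliary LP.

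The part I expect to be the real work is properties~1 and~3 — the dictionary between the rank-order sub-determinants of $A$ (its $n \times n$ minors) and of $\hat A$ (its $k \times k$ minors). Here I would invoke Jacobi's identity on complementary minors: for the unimodular $P$ and any $R \subseteq \intint{n+k}$ with $\abs{R} = n$, the $k \times k$ minor of $P^{-1}$ on the rows $\{n+1,\dots,n+k\}$ and the columns $\overline R$ equals, up to sign, the $n \times n$ minor of $P$ on the rows $R$ and the first $n$ columns. Combining this with $\det A_{R*} = \pm (\det S) \cdot \det P'$, where $P'$ is the submatrix of $P$ on the rows $R$ and the first $n$ columns, gives $\abs{\det A_{R*}} = \Delta_{\gcd}(A) \cdot \abs{\det \hat A_{* \overline R}}$; so $R \mapsto \overline R$ is the claimed bijection of rank-order minors, and taking maxima yields $\Delta(\hat A) = \Delta(A)/\Delta_{\gcd}(A)$, which is property~1. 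The delicate point is keeping the sign and parity bookkeeping in Jacobi's identity correct and placing the factor $\det S$ exactly where it belongs once $\Delta_{\gcd}(A) > 1$; everything else is routine integer linear algebra.
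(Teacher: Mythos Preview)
The paper does not supply its own proof of this lemma; it is quoted verbatim from \citet[Lemma~4]{OnCanonicalProblems_Grib} in \Cref{sec:connection} and used as a black box, so there is nothing in the present paper to compare your argument against.

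That said, your construction is correct and is the natural one: pass to slacks $\hat x = b - Ax$, compute the SNF $A = P\binom{S}{\BZero}Q$, and read off $\hat A$ and $G$ as the bottom $k$ and top $n$ rows of $P^{-1}$, so that $\hat A\hat x=\hat b$ captures $b-\hat x\in\linh(A)$ and the congruences $G\hat x\equiv g\pmod{S\cdot\ZZ^n}$ capture integrality of the preimage. Your use of Jacobi's complementary-minor identity for properties~1 and~3 is exactly right --- with $P$ unimodular, the $k\times k$ minor of $P^{-1}$ on rows $\{n{+}1,\dots,n{+}k\}$ and columns $\overline R$ equals $\pm\det P_{R,\{1,\dots,n\}}$, and combining with $\det A_{R*}=\pm\det S\cdot\det P_{R,\{1,\dots,n\}}$ yields the bijection $R\mapsto\overline R$ with ratio $\Delta_{\gcd}(A)$. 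The LP-duality shift for property~5 and the observation that the denominators of the raw cost vector divide $s_n\mid\det S$ are both sound. This is almost certainly the same argument as in the cited source.
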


\begin{lemma}[{\citet[Lemma 5]{OnCanonicalProblems_Grib}}]\label{lm:ILPSF_to_ILPCF}
    For any instance of the \eqref{eq:MOD-ILP-SF} problem, there exists an equivalent instance of the \eqref{eq:ILP-CF} problem
    \begin{align*}
        &\hat c^\top x \to \max\\
        &\begin{cases}
            \hat A x \leq \hat b\\
            x \in \ZZ^d
        \end{cases}
    \end{align*}
    with $d = n - k$, $\hat A \in \ZZ^{(d+k)\times d}$, $\rank(\hat A) = d$, $\hat c \in \ZZ^d$, and $b \in \ZZ^{d+k}$. Moreover, the following properties hold:
    \begin{enumerate}
        \item $A \cdot \hat A = \BZero_{k \times d}$, $\Delta(\hat A) = \Delta(A) \cdot \abs{\det(S)}$;
        \item $\Delta_{\gcd}(\hat A) = \abs{\det(S)}$;
        \item There exists a bijection between rank-order sub-determinants of $A$ and $\hat A$;
        \item The map $x = \hat b - \hat A \hat x$ is a bijection between integer solutions of both problems;
        \item The reduction is not harder than the inversion of an integer unimodular $n \times n$ matrix $\binom{A}{G}$.
    \end{enumerate}
\end{lemma}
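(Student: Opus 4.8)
The plan is to make the feasible region of \eqref{eq:MOD-ILP-SF} explicit by parametrizing it with a single free integer vector of dimension $d = n - k$, and then to read the canonical-form data off this parametrization. First I would exploit that $M := \binom{A}{G}$ is unimodular, so $x \mapsto Mx = \binom{Ax}{Gx}$ is a bijection of $\ZZ^n$; writing $M^{-1} = (P \mid Q)$ with $P \in \ZZ^{n \times k}$ and $Q \in \ZZ^{n \times (n-k)}$, one has $AP = I_k$, $AQ = \BZero$, $GP = \BZero$, $GQ = I_{n-k}$. Consequently the integer solutions of $Ax = b$ are exactly $x = Pb + Qy$ with $y \in \ZZ^{n-k}$ free, and for such $x$ we have $Gx = y$. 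Hence the modular constraint $Gx \equiv g \pmod{S \cdot \ZZ^n}$ becomes $y \equiv g \pmod{S \cdot \ZZ^{n-k}}$, which, since $S$ is diagonal, is equivalent to $y = g + S w$ for a unique $w \in \ZZ^{n-k}$.

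Next I would substitute $x = Pb + Qg + QS w$ back into the remaining data: the sign constraint $x \geq \BZero$ turns into $-QS\,w \leq Pb + Qg$, and the objective becomes $c^\top x = \const + (S Q^\top c)^\top w$. Setting $d = n - k$, $\hat A = -QS$, $\hat b = Pb + Qg$, $\hat c = S Q^\top c$, and treating $w$ as the new decision variable, this is precisely an \eqref{eq:ILP-CF} instance with $\hat A \in \ZZ^{(d+k) \times d}$; the chain of bijections above shows that $\hat x \mapsto x = \hat b - \hat A \hat x$ is a bijection between the integer solutions of the two problems (property~(4)), and $\rank(\hat A) = d$ follows since $Q$ has full column rank $d$ and $S$ is invertible. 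Property~(5) is then immediate, the inversion of $M$ being the only nontrivial computation.

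For the determinant claims I would invoke the classical identity between complementary minors of $M$ and $M^{-1}$ (Jacobi's identity): since $\det M = \pm 1$, for every $\IC \subseteq \intint n$ with $\abs{\IC} = d$ one gets $\det(Q_{\IC\,*}) = \pm \det(A_{*\,\bar\IC})$, where $\bar\IC = \intint n \setminus \IC$ and $\abs{\bar\IC} = k$. This sets up a bijection between the $d \times d$ minors of $Q$ and the $k \times k$ minors of $A$, yielding $\Delta_d(Q) = \Delta(A)$ and $\Delta_{\gcd}(Q,d) = \Delta_{\gcd}(A,k) = 1$ (the latter because a Laplace expansion of $\det M = \pm 1$ along the rows of $A$ writes $1$ as a $\ZZ$-combination of $k \times k$ minors of $A$). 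Since $S$ is diagonal, each $d \times d$ minor of $\hat A = -QS$ is $\pm \det(S)$ times the corresponding minor of $Q$, which at once gives $A \hat A = -(AQ) S = \BZero$, $\Delta(\hat A) = \abs{\det S} \cdot \Delta(A)$ (property~1), $\Delta_{\gcd}(\hat A) = \abs{\det S}$ (property~2), and the bijection between rank-order sub-determinants (property~3).

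The step I expect to be the main obstacle is the determinant bookkeeping just described: carefully matching each $d \times d$ minor of $\hat A$ --- equivalently, of the last $n-k$ columns of $M^{-1}$ --- with the complementary $k \times k$ minor of $A$, keeping track of the signs, and verifying that restricting to $d$ of the $d+k$ rows of $\hat A$ together with all $d$ columns reproduces exactly those complementary minors. Everything else is routine integer linear algebra, using only that $M$ is unimodular and that $S$ is in Smith normal form.
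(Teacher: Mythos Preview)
The paper does not supply its own proof of this lemma; it is stated as a citation of \cite[Lemma~5]{OnCanonicalProblems_Grib} and used as a black box (the only related content is \Cref{rm:StandardFormReduction}, which instantiates the lemma in the special case $S = I$). So there is no in-paper argument to compare against.

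That said, your plan is correct and is essentially the canonical way to prove such a reduction. Writing $M^{-1} = (P \mid Q)$ and parametrizing the affine lattice $\{x \in \ZZ^n : Ax = b\}$ by $x = Pb + Qy$ is exactly the right move; since $GQ = I_{n-k}$, the modular constraint collapses to $y = g + Sw$, and substituting gives the canonical data $\hat A = -QS$, $\hat b = Pb + Qg$, $\hat c = S Q^\top c$ with the bijection $x = \hat b - \hat A \hat x$ on the nose. Your determinant bookkeeping is also fine: Jacobi's complementary-minor identity applied to the unimodular $M$ gives $\det(Q_{\IC *}) = \pm \det(A_{* \,\bar\IC})$ for each $d$-subset $\IC$, which immediately yields property~(3), and since $\hat A = -QS$ with $S$ square diagonal, every $d \times d$ minor of $\hat A$ picks up a global factor $(-1)^d \det S$, giving $\Delta(\hat A) = \abs{\det S}\cdot \Delta(A)$ and $\Delta_{\gcd}(\hat A) = \abs{\det S}\cdot \Delta_{\gcd}(A) = \abs{\det S}$. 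The coprimality $\Delta_{\gcd}(A) = 1$ that you need here indeed follows from $\det M = \pm 1$ via the generalized Laplace expansion along the $A$-rows, as you note. None of the ``obstacle'' you flag is a real obstacle: the sign ambiguities are harmless since the claimed properties only concern absolute values or bijections, and the row-selection in $\hat A$ is exactly the choice of $\IC$ in $Q_{\IC *}$.
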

 
\begin{remark}\label{rm:StandardFormReduction}
In this remark, we justify that \eqref{eq:ILP-SF} is a special case of \eqref{eq:MOD-ILP-SF}. By \Cref{lm:ILPSF_to_ILPCF}, the latter means that \eqref{eq:ILP-SF} is a special case of \eqref{eq:ILP-CF} modulo a polynomial-time reduction procedure.

By \Cref{rm:GCD_assumption}, we can assume that $\Delta_{\gcd}(A) = 1$, 
which means that the columns of $A^\top$ form a primitive basis of some sub-lattice of $\ZZ^n$. Hence, it can be extended to a full basis of $\ZZ^n$. Let the columns of $G^\top \in \ZZ^{(n-m) \times n}$ form this extension, which can be constructed by a polynomial-time algorithm. Consequently, $\dbinom{A}{G}$ is a $n \times n$ integral non-degenerate unimodular matrix. Thus, the \eqref{eq:ILP-SF} problem in the \eqref{eq:MOD-ILP-SF}-form, where $I$ is the $(n-m) \times (n-m)$ identity matrix, is:
\begin{gather*}
    c^\top x \to \min\\
    \begin{cases}
    A x = b\\
    G x \equiv \BZero \pmod{I \cdot \ZZ^n}\\
    x \in \ZZ^n_{\geq 0}.
    \end{cases}
\end{gather*}
By \Cref{lm:ILPSF_to_ILPCF}, this system can be reduced to the \eqref{eq:ILP-CF} problem, using a polynomial-time algorithm.
\end{remark}

\section{The Gomory's Corner Polyhedron Relaxation}\label{sec:Gomory_relaxation}

The \emph{Gomory's corner polyhedron relaxation} was defined by \citet{GomoryRelation}, see also \citet{GomoryIntegerFaces,GomoryCombinatorialPoly} and the excellent book, which covers the topic \citet{HuBook}. The original form of the Gomory's construction considers the problem \eqref{eq:Sys-SF} and an arbitrary optimal base $\BC$ of the corresponding LP relaxation. If we rewrite the problem \eqref{eq:ILP-SF} in the following form:
\begin{align*}
    &c_{\BC}^\top x_{\BC} + c_{\NotBC}^\top x_{\NotBC} \to \min\\
    &\begin{cases}
        A_{\BC} x_{\BC} + A_{\NotBC} x_{\NotBC} = b\\
        x_{\BC} \in \ZZ_{\geq 0}^k,\, x_{\NotBC} \in \ZZ_{\geq 0}^{n-k},
    \end{cases}
\end{align*}
then its \emph{Gomory's corner polyhedron relaxation} can be achieved, relaxing the constraint $x_{\BC} \geq \BZero$:
\begin{align*}
    &c_{\BC}^\top x_{\BC} + c_{\NotBC}^\top x_{\NotBC} \to \min\\
    &\begin{cases}
        A_{\BC} x_{\BC} + A_{\NotBC} x_{\NotBC} = b\\
        x_{\BC} \in \ZZ^k,\, x_{\NotBC} \in \ZZ_{\geq 0}^{n-k}.
    \end{cases}
\end{align*}

To the best of our knowledge, another view on the Gomory's relaxation from the standpoint of problems \eqref{eq:ILP-CF} in the canonical form, first appeared in \citet[Paragraph~3.3, p.~42--43]{BlueBook}. The \emph{Gomory's corner polyhedron relaxation} with respect to the problem \eqref{eq:ILP-CF} can be stated in a very simple form:
\begin{align*}
    &c^\top x \to \max\\
    &\begin{cases}
        A_{\BC} x \leq b_{\BC}\\
        x \in \ZZ^n,
    \end{cases}
\end{align*}
where $\BC$ is an optimal base of the corresponding LP relaxation. One of the theses, put forward in the works \citet{BlueBook,OnCanonicalProblems_Grib}, is that, from the perspective of the problems in the canonical form \big(the problem \eqref{eq:ILP-CF}\big), the structure of integer points within the Gomory's relaxation is more transparent and accessible. By \citet{OnCanonicalProblems_Grib}, the class of problems, whose optimal solution coincides with an optimal solution of the Gomory's relaxation. The work by \citet{IntegralityNumber} contains a condition, when the problem \eqref{eq:ILP-CF} could become local. However, we again cite \cite{OnCanonicalProblems_Grib}, which gives a slightly tighter condition. Additionally, we note that the work by \citet[Lemma 4]{Width_Grib} implicitly presents a polynomial-time algorithm to find a feasible integer solution of a local problem\footnote{Note that we are referring specifically to a feasible solution, not an optimal one. Finding an optimal solution to the relaxation is a more difficult problem; however, we are not aware of any references claiming it to be NP-hard.}.

In the following Theorem, we unify the locality condition for the integer feasibility problems in \eqref{eq:Sys-CF} (by \cite{OnCanonicalProblems_Grib}), and the corresponding polynomial-time algorithm to find an integer feasible solution of a local problem (by \cite{Width_Grib}). For the completeness and clarity, we provide a complete proof.
\begin{theorem}[{\citet{OnCanonicalProblems_Grib} with \citet[Lemma 4]{Width_Grib}}]\label{th:canonical_Gomory}
	Let $\BC$ be a feasible base, corresponding to the system \eqref{eq:Sys-CF}, and let $v_{\BC} = A_{\BC}^{-1} b_{\BC}$ be the corresponding vertex-solution. Denote $\Delta = \Delta(A)$. If the condition 
\begin{equation}\label{eq:gomory_slack_condition}
		b_{\NotBC} - A_{\NotBC} v_{\BC} \geq (\Delta - 1) \cdot \BUnit
	\end{equation}
	is satisfied, then the system \eqref{eq:Sys-CF} has an integer feasible solution that can be found by a polynomial-time algorithm. 
\end{theorem}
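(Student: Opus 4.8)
The plan is to exhibit an explicit integer point inside the polyhedron $P = \{x : Ax \le b\}$ by combining the structural information carried by the base $\BC$ with a rounding argument. Start from the vertex-solution $v_{\BC} = A_{\BC}^{-1} b_{\BC}$, which satisfies $A_{\BC} v_{\BC} = b_{\BC}$ exactly and, by hypothesis \eqref{eq:gomory_slack_condition}, leaves slack at least $(\Delta-1)\cdot\BUnit$ in all the non-basic rows. The key observation is that $A_{\BC}$ is an integer $n \times n$ matrix with $\abs{\det A_{\BC}} \le \Delta$, so the denominators appearing in the coordinates of $v_{\BC}$ are controlled: writing $\{v_{\BC}\} = v_{\BC} - \lfloor v_{\BC} \rfloor$ for the fractional part, we have $\{v_{\BC}\} \in [0,1)^n$, and I would round this fractional-part vector to an integer vector using the discrepancy machinery. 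Concretely, I would apply \Cref{lm:Rounding01} (or a variant of the $\{0,1\}$-rounding lemma) to the matrix $A_{\BC}$ evaluated at $\{v_{\BC}\}$: this produces $w \in \{0,1\}^n$ with $\norm{A_{\BC}(\{v_{\BC}\} - w)}_\infty \le \herdisc(A_{\BC})$. However, the cleaner route — and the one that exploits the integrality of $A_{\BC}$ directly rather than through generic discrepancy — is to note that we want an integer point close to $v_{\BC}$ in the sense that $A z \le b$; since $A_{\BC} z \in \ZZ^n$ for integer $z$ and $b_{\BC} \in \ZZ^n$, the basic constraints $A_{\BC} z \le b_{\BC}$ are automatically "self-correcting" once the rounding error is absorbed.

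The key steps, in order: (1) Translate the problem so that $v_{\BC}$ becomes the origin, i.e., substitute $x = v_{\BC} + y$; the system becomes $A_{\BC} y \le 0$ and $A_{\NotBC} y \le b_{\NotBC} - A_{\NotBC} v_{\BC} =: s$ with $s \ge (\Delta-1)\cdot\BUnit$. We now seek an integer point $z$ in the lattice-translate $-v_{\BC} + \ZZ^n$ lying in $\{y : A_{\BC} y \le 0,\, A_{\NotBC} y \le s\}$. (2) Observe that the cone $\{y : A_{\BC} y \le 0\}$ is simplicial with extreme rays given by the columns of $-A_{\BC}^{-1}$; pick the integer point $z_0 = -\{v_{\BC}\}$ rounded appropriately so that $A_{\BC} z_0$ has each coordinate in $(-\Delta, 0\,]$ — this is possible because $A_{\BC}$ maps $[0,1)^n$ into a box of side $< \Delta$ in each coordinate after accounting for $\det A_{\BC} \mid$ the relevant quantities, which is where the bound $\abs{\det A_{\BC}} \le \Delta$ enters. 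More carefully, by a standard fact (Cramer's rule / the Hermite normal form of $A_{\BC}$) the set $A_{\BC}\cdot(\ZZ^n)$ contains every integer vector divisible by $\det A_{\BC}$, and the fractional coset representatives $A_{\BC}\cdot\{v_{\BC}\}$ can be shifted by such a lattice to lie in $[0,\Delta-1]^n$ coordinatewise; that shift is exactly the integer vector $z$ we want. (3) Verify feasibility: for the basic rows, $A_{\BC} z \le (\Delta-1)\cdot\BUnit$ would be wrong sign, so instead arrange $A_{\BC}z \in [-(\Delta-1), 0]^n$, giving $A_{\BC}(v_{\BC}+z) = b_{\BC} + A_{\BC} z \le b_{\BC}$; for the non-basic rows, $A_{\NotBC} z$ has $\ell_\infty$-norm at most... — here one uses that $z$ itself has bounded entries, OR, better, that $A_{\NotBC} z = A_{\NotBC} A_{\BC}^{-1} (A_{\BC} z)$ and each entry of $A_{\NotBC} A_{\BC}^{-1}$ is a ratio of a $k\times k$ (really an $n\times n$, up to a row swap) subdeterminant over $\det A_{\BC}$, so $\norm{A_{\NotBC} A_{\BC}^{-1}}_\infty \le 1$ by Cramer's rule applied to the $\Delta$-modular matrix $A$ — hence $\norm{A_{\NotBC} z}_\infty \le \norm{A_{\BC} z}_\infty \le \Delta - 1 \le s$, so $A_{\NotBC}(v_{\BC} + z) \le b_{\NotBC}$. (4) Finally, $v_{\BC} + z$ need not be integral, but $z' := v_{\BC} + z$ — wait: choose $z$ to be the integer vector with $\lfloor v_{\BC}\rfloor + z$ integral automatically; restate so the produced point is $\lfloor v_{\BC} \rfloor$ shifted by a lattice vector, hence in $\ZZ^n$.

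The algorithmic claim follows because every object constructed is explicit: $v_{\BC}$ and $A_{\BC}^{-1}$ are computed in polynomial time (rational Gaussian elimination, or via the HNF of $A_{\BC}$, cf. \Cref{sec:Smith}); the coset-shift that places $A_{\BC} z$ into the box $[-(\Delta-1),0]^n$ is just coordinatewise reduction of the integer vector $A_{\BC}\lfloor v_{\BC}\rfloor - b_{\BC}$ modulo the lattice $A_{\BC}\ZZ^n$, which reduces to solving one integer linear system with matrix $A_{\BC}$ — polynomial in the bit-size of $A,b$. This is essentially the algorithm of \citet[Lemma 4]{Width_Grib}, reproved here for completeness.

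The main obstacle is step (2)–(3): making precise that one can always choose the lattice shift so that \emph{simultaneously} $A_{\BC} z$ lands in the half-open box $(-\Delta, 0]^n$ (so the basic inequalities hold) and that this forces $\norm{A_{\NotBC} z}_\infty < \Delta$ (so the non-basic inequalities hold, using the slack $\ge \Delta - 1$). The first part is a clean statement about reducing modulo the lattice $A_{\BC}\ZZ^n$ and is unconditional; the second part hinges on the Cramer's-rule identity that the entries of $A_{\NotBC} A_{\BC}^{-1}$ are subdeterminants of $A$ over $\det A_{\BC}$, together with the fact that any $n \times n$ minor of $A$ that shares $n-1$ rows with $A_{\BC}$ and has $\abs{\cdot}\le\Delta=\abs{\det A_{\BC}}$ — so the ratio is in $[-1,1]$. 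One must be a little careful that the bound is $\le \Delta - 1$ rather than $\le \Delta$ in the relevant places (the half-open box gives strict inequality on one side, which after integrality yields the "$-1$"), which is exactly why the hypothesis reads $(\Delta - 1)\cdot\BUnit$ and not $\Delta\cdot\BUnit$; getting this constant exactly right is the delicate bookkeeping in the proof.
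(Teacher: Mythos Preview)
Your proposal has a genuine gap in step~(3). You claim that because each entry of $M := A_{\NotBC} A_{\BC}^{-1}$ lies in $[-1,1]$, one gets $\norm{A_{\NotBC} z}_\infty \le \norm{A_{\BC} z}_\infty$. This does not follow: writing $y$ for the basic slack vector (so that $A_{\NotBC} z = A_{\NotBC} v_{\BC} - M y$), the only general bound is $\norm{M y}_\infty \le \norm{M}_{\max} \cdot \norm{y}_1$, not $\norm{M}_{\max} \cdot \norm{y}_\infty$. A single row of $M$ may have up to $n$ entries of magnitude~$1$, so your $\ell_\infty$ control $\norm{y}_\infty \le \Delta-1$ (obtained by reducing into the cube $[0,\Delta-1]^n$ via $\delta_{\BC}\,\ZZ^n \subseteq A_{\BC}\,\ZZ^n$) yields only $\norm{M y}_\infty \le n(\Delta-1)$, which is useless against the hypothesised slack of $\Delta-1$. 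A secondary issue is that you repeatedly write $\Delta = \abs{\det A_{\BC}}$; the theorem makes no such assumption. The base $\BC$ is an arbitrary feasible base, so one has only $\delta_{\BC} := \abs{\det A_{\BC}} \le \Delta$, and the entries of $M$ are bounded by $\Delta/\delta_{\BC}$, not by~$1$.

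The paper's proof repairs exactly this point by securing an $\ell_1$ bound on $y$ rather than an $\ell_\infty$ bound, and this is where the HNF is used in an essential way (not merely to observe $\delta_{\BC}\,\ZZ^n \subseteq A_{\BC}\,\ZZ^n$). Writing $A_{\BC} = H Q^{-1}$ with $H$ lower-triangular in HNF and solving $H x \le b_{\BC}$ greedily coordinate by coordinate, one can force $y_i \in \{0,\dots,H_{ii}-1\}$ for every $i$, whence
\[
\norm{y}_1 \le \sum_i (H_{ii}-1) \le \Bigl(\prod_i H_{ii}\Bigr) - 1 = \delta_{\BC}-1.
\]
Then $\norm{M y}_\infty \le (\Delta/\delta_{\BC}) \cdot (\delta_{\BC}-1) \le \Delta - 1$, matching the assumed slack exactly. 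Your coordinatewise reduction into a cube cannot recover this $\ell_1$ bound; the HNF step, giving the diagonal entries $H_{ii}$ whose \emph{product} is $\delta_{\BC}$, is the missing idea.
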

\begin{proof}
	Let us consider the Gomory's corner polyhedron relaxation with respect to the base $\BC$, which can be written just by the subsystem 
\begin{equation}\label{eq:square_subs}
		\begin{cases}
			A_{\BC} x \leq b_{\BC}\\
			x \in \ZZ^n
		\end{cases}
	\end{equation}
	of the original system \eqref{eq:Sys-CF}. Denote the corresponding slack variables by a vector $y$, that is $y = b_{\BC} - A_{\BC} x$. We claim that \eqref{eq:square_subs} has an integer feasible solution $z$ with the corresponding slack vector $y$, satisfying
	\begin{equation}\label{eq:gomory_solution_slack}
		\norm{y}_1 \leq \delta_{\BC}-1,
	\end{equation} 
	denoting $\delta_{\BC} = \abs{\det A_{\BC}}$. Moreover, we claim that $z$ can be found by a polynomial-time algorithm. 
	
	Let $A_{\BC} = H Q^{-1}$, where $H \in \ZZ^{n \times n}$ be the HNF of $A_{\BC}$ and $Q \in \ZZ^{n \times n}$ be unimodular. Using the map $x \to Q x$, the system \eqref{eq:square_subs} can be rewritten to
	\begin{equation*}
		\begin{cases}
			H x \leq b_{\BC}\\
			x \in \ZZ^n.
		\end{cases}
	\end{equation*}
	With respect to the slack variables $y$, we have $y = b_{\BC} - H x$. Since $H$ is lower triangular, it is easy to see that we can choose $x \in \ZZ^n$, such that $y_i \in \intint[0]{H_{i i}-1}$, for each $i \in \intint n$. Hence, $\norm{y}_1 \leq \sum_i (H_{i i}-1)$. Since $\prod_i H_{i i} = \delta_{\BC}$, we get $\norm{y}_1 \leq \delta_{\BC}-1$. Applying the inverse map $x \to Q^{-1} x$ and denoting $z := x$, we conclude that the desired solution $z$ of \eqref{eq:square_subs} has been found. The provided calculation is not harder than the computation of the HNF, which can be done by a polynomial-time algorithm and proves the claim.
	
	Now, let us show that the condition \eqref{eq:gomory_slack_condition} implies that $z$ is an integer feasible solution of \eqref{eq:Sys-CF}. In other words, we need to check that $A_{\NotBC} z \leq b_{\NotBC}$. Note that all the elements of the matrix $A_{\NotBC} A_{\BC}^{-1}$ are bounded by $\Delta/\delta_{\BC}$ in the absolute value. Therefore,
	\begin{multline*}
		A_{\NotBC} z = A_{\NotBC} ( A_{\BC}^{-1} (b_{\BC} - y) ) = A_{\NotBC} v_{\BC} - A_{\NotBC} A_{\BC}^{-1} y \overset{\text{by \eqref{eq:gomory_solution_slack}}}{\leq} \\ A_{\NotBC} v_{\BC} - \Delta \frac{\delta_{\BC}-1}{\delta_{\BC}} \cdot \BUnit \overset{\text{by \eqref{eq:gomory_slack_condition}}}{\leq} b_{\NotBC}
	\end{multline*}
	Here, we have also used that $\frac{\delta_{\BC}-1}{\delta_{\BC}}$ is monotone increasing, and consequently $\Delta \frac{\delta_{\BC}-1}{\delta_{\BC}} \leq \Delta-1$. The proof follows.
	
\end{proof}

The following simple example shows that the condition \eqref{eq:gomory_slack_condition} of \Cref{th:canonical_Gomory} is tight:
\begin{example}\label{ex:gomory_is_tight}
    Let us consider the $n$-dimensional polyhedron $\PC \subseteq \RR^n$, defined by a system
    \begin{equation*}
    	\begin{cases}
    		B x \leq b\\
    		c^\top x \geq 1\\
    		x \in \RR^n,
    	\end{cases}
    \end{equation*}
    where $B$ is an $ n\times n$ diagonal matrix with $\diag(B) = (1,\dots,1,p)$, $c = (0,\dots,0,p)^\top$, and $b = (0,\dots,0,p-1)^\top$.

    Note that $\PC \cap \ZZ^n = \emptyset$ for $p \geq 2$. Indeed, from $B x \leq b$ we get $x_n \leq \frac{p-1}{p}$ and from $c^\top x \geq 1$ we get $x_n \geq 1/p$. Thus, $\frac{1}{p} \leq x_n \leq \frac{p-1}{p}$ and $\PC \cap \ZZ^n = \emptyset$.

    Consider now a vertex $v = B^{-1} b = (0,\dots,0,\frac{p-1}{p})^\top$. Simple calculations provide $c^\top v -1 = p-2$. 
\end{example}






For the sake of completeness, we should also cite an earlier result by \citet{DistributionsILP}, obtained for the special case of systems in the standard form, i.e., the problem \eqref{eq:ILP-SF}, see also \citet{OnCanonicalProblems_Grib} for an alternative proof. This result follows from \Cref{th:canonical_Gomory} by applying the reducibility between problems in the canonical and standard forms, described in \Cref{sec:connection}. We will present it in the form for integer feasibility problems in \eqref{eq:Sys-SF}, along with the polynomial-time algorithm, following from \cite[Lemma 4]{Width_Grib} and a reduction between the problems.
\begin{theorem}[{\citet{DistributionsILP} with \citet[Lemma 4]{Width_Grib}}]\label{th:standard_Gomory}
	Let $\BC$ be a feasible base, corresponding to the system \eqref{eq:Sys-SF}. If $A_{\BC}^{-1} b \geq (\Delta-1)\cdot \BUnit$, then the system has a feasible integer solution, which can be found by a polynomial-time algorithm.
\end{theorem}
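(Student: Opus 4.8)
The plan is to obtain \Cref{th:standard_Gomory} as a corollary of \Cref{th:canonical_Gomory} through the parameter-preserving reduction of \Cref{sec:connection}; this avoids redoing a Hermite-normal-form computation, although a direct proof — repeating the argument of \Cref{th:canonical_Gomory} for the square subsystem $A_{\BC} x_{\BC} = b - A_{\NotBC} x_{\NotBC}$, with a modular correction because $A_{\BC}^{-1} A_{\NotBC}$ need not be integral — is also available. First I would invoke the assumption $\Delta_{\gcd}(A) = 1$ together with \Cref{rm:StandardFormReduction} to rewrite the given \eqref{eq:Sys-SF} instance as a \eqref{eq:MOD-ILP-SF} instance with trivial modulus $S = I$, and then apply \Cref{lm:ILPSF_to_ILPCF} to pass, in polynomial time, to an equivalent canonical-form system $\hat A \hat x \leq \hat b$ with $\hat A \in \ZZ^{n \times (n-k)}$, $\rank(\hat A) = n-k$, $A \hat A = \BZero$, $A \hat b = b$, and $\Delta(\hat A) = \Delta(A) \cdot \abs{\det S} = \Delta$. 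By property~4 of that lemma the affine map $\hat x \mapsto x = \hat b - \hat A \hat x$ is a bijection between the integer solutions of the two systems, and (being affine with trivial kernel) a bijection of $\RR^{n-k}$ onto the affine hull $\{x \colon A x = b\}$.

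The key step is to carry the data of \Cref{th:standard_Gomory} across this reduction. The slack vector of the canonical system is exactly the variable vector of the standard one: the constraint $x_i \geq 0$ of \eqref{eq:Sys-SF} becomes a canonical inequality whose slack equals the $i$-th coordinate of $\hat b - \hat A \hat x$. Thus the basic feasible solution of the standard LP associated with the base $\BC$ — the point with basic part $A_{\BC}^{-1} b$ and nonbasic part $\BZero$, feasible since $A_{\BC}^{-1} b \geq (\Delta - 1)\BUnit \geq \BZero$ — corresponds to the point $\hat v \in \RR^{n-k}$ determined by $\hat A_{\NotBC} \hat v = \hat b_{\NotBC}$, where $\hat A_{\NotBC}$ is non-singular precisely because $A_{\BC}$ is (the projection $x \mapsto x_{\NotBC}$ is a bijection of $\{x \colon A x = b\}$ onto $\RR^{n-k}$ if and only if $A_{\BC}$ is invertible), and $\hat v$ is feasible for $\hat A \hat x \leq \hat b$ because its image $\hat b - \hat A \hat v$ is the non-negative point just described. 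Hence $\hat\BC := \NotBC$ is a feasible base of the canonical system with vertex-solution $\hat v = \hat A_{\hat\BC}^{-1} \hat b_{\hat\BC}$, and the canonical Gomory slack \eqref{eq:gomory_slack_condition} at this base, taken over the complementary rows indexed by $\BC$, equals
\[
\hat b_{\BC} - \hat A_{\BC} \hat v = (\hat b - \hat A \hat v)_{\BC} = A_{\BC}^{-1} b \geq (\Delta - 1)\BUnit = (\Delta(\hat A) - 1)\BUnit,
\]
so \eqref{eq:gomory_slack_condition} holds verbatim for the reduced instance.

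Then I would apply \Cref{th:canonical_Gomory} to produce, in polynomial time, an integer feasible solution $\hat z \in \ZZ^{n-k}$ of $\hat A \hat x \leq \hat b$, and push it back through $z := \hat b - \hat A \hat z$ to obtain $z \in \ZZ^n_{\geq 0}$ with $A z = b$; the whole pipeline is polynomial because the lattice completion producing $G$, the reduction of \Cref{lm:ILPSF_to_ILPCF}, and the algorithm of \Cref{th:canonical_Gomory} all are. I expect the only genuine obstacle to be the bookkeeping in the middle paragraph: one must verify against the explicit reduction of \Cref{lm:ILPSF_to_ILPCF} that the canonical slacks really coincide with the standard variables, so that the standard base $\BC$ passes to the canonical base $\NotBC$. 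Once that identification is pinned down, the hypothesis of the theorem matches \eqref{eq:gomory_slack_condition} exactly and nothing is left to estimate, since all the quantitative work has already been carried out inside \Cref{th:canonical_Gomory}.
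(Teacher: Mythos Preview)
Your proposal is correct and follows exactly the route the paper indicates: the paper does not give a detailed proof of \Cref{th:standard_Gomory} but only states that it ``follows from \Cref{th:canonical_Gomory} by applying the reducibility between problems in the canonical and standard forms, described in \Cref{sec:connection}.'' Your argument carries out precisely this reduction, with the same identification of bases ($\BC \leftrightarrow \NotBC$) and slacks that the paper uses later in the proof of \Cref{th:DiagFrob}; the bookkeeping you flag as the only obstacle is indeed the entire content, and you have it right.
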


\section{How to Construct a Sufficiently Good Base}

Let $A \in \ZZ^{k \times n}$ with $\rank A = k$ and denote $\Delta = \Delta(A)$. In this Subsection, we discuss some approaches to find a base $\BC$, which gives sufficiently good upper bounds on the value of $\max_{i \in \intint k}\bigl\{\Delta_i\bigl(M(\BC)\bigr)\bigl\}$, where we have denoted $M(\BC) = A_{\BC}^{-1} A_{\NotBC}$.
Taking $\BC$, such that $\abs{\det(A_{\BC})} = \Delta$, we get $\Delta_i\bigl(M(\BC)\bigr) = 1$, for all $i \in \intint k$. But it is an NP-hard problem to compute such a $\BC$. Instead, we will settle for an approximate solution that can be obtained by a polynomial-time algorithm. The following Theorem, due to A.~Nikolov, gives an asymptotically optimal approximation ratio. 
\begin{theorem}[A.~Nikolov \cite{LargestSimplex_Nikolov}]\label{th:maxdet_apr}
    There exists a deterministic polynomial-time algorithm that computes a base $\BC$ of $A$ with $\Delta / \abs{\det(A_{\BC})} \leq e^k$.
\end{theorem}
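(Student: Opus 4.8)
The plan is to replace the combinatorial maximum
\[
    \Delta^2 = \max\bigl\{ \det\bigl(A_{\BC} A_{\BC}^\top\bigr) : \BC \subseteq \intint n,\ |\BC| = k \bigr\}
\]
by a tractable convex relaxation, solve the relaxation in polynomial time, and round its fractional optimum back to a genuine base while losing a factor of at most $e^k$ in the squared determinant (in fact $e^{k/2}$, which leaves comfortable slack). Write $v_1, \dots, v_n \in \ZZ^k$ for the columns of $A$ and, for $x \in \RR_{\geq 0}^n$, set $f(x) = \det\bigl(A \diag(x) A^\top\bigr) = \det\bigl(\sum_i x_i v_i v_i^\top\bigr)$. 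The function $x \mapsto \log f(x)$ is concave, being the composition of the linear map $x \mapsto \sum_i x_i v_i v_i^\top$ (which sends $\RR_{\geq 0}^n$ into the cone of positive semidefinite matrices) with the concave function $\log\det(\cdot)$. Hence
\[
    \max\bigl\{ \log f(x) : x \in [0,1]^n,\ \textstyle\sum_i x_i = k \bigr\}
\]
is a concave maximization over a polytope and can be solved to any prescribed relative accuracy in polynomial time (ellipsoid or interior-point method); since the approximation error contributes only a multiplicative factor arbitrarily close to $1$, which the gap between the $e^{k/2}$ we shall obtain and the $e^k$ we must achieve easily absorbs, I will work with the exact optimizer $x^\ast$. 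As the $0/1$ indicator vector $\BUnit_{\BC^\ast}$ of an optimal base $\BC^\ast$ is feasible and satisfies $f(\BUnit_{\BC^\ast}) = \det\bigl(A_{\BC^\ast} A_{\BC^\ast}^\top\bigr) = \Delta^2$, we get $f(x^\ast) \geq \Delta^2$.

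Next I would extract a good base from $x^\ast$. By the Cauchy--Binet formula,
\[
    f(x^\ast) = \sum_{\substack{\BC \subseteq \intint n \\ |\BC| = k}} \Bigl( \prod_{i \in \BC} x^\ast_i \Bigr)\, \det(A_{\BC})^2 .
\]
Expanding $\bigl(\sum_i x^\ast_i\bigr)^k$ by the multinomial theorem and keeping only the contributions of $k$-tuples of distinct indices (all other terms being nonnegative) gives $\bigl(\sum_i x^\ast_i\bigr)^k \geq k! \sum_{|\BC| = k} \prod_{i \in \BC} x^\ast_i$, hence $\sum_{|\BC|=k} \prod_{i\in\BC} x^\ast_i \leq k^k/k! \leq e^k$. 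Consequently the weighted average of the numbers $\det(A_{\BC})^2$ against the nonnegative weights $\prod_{i\in\BC} x^\ast_i$ is at least $f(x^\ast)/e^k \geq \Delta^2/e^k$, so some $\BC$ satisfies $\det(A_{\BC})^2 \geq \Delta^2 / e^k$, i.e.\ $\Delta / |\det A_{\BC}| \leq e^{k/2} \leq e^k$; in particular $\det A_{\BC} \neq 0$, so $\BC$ is indeed a base.

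The remaining, and genuinely hard, step is to make this extraction algorithmic, and I expect it to be the main obstacle. I would do it by derandomizing the rounding with the method of conditional expectations: process the coordinates one at a time, at each step either committing an index to the output base or discarding it, always choosing the branch that keeps a pessimistic estimator of the attainable value of $\det(A_{\BC})^2$ above $\Delta^2/e^k$. A natural candidate estimator is $\Phi(x) = f(x)/e_k(x)$, where $e_k$ is the $k$-th elementary symmetric polynomial, since $\Phi(x^\ast) \geq \Delta^2/e^k$ while $\Phi(\BUnit_{\BC}) = \det(A_{\BC})^2$ for every base $\BC$. The quantities one must evaluate after committing a set $I$ and discarding a set $O$ are, after projecting the remaining columns onto the orthogonal complement of $\linh\{v_i : i \in I\}$ (equivalently, after taking Schur complements), again of the form $f(\cdot)$ and $e_k(\cdot)$ for a Gram matrix of smaller rank, hence determinants that are polynomial-time computable, and with $A$ integral all intermediate numbers stay polynomially bounded. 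The crux --- the part carried out in \cite{LargestSimplex_Nikolov} --- is to design a rounding step together with a pessimistic estimator for which the estimator provably does not decrease: the naive pipage move, which adjusts two fractional coordinates of $x$ in opposite directions, does not by itself suffice, because $\log f$ is \emph{concave} along such directions, so its minimum over the feasible segment is attained at an endpoint rather than its maximum, and one must compensate for this. Once such a rounding is in place, iterating it turns $x^\ast$ into the indicator of a base $\BC$ with $\Delta / |\det A_{\BC}| \leq e^{k/2} \leq e^k$, as required.
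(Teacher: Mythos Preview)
The paper does not prove this statement at all: \Cref{th:maxdet_apr} is quoted as a black box from \citet{LargestSimplex_Nikolov}, with no proof sketch or argument given in the present paper. There is therefore nothing in the paper to compare your proposal against.

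That said, your outline does track the actual approach of \cite{LargestSimplex_Nikolov}: relax to the D-optimal design problem $\max\{\log\det(A\diag(x)A^\top):x\in[0,1]^n,\ \BUnit^\top x=k\}$, solve it convexly, observe via Cauchy--Binet that the fractional optimum certifies the existence of a base with $\det(A_{\BC})^2 \geq \Delta^2/e^k$, and then derandomize. You are also right that the derandomization is the nontrivial part and that naive pipage rounding is obstructed by the concavity of $\log f$. Nikolov's actual rounding is a ``proportional volume sampling'' scheme: one samples $\BC$ with probability proportional to $\bigl(\prod_{i\in\BC}x^\ast_i\bigr)\det(A_{\BC})^2$, shows this distribution can be sampled from (or, for the deterministic version, conditioned on) in polynomial time via closed-form expressions for the conditional marginals, and verifies that the expected $\det(A_{\BC})^2$ under it is exactly $f(x^\ast)/e_k(x^\ast)$. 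Your pessimistic estimator $\Phi=f/e_k$ is precisely the right object, and the Schur-complement reduction you describe is how the conditional quantities are computed; what you are missing is the concrete observation that conditioning on $j\in\BC$ or $j\notin\BC$ yields expressions of the same form in one fewer variable, which makes the method of conditional expectations go through directly without any pipage-style argument.
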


The next Lemma uses an algorithm by A.~Nikolov to compute a relatively good base $\BC$.
\begin{lemma}\label{lm:exp_subdet_search}
There exists a base $\BC$ of $A$, such that
\begin{enumerate}
    
    \item for each $i \in \intint k$, $\Delta_i\bigl(M(\BC)\bigr) \leq e^{i+1}$;

    \item the base $\BC$ can be computed by an algorithm with the computational complexity bound
    $$
    O\bigl( k \cdot 2^k \cdot T_{\text{apr}} \bigr),
    $$ where $T_{\text{apr}}$ is the computational complexity of the algorithm in \Cref{th:maxdet_apr} with an input $A$ (writing the complexity bound, we make the additional assumption that the approximation problem is harder than the matrix inversion).
\end{enumerate}
\end{lemma}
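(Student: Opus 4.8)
The plan rests on a determinantal identity. For a base $\BC$ of $A$ write $M(\BC) = A_{\BC}^{-1} A_{\NotBC}$ (the $k \times (n-k)$ matrix of the statement). Expanding an $i \times i$ minor of $M(\BC)$ by the Cauchy--Binet formula and rewriting the minors of $A_{\BC}^{-1}$ through Jacobi's identity gives, for all $R \subseteq \BC$ and $S \subseteq \NotBC$ with $\abs{R} = \abs{S} = i$,
\[
    \abs{\det M(\BC)_{R,S}} \;=\; \frac{\abs{\det A_{\BC'}}}{\abs{\det A_{\BC}}}, \qquad \BC' := (\BC \setminus R) \cup S .
\]
Since $\BC'$ runs exactly over the bases with $\abs{\BC \cap \BC'} = k - i$ as $R,S$ vary,
\[
    \Delta_i\bigl(M(\BC)\bigr) \;=\; \frac{1}{\abs{\det A_{\BC}}}\,\max\Bigl\{ \abs{\det A_{\BC'}} \ :\ \BC' \text{ a base},\ \abs{\BC \cap \BC'} = k - i \Bigr\}.
\]
So it is enough to build a base $\BC$ that is almost optimal under every exchange of $i$ of its columns: for each $i \in \intint{k}$ and each $(k-i)$-subset $T \subseteq \BC$, every completion of $T$ to a base should have determinant at most $e^{i+1}\abs{\det A_{\BC}}$.

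Next I would make the ``best completion of a fixed $T$'' problem an input to \Cref{th:maxdet_apr}. For $T \subseteq \BC$ with $\abs{T} = k - i$, the columns $A_{*T}$ span a $(k-i)$-dimensional subspace, and passing to the quotient $\RR^{k} / \linh(A_{*T})$ (equivalently, a Schur complement after permuting $T$ to the front) turns $S \mapsto \abs{\det A_{T \cup S}}$, over $i$-subsets $S \subseteq \intint{n} \setminus T$, into $S \mapsto c_{T} \cdot \abs{\det \widetilde{A}_{*S}}$, where $\widetilde{A} \in \QQ^{i \times (n-k+i)}$ (made integral by clearing denominators, which does not change ratios) and $c_{T} > 0$ does not depend on $S$. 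Applying the algorithm of \Cref{th:maxdet_apr} to $\widetilde{A}$ produces in polynomial time a completion $S^{*}_{T}$ with $\abs{\det A_{T \cup S^{*}_{T}}} \geq e^{-i} \cdot \max_{S} \abs{\det A_{T \cup S}}$.

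The base itself is obtained by local search, initialized at the base returned by \Cref{th:maxdet_apr} applied to $A$, so that $e^{-k}\Delta \leq \abs{\det A_{\BC}} \leq \Delta$. Repeatedly scan all at most $2^{k}$ proper subsets $T \subsetneq \BC$ (including $T = \emptyset$, which is the case $i = k$ with $\widetilde{A} = A$), computing $S^{*}_{T}$ as above; if $\abs{\det A_{T \cup S^{*}_{T}}} \geq e \cdot \abs{\det A_{\BC}}$ for some $T$, set $\BC \leftarrow T \cup S^{*}_{T}$ and restart the scan, and otherwise stop. Since $\abs{\det A_{\BC}}$ never decreases it stays in $[e^{-k}\Delta,\, \Delta]$, while each update multiplies it by at least $e$; hence there are at most $k$ updates, at most $k+1$ scans, hence $O(k \cdot 2^{k})$ calls to \Cref{th:maxdet_apr} together with the subordinate quotient constructions (each no harder than a matrix inversion), giving the running time $O(k \cdot 2^{k} \cdot T_{\text{apr}})$. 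At termination, for every $T \subsetneq \BC$ with $\abs{T} = k - i$ we have $\abs{\det A_{T \cup S^{*}_{T}}} < e \cdot \abs{\det A_{\BC}}$, so
\[
    \max_{S} \abs{\det A_{T \cup S}} \;\leq\; e^{i}\,\abs{\det A_{T \cup S^{*}_{T}}} \;<\; e^{i+1}\,\abs{\det A_{\BC}} ;
\]
since every base $\BC'$ with $\abs{\BC \cap \BC'} = k - i$ completes $T := \BC \cap \BC'$, the displayed formula for $\Delta_i\bigl(M(\BC)\bigr)$ yields $\Delta_i\bigl(M(\BC)\bigr) \leq e^{i+1}$ for every $i \in \intint{k}$, as claimed.

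The step I expect to be the real obstacle is the quotient/Schur-complement reduction of the second paragraph: one must check that $\abs{\det A_{T \cup S}}$ is genuinely a fixed, $S$-independent multiple of an honest $i$-dimensional determinant so that the $e^{i}$-guarantee of \Cref{th:maxdet_apr} transfers, and that the matrix $\widetilde{A}$ handed to the algorithm has the right shape and polynomially bounded entries after clearing denominators. The determinantal identity, the termination count, and the arithmetic are routine; note that the extra factor in $e^{i+1}$ is precisely the price of stopping the local search at a relative gain of $e$, which is what keeps the number of updates at $O(k)$.
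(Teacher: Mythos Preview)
Your proposal is correct and follows essentially the same local-search strategy as the paper: initialize with \Cref{th:maxdet_apr}, loop over the $2^k$ row subsets of $M(\BC)$, swap whenever an $e$-improvement is available, and bound the number of updates by $k$ via the sandwich $e^{-k}\Delta \le \abs{\det A_{\BC}} \le \Delta$. The obstacle you flag is a detour you do not need: your own identity $\abs{\det M(\BC)_{R,S}} = \abs{\det A_{(\BC\setminus R)\cup S}}/\abs{\det A_{\BC}}$ already recasts the ``best completion of $T=\BC\setminus R$'' problem as finding the largest $i\times i$ column minor of the $i\times(n-k)$ row submatrix $M(\BC)_{R,*}$, so the paper simply hands $M(\BC)_{R,*}$ to \Cref{th:maxdet_apr} directly and never forms a Schur complement or clears denominators.
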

\begin{proof}
    Initially, we compute a base $\BC$ with $\Delta\bigl(M(\BC)\bigr) \leq e^k$, using \Cref{th:maxdet_apr}. 
    Next, we repeatedly perform the following iterations:
    \begin{algorithmic}[1]
    \State $M \gets M(\BC)$
    \For{$\JC \subseteq \intint k$}
        \State $i \gets \abs{\JC}$
        \State using \Cref{th:maxdet_apr}, compute a base $\IC$ of $M_{\JC *}$, such that $\abs{\det(M_{\JC \IC})} \cdot e^i \geq \Delta_i(M_{\JC *})$
        \If{$\abs{\det(M_{\JC \IC})} > e$}
            \State $\BC \gets \BC \setminus \JC \cup \IC$
            \State \textbf{break}
        \EndIf
    \EndFor
    \end{algorithmic}
    Note that $\bigl(M(\BC)\bigr)_{\BC} = I$, where $I$ is the $k \times k$ identity matrix.
    Hence, if the condition $\abs{\det(M_{\JC \IC})} > e$ will be satisfied, for some $\IC$ and $\JC$, then the value of $\abs{\det(A_{\BC})}$ will grow at least by $e$. Therefore, since initially 
    \begin{equation*}
        e^{-k} \cdot \Delta(A) \leq \abs{\det(A_{\BC})} \leq \Delta(A),
    \end{equation*}
    it is sufficient to run the described procedure $k$ times. More precisely, we can stop at the moment, when the cycle in Line 2 will be completely finished without calling the $\textbf{break}$-operator of Line 7. After that the condition $\Delta_i\bigl(M(\BC)\bigr) \leq e^{i+1}$ will be satisfied, for all $i \in \intint k$. Clearly, the total computational complexity is bounded by $O(k \cdot 2^k \cdot T_{\text{apr}})$.
\end{proof}

The next Lemma gives weaker conditions on $M(\BC)$, using a polynomial-time algorithm. But, it only gives guaranties on  $\Delta_1\bigl(M(\BC)\bigr) \leq e$.
\begin{lemma}\label{lm:poly_subdet_search}
There exists a base $\BC$ of $A$, which can be computed by a polynomial-time algorithm, such that
\begin{equation*}
    \Delta_1\bigl(M(\BC)\bigr) \leq e.
\end{equation*}


\end{lemma}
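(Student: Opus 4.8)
The plan is to run a \emph{multiplicative local search} over the bases of $A$, performing a single column swap whenever this increases $\abs{\det A_{\BC}}$ by at least a fixed constant factor. If $n = k$ then $\NotBC = \emptyset$ and $M(\BC)$ is an empty matrix, so there is nothing to prove; assume $n > k$.

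First I would compute an arbitrary base $\BC$ of $A$ by Gaussian elimination; since $\rank A = k$, we have $\det A_{\BC} \neq 0$, hence $\abs{\det A_{\BC}} \geq 1$, being a nonzero integer. Then I would iterate the following step. Compute $M := M(\BC) = A_{\BC}^{-1} A_{\NotBC}$. By Cramer's rule, for $i \in \BC$ and $j \in \NotBC$ the $(i,j)$-entry of $M$ is the $i$-th coordinate of the solution of $A_{\BC} x = A_{* j}$, so $\abs{M_{i j}} = \abs{\det A_{\BC'}} / \abs{\det A_{\BC}}$, where $\BC' := (\BC \setminus \{i\}) \cup \{j\}$. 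If $\abs{M_{i j}} \leq e$ for every such pair, stop and output $\BC$. Otherwise pick a pair $(i,j)$ with $\abs{M_{i j}} > e$; since then $M_{i j} \neq 0$, the set $\BC'$ is again a base, so set $\BC \gets \BC'$ and repeat. Each iteration costs one matrix inversion and one matrix multiplication, i.e.\ $\poly(\inputsize)$ time, plus inspection of the $k \times (n-k)$ matrix $M$.

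Correctness at termination is immediate: if the loop halts, then $\Delta_1\bigl(M(\BC)\bigr) = \norm{M(\BC)}_{\max} \leq e$. The only real content is the iteration bound. Whenever a swap is performed we have $\abs{\det A_{\BC'}} = \abs{M_{i j}} \cdot \abs{\det A_{\BC}} > e \cdot \abs{\det A_{\BC}}$, so $\abs{\det A_{\BC}}$ grows by a factor strictly larger than $e$ at each step. It starts at a value $\geq 1$ and, being the absolute value of a $k \times k$ subdeterminant of $A$, never exceeds $\Delta := \Delta(A)$; hence after $m$ swaps $e^m < \Delta$, i.e.\ $m < \ln \Delta$. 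Finally $\ln \Delta = O\bigl(k \cdot (\log k + \log \norm{A}_{\max})\bigr)$ by Hadamard's inequality, which is $\poly(\inputsize)$, so the procedure halts after polynomially many iterations, each of polynomial cost.

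\textbf{Main obstacle.} The subtlety — and the reason the naive ``swap toward the maximum-determinant base'' strategy does not work — is ensuring a polynomial running time: with an additive improvement rule a single run could take $\Omega(\Delta)$ steps, which is exponential in the input size. Replacing the additive improvement by a multiplicative one (factor $e$) is exactly what caps the number of iterations at $O(\log \Delta)$, at the modest cost of weakening the guarantee from the locally optimal bound $\Delta_1\bigl(M(\BC)\bigr) \leq 1$ to $\Delta_1\bigl(M(\BC)\bigr) \leq e$.
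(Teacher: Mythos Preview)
Your proposal is correct and follows essentially the same multiplicative local-search argument as the paper: start from an arbitrary base, swap a column whenever an entry of $M(\BC)$ exceeds $e$ in absolute value, and bound the number of swaps by $O(\log \Delta)$ because each swap multiplies $\abs{\det A_{\BC}}$ by more than $e$. If anything, your write-up is more careful than the paper's sketch---you make the Cramer-rule identity $\abs{M_{ij}} = \abs{\det A_{\BC'}}/\abs{\det A_{\BC}}$ explicit, you use $\abs{M_{ij}}>e$ rather than $M_{ij}>e$ (which is what is actually needed for the conclusion $\Delta_1(M(\BC))\le e$), and you spell out why $\ln\Delta$ is polynomial in the input size via Hadamard's inequality.
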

\begin{proof}
Initially, choose any base $\BC$ of $A$, which can be done by a polynomial-time algorithm.
Similar to the proof of \Cref{lm:exp_subdet_search}, we repeatedly perform the following procedure:
\begin{algorithmic}[1]
    \State $M \gets M(\BC)$
    \For{$i \in \intint k ,\; j \in \intint n$}
        \If{$M_{i j} > e$}
            \State $\BC \gets \BC \setminus \{i\} \cup \{j\}$
            \State \textbf{break}
        \EndIf
    \EndFor
\end{algorithmic}
Clearly, it is sufficient to run the procedure at most $O(\log \Delta)$ times. After that the property $\Delta_1(M) \leq e$ will be satisfied. Since we need $O(k \cdot n)$ operations to scan over all the elements of $M(\BC)$ and the same number of operations to recompute $M(\BC)$, the complexity of a single step is bounded by $O(k \cdot n)$, which lead us to a polynomial-time algorithm.  
\end{proof}



Finally, we provide an algorithm to find a sufficiently good base of an integer matrix $B$, which has a different size $n \times (n-k)$ and rank $(n-k)$. Due to the trivial bijection between the families of sets $\binom{\intint n}{k}$ and $\binom{\intint n}{n-k}$, it can be found by almost the same algorithm as in \Cref{lm:exp_subdet_search}.
\begin{lemma}\label{lm:exp_subdet_search_dual}
Let $B \in \ZZ^{n \times (n-k)}$ with $\rank(B) = n-k$. There exists a base $\BC$ of $B$, such that
\begin{enumerate}
    \item for each $i \in \intint{n-k}$, $\Delta_i\bigl(M(B)\bigr) \leq e^{i+1}$;

    \item the base $\BC$ can be computed by an algorithm with the computational complexity bound
    $$
    O\bigl( k \cdot 2^k \cdot T_{\text{apr}} \bigr),
    $$ where $T_{\text{apr}}$ is the computational complexity of the algorithm of Theorem \ref{th:maxdet_apr} with an input $A$. Writing the complexity bound, we make an additional assumption that the approximation problem is harder than the matrix inversion.
\end{enumerate}
\end{lemma}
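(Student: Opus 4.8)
The plan is to run, essentially verbatim, the argument of \Cref{lm:exp_subdet_search}, transported from columns to rows via the complementation bijection $\JC \leftrightarrow \intint n \setminus \JC$ between $\binom{\intint n}{n-k}$ and $\binom{\intint n}{k}$. Here a base of $B$ is a set $\BC \subseteq \intint n$ with $\abs{\BC} = n-k$ and $B_{\BC}$ nonsingular; its complement $\NotBC$ then has size $k$, and — mirroring the notation of \Cref{lm:exp_subdet_search} — we put $M(\BC) = B_{\NotBC} B_{\BC}^{-1}$, the $k \times (n-k)$ matrix playing the role of $A_{\BC}^{-1} A_{\NotBC}$; its $k$ rows are indexed by $\NotBC$, and its $n-k$ columns correspond, through $B_{\BC}^{-1}$, to the elements of $\BC$. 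The only linear-algebra input needed is the multi-element pivot identity, precisely the one used implicitly in the proof of \Cref{lm:exp_subdet_search} (now applied to $B^{\top}$): for $\JC \subseteq \NotBC$ and $\IC \subseteq \BC$ with $\abs{\JC} = \abs{\IC}$, the set $\BC' = (\BC \setminus \IC) \cup \JC$ is again a base iff $\det M(\BC)_{\JC \IC} \neq 0$, and in that case $\abs{\det B_{\BC'}} = \abs{\det B_{\BC}} \cdot \abs{\det M(\BC)_{\JC \IC}}$.

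First I would initialise $\BC$ by applying \Cref{th:maxdet_apr} to $B^{\top} \in \ZZ^{(n-k) \times n}$; since transposition preserves determinants of square submatrices, the returned base satisfies $\Delta(B) / \abs{\det B_{\BC}} \leq e^{n-k}$. Then I would iterate the obvious analogue of the inner loop of \Cref{lm:exp_subdet_search}: set $M \gets M(\BC)$; for each $\JC \subseteq \NotBC$, put $i \gets \abs{\JC}$ and use \Cref{th:maxdet_apr} on the $i \times (n-k)$ matrix $M_{\JC *}$ to obtain a column subset $\IC \subseteq \BC$ with $\abs{\IC} = i$ and $e^{i} \cdot \abs{\det M_{\JC \IC}} \geq \Delta_i(M_{\JC *})$; and if $\abs{\det M_{\JC \IC}} > e$, perform the exchange $\BC \gets (\BC \setminus \IC) \cup \JC$ and restart the scan. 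By the pivot identity every exchange multiplies $\abs{\det B_{\BC}}$ by a factor strictly larger than $e$, while $\abs{\det B_{\BC}} \leq \Delta(B)$ throughout, so at most $n-k$ exchanges occur; hence there are $O(n-k)$ passes, each costing $2^{k}$ calls to the approximation algorithm plus one recomputation of $M(\BC)$ (dominated by $T_{\text{apr}}$ under the stated assumption). This yields the claimed complexity bound, the factor $n-k$ here playing the role of the factor $k$ in \Cref{lm:exp_subdet_search}, and the $2^{k}$ now arising from the scan over subsets of the $k$-element cobase $\NotBC$ rather than of a $k$-element base.

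When the scan completes without performing an exchange, then for every $\JC \subseteq \NotBC$ with $\abs{\JC} = i$ the minor produced by \Cref{th:maxdet_apr} satisfies $\abs{\det M(\BC)_{\JC \IC}} \leq e$, whence $\Delta_i\bigl(M(\BC)_{\JC *}\bigr) \leq e^{i} \cdot \abs{\det M(\BC)_{\JC \IC}} \leq e^{i+1}$. Since every $i \times i$ submatrix of the $k \times (n-k)$ matrix $M(\BC)$ is contained in $M(\BC)_{\JC *}$ for some $i$-subset $\JC$ of its row-index set $\NotBC$, taking the maximum over such $\JC$ gives $\Delta_i\bigl(M(\BC)\bigr) \leq e^{i+1}$ for all $i \in \intint{n-k}$ — the inequality being vacuous, with $\Delta_i(M(\BC)) = 0$, once $i > \min\{k,n-k\}$. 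This establishes item (1), and combined with the previous paragraph it establishes item (2).

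I do not expect a genuine obstacle here — the statement is a routine dualisation of \Cref{lm:exp_subdet_search}, which the surrounding text already anticipates. The only delicate points are bookkeeping: keeping the complementation bijection $\binom{\intint n}{n-k} \leftrightarrow \binom{\intint n}{k}$ straight across the exchange steps (which rows of $B$ are "in" the base, which columns of $M(\BC)$ they correspond to), and verifying the multi-element pivot identity $\abs{\det B_{(\BC \setminus \IC) \cup \JC}} = \abs{\det B_{\BC}} \cdot \abs{\det M(\BC)_{\JC \IC}}$, a standard Schur-complement (generalized Cramer) computation equivalent to the determinant-growth step already invoked inside the proof of \Cref{lm:exp_subdet_search}.
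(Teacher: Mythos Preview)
Your approach is essentially the paper's: both transport \Cref{lm:exp_subdet_search} to $B$ via the complementation bijection, scanning the $2^{k}$ subsets of the $k$-element cobase $\NotBC$ of the matrix $M(\BC)=B_{\NotBC}B_{\BC}^{-1}$ (equivalently, the paper's column subsets of the block $C$ in $B B_{\BC}^{-1}=\binom{I}{C}$) and performing a pivot exchange whenever the approximated minor exceeds $e$. The one point to flag is the outer-iteration count: from the Nikolov initialization on $B^{\top}$ you correctly get at most $n-k$ exchanges, hence $O\bigl((n-k)\cdot 2^{k}\cdot T_{\text{apr}}\bigr)$ rather than the stated $O\bigl(k\cdot 2^{k}\cdot T_{\text{apr}}\bigr)$; your phrase ``the factor $n-k$ here playing the role of the factor $k$'' does not actually reconcile these, and the paper's own proof merely asserts ``the number of iterations is bounded by $k$'' by analogy without justification. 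This discrepancy is harmless for the sole application (\Cref{th:ExpSlackFrob}), where the lemma's $n-k$ is the ambient dimension and both bounds are $2^{k}\cdot\poly(\inputsize)$.
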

\begin{proof}
    Let $\BC$ be an arbitrary base of $A$, and let us assume that $\BC = \intint{(n-k)}$. Then, we have $M(\BC) = A_{\BC}^{-1} A = \bigl(I\,C\bigr)$, where $I$ is $(n-k) \times (n-k)$ identity matrix and $C$ is the $(n-k)\times k$ matrix. According to this observation, we can use an algorithm, entirely similar to the one in the proof of \Cref{lm:exp_subdet_search}. To do this, we apply the approximation algorithm of \Cref{lm:poly_subdet_search} to all the column subsets of $C$; there are exactly $2^k$ of them. As in the proof of \Cref{lm:exp_subdet_search}, the number of iterations is bounded by $k$.
\end{proof}


\section{Proofs of the Main Results}\label{sec:proofs}

In this Section, we will present proofs of the main Theorems. \Cref{sec:standard_proofs} contains proofs of \Cref{th:SlackFrob}, \Cref{th:PolySlackFrob}, and \Cref{th:ExpSlackFrob}. \Cref{sec:canonical_proofs} contains proofs of \Cref{th:DiagFrob}, \Cref{th:PolyDiagFrob}, and \Cref{th:ExpDiagFrob}.

\subsection{Proofs with Respect to Systems in the Canonical Form}\label{sec:canonical_proofs}

First, we will prove a key Lemma, which connects the properties of the Gomory's corner polyhedron relaxation (\Cref{th:canonical_Gomory}) with tools of discrepancy theory (\Cref{sec:prediscrep}).
\begin{lemma}\label{lm:SlackFrob_via_Discr}
    Corresponding to the system \eqref{eq:Sys-CF}, let $\BC$ be a given base of $A$ and let $\gamma$ be an upper bound on $\herdisc(A_{\NotBC} A_{\BC}^{-1})$. Denote $\Delta = \Delta(A)$ and $t = \Delta-1 + \gamma$. 
    
    Then, if the condition \eqref{eq:SlackFrobCondition}:
    \begin{equation*}
        \exists x \in \RR^n:\quad b - A x \geq t \cdot \BUnit.
    \end{equation*}
    is satisfied, then there exists a solution of the system \eqref{eq:Sys-CF}, which can be found by a polynomial-time algorithm.
\end{lemma}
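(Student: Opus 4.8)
The plan is to perturb the right-hand side $b$ into an integer vector $\tilde b \le b$ for which the \emph{given} base $\BC$ becomes a feasible base satisfying the Gomory margin condition \eqref{eq:gomory_slack_condition}, and then to invoke \Cref{th:canonical_Gomory}; discrepancy theory enters only to keep the perturbation under control. Write $M := A_{\NotBC} A_{\BC}^{-1}$, so that $\herdisc(M) \le \gamma$ by hypothesis, and set $p := A_{\BC} x$. Since $A_{\NotBC} x = M p$, the assumption \eqref{eq:SlackFrobCondition} says precisely that $p \le b_{\BC} - t\BUnit$ and $M p \le b_{\NotBC} - t\BUnit$ componentwise. (We may assume $t > 0$, the case $t = 0$ being degenerate.)

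Next I would decompose $p = \lfloor p \rfloor + \{p\}$ with $\lfloor p \rfloor \in \ZZ^n$ and $\{p\} \in [0,1)^n$, and apply \Cref{lm:Rounding01} to the matrix $M$ and the vector $\{p\} \in [0,1]^n$: this produces $\zeta \in \{0,1\}^n$ with $\norm{M\{p\} - M\zeta}_\infty \le \herdisc(M) \le \gamma$. I then put $\tilde b_{\BC} := \lfloor p \rfloor + \zeta \in \ZZ^n$ and $\tilde b_{\NotBC} := b_{\NotBC}$, so that $\tilde b$ is an integer right-hand side for the system $A x \le \tilde b$.

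Then I would verify the two hypotheses of \Cref{th:canonical_Gomory} for $\{A x \le \tilde b,\ x \in \ZZ^n\}$ with the base $\BC$. First, $\tilde b \le b$: since $b_{\BC}$ is integral and $p_i \le b_{\BC,i} - t < b_{\BC,i}$ we get $\lfloor p_i \rfloor \le b_{\BC,i} - 1$, hence $\tilde b_{\BC,i} = \lfloor p_i \rfloor + \zeta_i \le b_{\BC,i}$, while $\tilde b_{\NotBC} = b_{\NotBC}$. Second, put $v := A_{\BC}^{-1}\tilde b_{\BC}$; then $A_{\NotBC} v = M\tilde b_{\BC} = M\lfloor p \rfloor + M\zeta = Mp - (M\{p\} - M\zeta)$, and because every coordinate of $M\{p\} - M\zeta$ has absolute value at most $\gamma$, we get $M\tilde b_{\BC} \le Mp + \gamma\BUnit \le b_{\NotBC} - (t - \gamma)\BUnit = b_{\NotBC} - (\Delta - 1)\BUnit$. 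Hence $A v \le \tilde b$ (so $\BC$ is a feasible base for the perturbed system) and $\tilde b_{\NotBC} - A_{\NotBC} v \ge (\Delta - 1)\BUnit$, which is exactly \eqref{eq:gomory_slack_condition}.

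\Cref{th:canonical_Gomory} then produces, in polynomial time, an integer $z$ with $A z \le \tilde b \le b$, i.e.\ an integer feasible solution of \eqref{eq:Sys-CF}; the whole computation is one matrix inversion (for $M$), the rounding of \Cref{lm:Rounding01}, and the subroutine of \Cref{th:canonical_Gomory}. The main obstacle is choosing the rounding correctly: one cannot round $x$ itself, because $\herdisc(A_{\NotBC})$ is not controlled and the identity $M A_{\BC} = A_{\NotBC}$ prevents transferring discrepancy from $M$ to $A_{\NotBC}$; the fix is to round the fractional part of $p = A_{\BC}x$ instead, so that the matrix with controlled discrepancy, $M$, is the one applied to the rounding error, which then lives in the unit box. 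A secondary point is the algorithmic half of the statement: \Cref{lm:Rounding01} is phrased as an existence result, so for the polynomial-time claim one needs an efficient rounding realizing the bound $\gamma$, which is available for every concrete $\gamma$ used later, since the underlying discrepancy bounds --- Spencer's theorem \eqref{eq:SixDeviations} and the determinant-based bound \eqref{eq:DiscDetBound} --- have polynomial-time constructive proofs. The split $t = (\Delta - 1) + \gamma$ is then exactly the Gomory error plus the discrepancy error.
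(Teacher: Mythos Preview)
Your proof is correct and follows essentially the same route as the paper: perturb $b_{\BC}$ to an integer vector using a discrepancy-controlled rounding of the $\BC$-coordinates, verify the Gomory margin \eqref{eq:gomory_slack_condition} for the perturbed system, and invoke \Cref{th:canonical_Gomory}. The only cosmetic difference is that the paper rounds the nonnegative slack $y_{\BC}=b_{\BC}-A_{\BC}x$ via \Cref{lm:Rounding0inf} (so $\hat b_{\BC}=b_{\BC}-z\le b_{\BC}$ is immediate from $z\ge 0$), whereas you round the fractional part of $p=A_{\BC}x$ via \Cref{lm:Rounding01} and recover $\tilde b_{\BC}\le b_{\BC}$ from $t>0$ and the integrality of $b_{\BC}$; both yield the same conclusion. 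Your remark that the constructive half relies on algorithmic discrepancy bounds matches the paper's implicit use of the same fact.
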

\begin{proof}
    Let $x \in \RR^n$ be a solution of \eqref{eq:Sys-CF}, satisfying
    \begin{equation*}
        b - A x \geq t \cdot \BUnit.
    \end{equation*}
    Denote
    \begin{gather*}
        y_{\BC} = b_{\BC} - A_{\BC} x \geq t \cdot \BUnit,\\
        y_{\NotBC} = b_{\NotBC} - A_{\NotBC} x \geq t \cdot \BUnit.
    \end{gather*}
    From these relations, we have $x = A_{\BC}^{-1} (b_{\BC} - y_{\BC})$ and
    \begin{multline*}
        y_{\NotBC} = b_{\NotBC} - A_{\NotBC} \bigl(A_{\BC}^{-1} (b_{\BC} - y_{\BC})\bigr) = \\
        b_{\NotBC} - A_{\NotBC} A_{\BC}^{-1} b_{\BC} + A_{\NotBC} A_{\BC}^{-1} y_{\BC}.
    \end{multline*}

    By \Cref{lm:Rounding0inf}, there exists $z \in \ZZ^n_{\geq 0}$, such that
    \begin{equation}\label{eq:alpha_y_disc_bound}
        \norm{A_{\NotBC} A_{\BC}^{-1} (y_{\BC} - z)}_\infty \leq \gamma.
    \end{equation}
    Denoting $\alpha = y_{\BC} - z$, $\hat b_{\BC} = b_{\BC} - z$, $\hat b_{\NotBC} = b_{\NotBC}$, and $\hat v_{\BC} = A_{\BC}^{-1} \hat b_{\BC}$, we get
    \begin{multline*}
        y_{\NotBC} = b_{\NotBC} - A_{\NotBC} A_{\BC}^{-1} b_{\BC} + A_{\NotBC} A_{\BC}^{-1} z + A_{\NotBC} A_{\BC}^{-1} \alpha = \\
        b_{\NotBC} - A_{\NotBC} A_{\BC}^{-1} \hat b_{\BC} + A_{\NotBC} A_{\BC}^{-1} \alpha = \\
        b_{\NotBC} - A_{\NotBC} \hat v_{\BC} + A_{\NotBC} A_{\BC}^{-1} \alpha.
    \end{multline*}
    
    Recalling the definition of $t$ and that $y_{\NotBC} \geq t \cdot \BUnit$, by \eqref{eq:alpha_y_disc_bound}, we have 
    \begin{equation}\label{eq:hatb_slack_condition}
        \hat b_{\NotBC} - A_{\NotBC} \hat v_{\BC} = y_{\NotBC} - A_{\NotBC} A_{\BC}^{-1} \alpha \geq (t - \gamma) \cdot \BUnit \geq (\Delta -1) \cdot \BUnit.
    \end{equation}
    
    Let us consider the system
    \begin{equation}\label{eq:hatb_canonical_system}
        \begin{cases}
            A x \leq \hat b,\\
            x \in \RR^n.
        \end{cases}
    \end{equation}
    By \eqref{eq:hatb_slack_condition}, $\hat v_{\BC}$ is a vertex of the polyhedra, defined by \eqref{eq:hatb_canonical_system}. Moreover, by \Cref{th:canonical_Gomory} and \eqref{eq:hatb_slack_condition}, the system \eqref{eq:hatb_canonical_system} admits a feasible integer solution $\hat x \in \ZZ^n$, which can be constructed by a polynomial-time algorithm. Finally, note that $\hat x$ is an integer feasible solution of the original system $A x \leq b$. Indeed, since $z \geq \BZero$, we have
    \begin{gather*}
        A_{\BC} \hat x \leq \hat b_{\BC} = b_{\BC} - z \leq b_{\BC},\\
        A_{\NotBC} \hat x \leq \hat b_{\NotBC} = b_{\NotBC},
    \end{gather*}
    which finishes the proof.
\end{proof}

Next, we present the proofs of \Cref{th:SlackFrob}, \Cref{th:PolySlackFrob}, and \Cref{th:ExpSlackFrob} one by one and recall their formulations.

\SlackFrobMainTh*

\begin{proof}
    Let $\BC$ be a base of $A$ with $\abs{\det A_{\BC}} = \Delta$.
    Note that
    \begin{equation*}
        \Delta_i(A A_{\BC}^{-1}) \leq 1,\quad \forall i \in \intint n.
    \end{equation*}
    By \eqref{eq:DiscDetBound} and \eqref{eq:DiscDetBoundReduced}, we have
    \begin{equation*}
        \herdisc(A_{\NotBC} A_{\BC}^{-1}) = O\left( \min\left\{
            \log k,
            \sqrt{\log k \cdot \log n} \right\} \right).
    \end{equation*}
    Now, the proof follows from \Cref{lm:SlackFrob_via_Discr}.
\end{proof}

\PolynomialSlackFrobMainTh*
\begin{proof}
    First, let us prove the Theorem with respect to $t_1$. By \Cref{lm:poly_subdet_search}, there exists a polynomial-time algorithm, which can construct a base $\BC$ of $A$, such that $\Delta_1(A A_{\BC}^{-1}) \leq e$. By \eqref{eq:SixDeviations} and \eqref{eq:SixDeviationsReduced}, we have
    \begin{gather*}
        \herdisc(A_{\NotBC} A_{\BC}^{-1}) = \begin{cases}
            O\bigl(\sqrt{k}\bigr),\quad\text{for $k \leq n$,}\\
            O\left(\sqrt{k} \cdot \log \bigl(\frac{2k}{n}\bigr)\right),\quad\text{for $k \geq n$.}
        \end{cases}
    \end{gather*}
    Applying \Cref{lm:SlackFrob_via_Discr}, we finish the proof for $t_1$.

    Next, we prove the theorem with respect to $t_2$ and $t_3$.
    Let $\BC$ be an arbitrary base of $A$, which can be found by a polynomial-time algorithm. 
    Note that
    \begin{equation*}
        \Delta_i(A A_{\BC}^{-1}) \leq \Delta/\delta_{\BC},\quad \forall i \in \intint n,
    \end{equation*}
    where $\delta_{\BC} = \abs{\det A_{\BC}}$.
    
    Therefore, by \eqref{eq:DiscDetBound} and \eqref{eq:DiscDetBoundReduced}, we can assume that
    \begin{equation*}
        \herdisc(A_{\NotBC} A_{\BC}^{-1}) = O\left( \min\left\{
            \log k,
            \sqrt{\log k \cdot \log n}\right\} \cdot \frac{\Delta}{\delta_{\BC}} \right).
    \end{equation*}
    Note that we can assume that $\delta_{\BC} \geq 2$, since, in the opposite case, $A$ is unimodular and the existence of an integer feasible solution is trivial. 
    Now, the proof again follows from \Cref{lm:SlackFrob_via_Discr}.
\end{proof}

\ExpSlackFrobMainTh*

\begin{proof}
    By \Cref{lm:exp_subdet_search_dual}, we can find a base $\BC$ of $A$, such that 
    \begin{equation*}
        \Delta_{i}\bigl(A A_{\BC}^{-1}\bigr) \leq e^{i+1},\quad\forall i \in \intint{n}
    \end{equation*}
    in $2^k \cdot \poly(\phi)$-time. By \eqref{eq:DiscDetBound} and \eqref{eq:DiscDetBoundReduced}, we have 
    \begin{equation*}
        \herdisc\bigl(A_{\NotBC} A_{\BC}^{-1}\bigr) = O\left( \min\left\{\log k, \sqrt{\log k \cdot \log n}\right\} \right).
    \end{equation*}
    Now, the proof follows from \Cref{lm:SlackFrob_via_Discr}.
\end{proof}

Finally, we construct a lower bound that proves \Cref{prop:LBSlackFrob}.

\LBSlackFrob*

\begin{proof}
    Consider the polyhedron $\PC$ from \Cref{ex:gomory_is_tight}. It is easy to see that there exists a point $\hat x \in \PC$ whose slack with respect to each facet is exactly $(p-2)/2$. Indeed, choose $\hat x \in \RR^n$ such that 
    \begin{equation*}
        B \hat x + \frac{p-2}{2} \cdot \BUnit = b,
    \end{equation*}
    which exists as the unique solution to this system. Consequently, we get $c^\top \hat x - 1 = (p-2)/2$, confirming that $\hat x \in \PC$. Since $\PC \cap \ZZ^n = \emptyset$ and
    \begin{equation*}
        \binom{b}{-1} - \binom{B}{-c^\top} \hat x = \frac{p-2}{2} \cdot \BUnit,
    \end{equation*}
    we conclude that \(\SlackFrob(A) \geq (p-2)/2\) for \(A = \binom{B}{-c^\top}\).
\end{proof}

\subsection{Proofs with Respect to Systems in Standard Form}\label{sec:standard_proofs}

Here, we present the proof of \Cref{th:DiagFrob} and recall its definition. It directly uses \Cref{th:SlackFrob} and the polynomial-time reduction from \eqref{eq:Sys-SF} to \eqref{eq:Sys-CF}, provided in \Cref{lm:ILPSF_to_ILPCF} and \Cref{rm:StandardFormReduction}. The proofs of \Cref{th:PolyDiagFrob} and \Cref{th:ExpDiagFrob} can be deduced from \Cref{th:PolySlackFrob} and \Cref{th:ExpSlackFrob} in a similar way. Due to this reason, we skip them. 

\DiagFrobMainTh*

\begin{proof}
    By \Cref{lm:ILPSF_to_ILPCF} and \Cref{rm:StandardFormReduction}, the system \eqref{eq:Sys-SF} can be transformed to an equivalent system in the canonical form:
    \begin{equation}\label{eq:reduced_Sys-CF}
        \begin{cases}
            \hat A x \leq \hat b,\\
            x \in \ZZ^{\hat n},
        \end{cases}
    \end{equation}
    where $\hat n = n-k$, $\hat A \in \ZZ^{n \times \hat n}$, $\rank(\hat A) = \hat n$, and $\hat b \in \ZZ^n$. The new system satisfies the following properties:
    \begin{enumerate}
        \item $\Delta(A) = \Delta(\hat A)$,
        \item $\Delta_{\gcd}(\hat A) = 1$,
        \item each feasible base $\BC$ of \eqref{eq:Sys-SF} bijectively corresponds to a feasible base $\NotBC$ of \eqref{eq:reduced_Sys-CF},
        \item each feasible solution $y \in \RR_{\geq 0}^n$ of \eqref{eq:Sys-SF} bijectively corresponds to a feasible solution $x \in \RR^{\hat n}$ of \eqref{eq:reduced_Sys-CF} by the formula 
        \begin{equation}\label{eq:y_x_relation_DiagFrobProof}
            y = \hat b - \hat A x.
        \end{equation}
    \end{enumerate}
    Therefore, the condition
    \begin{equation}\label{eq:SlachCond_DiagFrobProof}
        \exists x \in \RR^{\hat n}:\quad \hat b - \hat A x \geq t \cdot \BUnit,
    \end{equation}
    is equivalent to the condition
    \begin{equation}\label{eq:DiagCond_DiagFrobProof}
        \exists y \in \RR_{\geq 0}^n: \quad b = A y,\quad y \geq t \cdot \BUnit.
    \end{equation}
    By \Cref{th:SlackFrob}, the condition \eqref{eq:SlachCond_DiagFrobProof} implies the existence of an integer feasible solution $\hat z \in \ZZ^{\hat n}$ of \eqref{eq:reduced_Sys-CF}. Moreover, the solution $\hat z$ can be found by a polynomial-time algorithm in the assumption that we know a base $\JC$ of $\hat A$, such that $\abs{\det \hat A_{\JC}} = \Delta$. By the described properties of $\hat A$, we can set $\JC := \NotBC$. Therefore, by \eqref{eq:y_x_relation_DiagFrobProof}, the condition \eqref{eq:DiagCond_DiagFrobProof} (which is equivalent to \eqref{eq:SlachCond_DiagFrobProof}) implies that there exists a feasible integer solution $z \in \ZZ_{\geq 0}^n$ of \eqref{eq:Sys-SF}, given by $z = \hat b - \hat A \hat z$, which can be constructed by a polynomial-time algorithm.
    
\end{proof}








\section{Conclusion}\label{sec:conclusion}

In this work, we have derived new, significantly improved upper bounds on the diagonal Frobenius number of a matrix $A$, denoted by $\DiagFrob(A)$. Additionally, we have provided weaker bounds that admit polynomial-time algorithms for searching an integer feasible solution for systems in the standard form or slightly weaker bounds that admit $2^k \cdot \poly(\inputsize)$-time algorithms.

Furthermore, considering systems in the canonical form, we introduced a more general and natural diagonal Frobenius number for slacks, denoted by $\SlackFrob(A)$. In fact, all results were obtained for this generalized concept, and the corresponding results for the original diagonal Frobenius number $\DiagFrob(A)$ are merely corollaries of these results.

Note that we did not address the problem of constructing lower bounds for these numbers. This fact, along with further improvements of the upper bounds, may serve as a direction for future work.

\bibliographystyle{plainnat}
\bibliography{grib_biblio}

\begin{thebibliography}{34}
\providecommand{\natexlab}[1]{#1}
\providecommand{\url}[1]{\texttt{#1}}
\expandafter\ifx\csname urlstyle\endcsname\relax
  \providecommand{\doi}[1]{doi: #1}\else
  \providecommand{\doi}{doi: \begingroup \urlstyle{rm}\Url}\fi

\bibitem[Aggarwal et~al.(2024)Aggarwal, Joux, Santha, and W{\k{e}}grzycki]{ILPInTotalRegime}
Divesh Aggarwal, Antoine Joux, Miklos Santha, and Karol W{\k{e}}grzycki.
\newblock Polynomial time algorithms for integer programming and unbounded subset sum in the total regime.
\newblock \emph{arXiv preprint arXiv:2407.05435}, 2024.

\bibitem[Alekseev and Zakharova(2011)]{AZ}
E.~Alekseev, V. and V.~Zakharova, D.
\newblock Independent sets in the graphs with bounded minors of the extended incidence matrix.
\newblock \emph{Journal of Applied and Industrial Mathematics}, 5\penalty0 (1):\penalty0 14--18, 2011.
\newblock \doi{10.1134/S1990478911010029}.

\bibitem[Aliev and Henk(2010)]{DiagonalFrobenius}
Iskander Aliev and Martin Henk.
\newblock Feasibility of integer knapsacks.
\newblock \emph{SIAM Journal on Optimization}, 20\penalty0 (6):\penalty0 2978--2993, 2010.

\bibitem[Alon and Spencer(2016)]{AlonSpencerBook}
Noga Alon and Joel~H Spencer.
\newblock \emph{The probabilistic method}.
\newblock John Wiley \& Sons, 2016.

\bibitem[Artmann et~al.(2017)Artmann, Weismantel, and Zenklusen]{BimodularStrong}
Stephan Artmann, Robert Weismantel, and Rico Zenklusen.
\newblock A strongly polynomial algorithm for bimodular integer linear programming.
\newblock In \emph{Proceedings of the 49th Annual ACM SIGACT Symposium on Theory of Computing}, STOC 2017, pages 1206--1219, New York, NY, USA, 2017. Association for Computing Machinery.
\newblock ISBN 9781450345286.
\newblock \doi{10.1145/3055399.3055473}.

\bibitem[Bach et~al.(2025)Bach, Eisenbrand, Rothvoss, and Weismantel]{ForallPseudopoly}
Eleonore Bach, Friedrich Eisenbrand, Thomas Rothvoss, and Robert Weismantel.
\newblock Forall-exist statements in pseudopolynomial time.
\newblock In \emph{Proceedings of the 2025 Annual ACM-SIAM Symposium on Discrete Algorithms (SODA)}, pages 2225--2233. SIAM, 2025.

\bibitem[Birmpilis et~al.(2023)Birmpilis, Labahn, and Storjohann]{FastPSQDecomp}
Stavros Birmpilis, George Labahn, and Arne Storjohann.
\newblock A fast algorithm for computing the smith normal form with multipliers for a nonsingular integer matrix.
\newblock \emph{Journal of Symbolic Computation}, 116:\penalty0 146--182, 2023.

\bibitem[Bock et~al.(2014)Bock, Faenza, Moldenhauer, and Ruiz-Vargas]{StableSetHardness}
Adrian Bock, Yuri Faenza, Carsten Moldenhauer, and Andres~Jacinto Ruiz-Vargas.
\newblock {Solving the Stable Set Problem in Terms of the Odd Cycle Packing Number}.
\newblock In Venkatesh Raman and S.~P. Suresh, editors, \emph{34th International Conference on Foundation of Software Technology and Theoretical Computer Science (FSTTCS 2014)}, volume~29 of \emph{Leibniz International Proceedings in Informatics (LIPIcs)}, pages 187--198, Dagstuhl, Germany, 2014. Schloss Dagstuhl--Leibniz-Zentrum fuer Informatik.
\newblock ISBN 978-3-939897-77-4.
\newblock \doi{10.4230/LIPIcs.FSTTCS.2014.187}.

\bibitem[Dakhno et~al.(2024)Dakhno, Gribanov, Kasianov, Kats, Kupavskii, and Kuz'min]{HyperAvoiding_NonHomo}
Grigorii Dakhno, Dmitry Gribanov, Nikita Kasianov, Anastasiia Kats, Andrey Kupavskii, and Nikita Kuz'min.
\newblock Hyperplanes avoiding problem and integer points counting in polyhedra.
\newblock \emph{arXiv preprint arXiv:2411.07030}, 2024.

\bibitem[Fiorini et~al.(2022)Fiorini, Joret, Weltge, and Yuditsky]{TwoNonZerosStrong}
Samuel Fiorini, Gwena{\"e}l Joret, Stefan Weltge, and Yelena Yuditsky.
\newblock Integer programs with bounded subdeterminants and two nonzeros per row.
\newblock In \emph{2021 IEEE 62nd Annual Symposium on Foundations of Computer Science (FOCS)}, pages 13--24. IEEE, 2022.

\bibitem[Gomory(1965)]{GomoryRelation}
R.~E. Gomory.
\newblock On the relation between integer and noninteger solutions to linear programs.
\newblock \emph{Proceedings of the National Academy of Sciences}, 53\penalty0 (2):\penalty0 260--265, 1965.
\newblock ISSN 0027-8424.
\newblock \doi{10.1073/pnas.53.2.260}.
\newblock URL \url{https://www.pnas.org/content/53/2/260}.

\bibitem[Gomory(1969)]{GomoryCombinatorialPoly}
Ralph~E Gomory.
\newblock Some polyhedra related to combinatorial problems.
\newblock \emph{Linear algebra and its applications}, 2\penalty0 (4):\penalty0 451--558, 1969.

\bibitem[Gomory(1967)]{GomoryIntegerFaces}
RE~Gomory.
\newblock Integer faces of a polyhedron.
\newblock \emph{Proc. Natl. Acad. Sci. USA}, 57\penalty0 (1):\penalty0 16--18, 1967.

\bibitem[Gribanov et~al.(2024{\natexlab{a}})Gribanov, Malyshev, and Pardalos]{Gribanov_fixed_codim}
D~Gribanov, D~Malyshev, and Panos~M Pardalos.
\newblock Delta-modular {ILP} problems of bounded co-dimension, discrepancy, and convolution.
\newblock \emph{arXiv preprint arXiv:2405.17001}, 2024{\natexlab{a}}.

\bibitem[Gribanov(2023)]{SimplexEquiv_Gribanov}
Dmitry Gribanov.
\newblock Enumeration and unimodular equivalence of empty delta-modular simplices.
\newblock In \emph{International Conference on Mathematical Optimization Theory and Operations Research}, pages 115--132. Springer, 2023.

\bibitem[Gribanov et~al.(2024{\natexlab{b}})Gribanov, Shumilov, Malyshev, and Zolotykh]{SparseILP_Gribanov}
Dmitry Gribanov, Ivan Shumilov, Dmitry Malyshev, and Nikolai Zolotykh.
\newblock Faster algorithms for sparse {ILP} and hypergraph multi-packing/multi-cover problems.
\newblock \emph{Journal of Global Optimization}, pages 1--35, 2024{\natexlab{b}}.

\bibitem[Gribanov et~al.(2024{\natexlab{c}})Gribanov, Malyshev, Pardalos, and Zolotykh]{Parametric_Counting_Grib}
Dmitry~V Gribanov, Dmitry~S Malyshev, Panos~M Pardalos, and Nikolai~Yu Zolotykh.
\newblock A new and faster representation for counting integer points in parametric polyhedra.
\newblock \emph{Computational Optimization and Applications}, pages 1--51, 2024{\natexlab{c}}.

\bibitem[Gribanov and Veselov(2016)]{Width_Grib}
V.~Gribanov, D. and I.~Veselov, S.
\newblock On integer programming with bounded determinants.
\newblock \emph{Optim. Lett.}, 10:\penalty0 1169--1177, 2016.
\newblock \doi{10.1007/s11590-015-0943-y}.
\newblock URL \url{https://doi.org/10.1007/s11590-015-0943-y}.

\bibitem[Gribanov et~al.(2016)Gribanov, Malyshev, and Veselov]{WidthConv_Grib}
V.~Gribanov, D., S.~Malyshev, D., and I.~Veselov, S.
\newblock {FPT}-algorithm for computing the width of a simplex given by a convex hull.
\newblock \emph{Moscow University Computational Mathematics and Cybernetics}, 43\penalty0 (1):\penalty0 1--11, 2016.
\newblock \doi{10.3103/S0278641919010084}.

\bibitem[Gribanov et~al.(2022)Gribanov, Shumilov, Malyshev, and Pardalos]{OnCanonicalProblems_Grib}
V.~Gribanov, D., A.~Shumilov, I., S.~Malyshev, D., and M.~Pardalos, P.
\newblock On $\delta$-modular integer linear problems in the canonical form and equivalent problems.
\newblock \emph{J Glob Optim}, 2022.
\newblock \doi{10.1007/s10898-022-01165-9}.

\bibitem[Hu(1970)]{HuBook}
C.~Hu, T.
\newblock \emph{Integer programming and network flows}.
\newblock Addison-Wesley Publishing Company, London, 1970.

\bibitem[Jiang and Reis(2022)]{TightDiscDetBound}
Haotian Jiang and Victor Reis.
\newblock A tighter relation between hereditary discrepancy and determinant lower bound.
\newblock In \emph{Symposium on Simplicity in Algorithms (SOSA)}, pages 308--313. SIAM, 2022.

\bibitem[Lovász et~al.(1986)Lovász, Spencer, and Vesztergombi]{HerDisc}
L.~Lovász, J.~Spencer, and K.~Vesztergombi.
\newblock Discrepancy of set-systems and matrices.
\newblock \emph{European Journal of Combinatorics}, 7\penalty0 (2):\penalty0 151--160, 1986.
\newblock ISSN 0195-6698.
\newblock \doi{https://doi.org/10.1016/S0195-6698(86)80041-5}.
\newblock URL \url{https://www.sciencedirect.com/science/article/pii/S0195669886800415}.

\bibitem[Matou{\v{s}}ek(2013)]{DiscDetBound}
Ji{\v{r}}{\'\i} Matou{\v{s}}ek.
\newblock The determinant bound for discrepancy is almost tight.
\newblock \emph{Proceedings of the American Mathematical Society}, 141\penalty0 (2):\penalty0 451--460, 2013.

\bibitem[Nikolov(2015)]{LargestSimplex_Nikolov}
Aleksandar Nikolov.
\newblock Randomized rounding for the largest simplex problem.
\newblock In \emph{Proceedings of the forty-seventh annual ACM symposium on Theory of computing}, pages 861--870, 2015.

\bibitem[Nikolov(2018)]{DiscrepancyLections_Nikolov}
Aleksandar Nikolov.
\newblock Discrepancy theorya and applications.
\newblock \url{https://www.cs.toronto.edu/~anikolov/Prague18/}, 2018.
\newblock Accessed: August 24, 2025.

\bibitem[Oertel et~al.(2020)Oertel, Paat, and Weismantel]{DistributionsILP}
Timm Oertel, Joseph Paat, and Robert Weismantel.
\newblock The distributions of functions related to parametric integer optimization.
\newblock \emph{SIAM Journal on Applied Algebra and Geometry}, 4\penalty0 (3):\penalty0 422--440, 2020.
\newblock \doi{10.1137/19M1275954}.
\newblock URL \url{https://doi.org/10.1137/19M1275954}.

\bibitem[Paat et~al.(2021)Paat, Schl{\"o}ter, and Weismantel]{IntegralityNumber}
Joseph Paat, Miriam Schl{\"o}ter, and Robert Weismantel.
\newblock The integrality number of an integer program.
\newblock \emph{Mathematical Programming}, pages 1--21, 2021.
\newblock \doi{10.1007/s10107-021-01651-0}.
\newblock URL \url{https://doi.org/10.1007/s10107-021-01651-0}.

\bibitem[Shevchenko(1996)]{BlueBook}
Valery~N Shevchenko.
\newblock \emph{Qualitative topics in integer linear programming}.
\newblock American Mathematical Soc., Providence, Rhode Island, 1996.

\bibitem[Spencer(1985)]{SixDeviations_Spencer}
Joel Spencer.
\newblock Six standard deviations suffice.
\newblock \emph{Transactions of the American mathematical society}, 289\penalty0 (2):\penalty0 679--706, 1985.
\newblock \doi{10.1090/S0002-9947-1985-0784009-0}.
\newblock URL \url{https://doi.org/10.1090/S0002-9947-1985-0784009-0}.

\bibitem[Storjohann(1996)]{SNFOptAlg}
Arne Storjohann.
\newblock Near optimal algorithms for computing {S}mith normal forms of integer matrices.
\newblock In \emph{Proceedings of the 1996 International Symposium on Symbolic and Algebraic Computation}, ISSAC '96, pages 267--274, New York, NY, USA, 1996. Association for Computing Machinery.
\newblock ISBN 0897917960.
\newblock \doi{10.1145/236869.237084}.

\bibitem[Storjohann(2000)]{StorjohannDoc}
Arne Storjohann.
\newblock \emph{Algorithms for matrix canonical forms}.
\newblock PhD thesis, ETH Zurich, 2000.

\bibitem[Storjohann and Labahn(1996)]{HNFOptAlg}
Arne Storjohann and George Labahn.
\newblock Asymptotically fast computation of {H}ermite normal forms of integer matrices.
\newblock In \emph{Proceedings of the 1996 International Symposium on Symbolic and Algebraic Computation}, ISSAC '96, pages 259--266, New York, NY, USA, 1996. Association for Computing Machinery.
\newblock ISBN 0897917960.
\newblock \doi{10.1145/236869.237083}.

\bibitem[Veselov and Chirkov(2009)]{BimodularVert}
S.I. Veselov and A.J. Chirkov.
\newblock Integer program with bimodular matrix.
\newblock \emph{Discrete Optimization}, 6\penalty0 (2):\penalty0 220--222, 2009.
\newblock ISSN 1572-5286.
\newblock \doi{https://doi.org/10.1016/j.disopt.2008.12.002}.
\newblock URL \url{https://www.sciencedirect.com/science/article/pii/S1572528608000881}.

\end{thebibliography}

\end{document}